\newtheorem{thrm}{Theorem}[section]
\newtheorem{lem}[thrm]{Lemma}
\newtheorem{prop}[thrm]{Proposition}
\newtheorem{cor}[thrm]{Corollary}
\theoremstyle{definition}
\newtheorem{remark}[thrm]{Remark}
\numberwithin{equation}{section}
\begin{document}
\title[running head]{New Periodic Solutions for Newtonian $n$-Body Problems with Dihedral Group Symmetry and  Topological Constraints}

\author{Zhiqiang Wang }
\author{Shiqing Zhang }
\address{Mathematical Department,  Sichuan University, Chengdu, Sichuan 610064, P.R. China}
\subjclass[2010]{70F10,70F16,37C80,70G75}
\begin{abstract}
In this paper, we prove the existence of a family of new non-collision periodic solutions  for the classical Newtonian $n$-body problems.
In our assumption, the $n=2l\geq4$ particles are invariant under the  dihedral rotation group $D_l$ in $\mathbb{R}^3$ such that, at each instant, the $n$ particles form two twisted $l$-regular polygons.
Our approach is variational minimizing method and  we show that the minimizers are collision-free by level estimates and local deformations.

\begin{flushleft}
\text{Keywords:} Periodic Solutions, Newtonian $n$-body Problems, Variational Method, Dihedral Group Symmetry, Topological Constraints.
\end{flushleft}
\end{abstract}
\maketitle
\section{Introduction and   Main Result}

Many authors (for example\cite{chenciner2000remarkable}\cite{ferrario2004existence}\cite{fusco2011platonic}\cite{terracini20072nbody}\cite{1})
 used the variational method to discover many  new periodic solutions of the classical Newtonian $n$-body problems in the last fifteen years. In particular, Chenciner and Montgomery \cite{chenciner2000remarkable} proved the existence of the remarkable figure-8 type periodic solution for planar Newtonian 3-body problems with equal masses. Ferrario and Terracini \cite{ferrario2004existence} simplified and developed Marchal's \cite{marchal2002method}
important works and introduced the rotating circle property, proved that if the motion has certain symmetry under some group action having the rotation circle property, the solution exists and has no collision. Also Fusco et al.\cite{fusco2011platonic} proved the existence and collisionless of a number of new and interesting motions with the invariance of certain platonic polyhedra group action and some topological constraints.

In this paper, we consider a system of $n=2l$ positive masses with their positions $x(t)=(x_1(t),x_2(t),\dots,x_n(t))^T$  moving in the space under Newton's law of gravitation:
\begin{equation}\label{NE}
\mathcal{M}\ddot{x}(t)=\frac{\partial U(x(t))}{\partial x} ,
\end{equation}
where   the potential function  $U(x)=\sum_{i<j}\frac{m_im_j}{|x_i-x_j| }$ and $\mathcal{M}=diag\{m_1,m_2,\dots,m_n\}$.

It is well known that looking for periodic solutions for (\ref{NE}) is equivalent to seeking the critical points of the Lagrange functional $\mathcal{A}:\Lambda\rightarrow \mathbb{R}\cup \{+\infty\}$
\begin{equation*}\begin{aligned}
\mathcal{A}(x(t))&=\int_0^T\mathcal{L}(x)dt=\int_0^T\big(K+U\big)dt\\
&=\int_0^T(\sum_{i=1}^n\frac{1}{2}m_i|\dot{x}_i|^2+\sum_{i<j}\frac{m_im_j}{|x_i-x_j| } )dt
\end{aligned}
\end{equation*}

on the set
\begin{equation*}
\Lambda=\{x(t)\in H^1(\mathbb{R}/T\mathbb{Z},\mathbb{R}^{3n})|x_i(t)\neq x_j(t), \forall i\neq j, \forall t\in\mathbb{R}\},
\end{equation*}
and our approach is based on the following basic lemma:
\begin{lem} (\cite{struwe2008variational})
Let $X$ be a reflexive Banach space, $M \subset X$ is a weakly closed subset, $f :M \rightarrow R $ is weakly lower semi-continuous; if f is coercive, that is, $f(x)\rightarrow +\infty$ as $|x|\rightarrow +\infty$, then f attains its infimum on $M$.
\end{lem}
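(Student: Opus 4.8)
The plan is to run the classical direct method of the calculus of variations, which is exactly the content of this lemma. First I would set $m=\inf_{M}f$ (with $M\neq\emptyset$, so $m<+\infty$) and choose a minimizing sequence $\{x_k\}\subset M$ with $f(x_k)\to m$. The first claim is that $\{x_k\}$ is bounded in $X$: otherwise a subsequence would satisfy $\|x_k\|\to+\infty$, and coercivity would force $f(x_k)\to+\infty$, contradicting $f(x_k)\to m$. The same kind of reasoning also rules out $m=-\infty$: a sequence with $f(x_k)\to-\infty$ would still be bounded by coercivity, hence (by the steps below) would have a weak limit in $M$ on which $f$ takes the value $-\infty$, impossible since $f$ is real-valued.

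Next I would invoke reflexivity. Since $X$ is reflexive and $\{x_k\}$ is bounded, the Eberlein--\v{S}mulian theorem (or, in the separable case, Banach--Alaoglu together with metrizability of the weak topology on bounded sets) yields a subsequence, still denoted $\{x_k\}$, and a point $x_0\in X$ with $x_k\rightharpoonup x_0$ weakly in $X$. Because $M$ is assumed weakly closed, the weak limit satisfies $x_0\in M$, so $f(x_0)$ is defined.

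Finally I would use weak lower semicontinuity of $f$ together with the definition of the infimum: on one hand $f(x_0)\le\liminf_{k\to\infty}f(x_k)=m$, and on the other hand $x_0\in M$ gives $f(x_0)\ge m$. Hence $f(x_0)=m$, i.e. $f$ attains its infimum on $M$ at the point $x_0$, which finishes the proof.

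There is essentially no serious obstacle here: each hypothesis (boundedness from coercivity, weak sequential compactness from reflexivity, weak closedness of $M$, weak lower semicontinuity of $f$) is used exactly once in a strictly linear chain. The only point deserving a word of care is the appeal to weak sequential compactness in a general reflexive Banach space, which is where Eberlein--\v{S}mulian rather than a bare diagonal extraction is needed; in the intended application to \eqref{NE}, where $X=H^1(\mathbb{R}/T\mathbb{Z},\mathbb{R}^{3n})$ is a separable Hilbert space, even this subtlety disappears.
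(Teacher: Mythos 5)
Your proof is correct: it is the standard direct method argument (minimizing sequence, boundedness via coercivity, weak sequential compactness from reflexivity via Eberlein--\v{S}mulian, weak closedness of $M$, then weak lower semicontinuity), and your side remark ruling out $m=-\infty$ is also handled properly. The paper itself gives no proof of this lemma --- it is quoted from Struwe's book as a known tool --- and your argument is exactly the one found in that reference, so there is nothing to reconcile.
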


There are two  difficulties in this approach: one is  the lack of coercivity of the action functional $\mathcal{A}$ on the whole set $\Lambda$;  and the other is  that in  critical points, there might be trajectories with collisions.
To obtain the  coercivity, one can consider the functional $\mathcal{A}$ on some symmetric subspace $\Lambda_G\subset\Lambda$ such that $\mathcal{A}|_{\Lambda_G}$ is coercive. And the following famous lemma proves that  the critical point on $\Lambda_G$ is also a critical point on the whole space $\Lambda$.

\begin{lem}\label{palais}(Palais principle of symmetric criticality \cite{palais1979principle})

Let $G$ be an orthogonal group on a Hilbert space $\Lambda$. Define the fixed
point space: $\Lambda_G = \{x \in \Lambda | g\cdot x = x, \forall g\in G\}$; if $f \in C^1(\Lambda,R)$ and satisfies $f (g\cdot x) = f (x)$
for any $g \in G$ and $x \in\Lambda$, then the critical point of $f$ restricted on $\Lambda_G$ is also a critical point of
$f$ on $\Lambda$.
\end{lem}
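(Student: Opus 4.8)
The plan is to reduce the statement to an orthogonality argument in the Hilbert space $\Lambda$, using two structural facts. First, since $G$ acts by orthogonal (in particular bounded linear) operators, the fixed-point set $\Lambda_G=\bigcap_{g\in G}\ker(g-\mathrm{Id})$ is a closed linear subspace of $\Lambda$; hence $\Lambda=\Lambda_G\oplus\Lambda_G^{\perp}$, and because $\Lambda_G$ is linear the tangent space to $\Lambda_G$ at any of its points is $\Lambda_G$ itself. Second, I would differentiate the invariance relation $f(g\cdot x)=f(x)$ in an arbitrary direction $v$: the chain rule (the action being linear) gives $\langle\nabla f(g\cdot x),\,g\cdot v\rangle=\langle\nabla f(x),\,v\rangle$, and since $g$ is orthogonal, so $g^{\ast}=g^{-1}$, this reads $\langle g^{-1}\nabla f(g\cdot x),\,v\rangle=\langle\nabla f(x),\,v\rangle$ for all $v$, whence $\nabla f(g\cdot x)=g\cdot\nabla f(x)$; that is, $\nabla f$ is $G$-equivariant.

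Now let $x_0\in\Lambda_G$ be a critical point of $f|_{\Lambda_G}$. On one hand, since $g\cdot x_0=x_0$ for every $g\in G$, equivariance yields $\nabla f(x_0)=\nabla f(g\cdot x_0)=g\cdot\nabla f(x_0)$ for all $g\in G$, so $\nabla f(x_0)\in\Lambda_G$. On the other hand, criticality of $f|_{\Lambda_G}$ at $x_0$, together with the identification of the tangent space of $\Lambda_G$ with $\Lambda_G$, means $\langle\nabla f(x_0),\,v\rangle=0$ for every $v\in\Lambda_G$, i.e.\ $\nabla f(x_0)\in\Lambda_G^{\perp}$. Combining the two, $\nabla f(x_0)\in\Lambda_G\cap\Lambda_G^{\perp}=\{0\}$, so $\nabla f(x_0)=0$ and $x_0$ is a critical point of $f$ on all of $\Lambda$.

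There is no deep obstacle here; the care is in pinning down exactly which hypotheses are used. Orthogonality of the action enters twice: it makes $\Lambda_G$ closed, so that the orthogonal decomposition and the projection exist, and it is what promotes the chain-rule identity to genuine equivariance of $\nabla f$. Linearity of the action is what lets me treat $\Lambda_G$ as its own tangent space, so that "critical point of the restriction" literally means $\nabla f(x_0)\perp\Lambda_G$. Finally, $f\in C^1$ is needed only so that $\nabla f$ is a well-defined continuous vector field via Riesz representation; no compactness or finiteness of $G$ is required, since each individual $g$ being an isometry already suffices. I would also remark that the general Palais symmetric-criticality principle on manifolds can fail, and that it is precisely these linear and orthogonal features of the present Hilbert-space setting that make the argument go through.
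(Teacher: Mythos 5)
Your argument is correct. The paper itself offers no proof of this lemma: it is quoted as a known result with a citation to Palais's 1979 article, so there is nothing internal to compare against. What you have written is the standard (and essentially Palais's own) argument for the special case actually needed here, namely a group acting by orthogonal linear operators on a Hilbert space: $\Lambda_G=\bigcap_{g}\ker(g-\mathrm{Id})$ is a closed subspace, differentiation of the invariance identity together with $g^{\ast}=g^{-1}$ gives equivariance $\nabla f(g\cdot x)=g\cdot\nabla f(x)$, and at a fixed point this forces $\nabla f(x_0)\in\Lambda_G\cap\Lambda_G^{\perp}=\{0\}$. Each step is justified, and your accounting of where orthogonality and linearity enter is accurate; it is also worth your observation that no compactness or finiteness of $G$ is needed in this linear isometric setting, even though the paper only ever applies the lemma to a finite group. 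The one thing you might add for completeness in this specific application is a sentence checking that the loop-space action defined in the paper (rotation $\rho$, time shift/reversal $\tau$, index permutation $\sigma$ preserving masses) really is orthogonal for the mass-weighted $H^1$ inner product, so that the hypothesis of the lemma is met; but as a proof of the lemma as stated, your write-up is complete.
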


Here  is a traditional way to define the group action on loop space $\Lambda$ such that $\mathcal{A}(g\cdot x)=\mathcal{A}(x)$. Let $G$ be a finite group  with three representations $\rho:G\rightarrow O(d)$, $\tau:G\rightarrow O(2)$ and $\sigma: G\rightarrow \sum_n$ such that
$$g\cdot x(t)=(\rho(g)x_{\sigma(g^{-1})(1)}(\tau(g^{-1})t),\dots,\rho(g)x_{\sigma(g^{-1})(n)}(\tau(g^{-1})t)).$$
Where we only consider homomorphisms $\sigma$ with property that $\forall g\in G: (\sigma(g)(i)=j \Rightarrow m_i=m_j)$
and for more detail, we refer the readers to \cite{ferrario2004existence}.
Also we can add some topological constraints on $\Lambda_G$ to get an open cone $\mathcal{K}\subset \Lambda_{G}$ with the property
$$\partial \mathcal{K}\subset \bigtriangleup_{G}= \{u\in \Lambda_{G}:\exists t_c\in\mathbb{R},i\neq j:u_i(t_c)=u_j(t_c) \},$$
and find critical points inside $\mathcal{K}$ (\cite{fusco2011platonic}\cite{montgomery1998braid}\cite{terracini20072nbody}).
By the above arguments, we can distinguish geometrically different solutions, and get the coercivity even if  $\mathcal{A}|_{\Lambda_G}$ is not coercive.
Indeed, if we are able to prove that a minimizer $u_*$ of $\mathcal{A}|_{\overline{\mathcal{K}}}$  exists and for any collision trajectories $u_c\in\partial\mathcal{K}\subset\bigtriangleup_{G}$: $\mathcal{A}(u_*)<\mathcal{A}(u_c)$, then we must have $u_*\in\Lambda_{G}$ and is collision free. Thus by Lemma \ref{palais}, $u_*\in\mathcal{K}$ is a critical point of $\mathcal{A}|_\Lambda$ and therefore a solution of the $n$-body problem.

Suppose the motions are in the space $O-\xi_1\xi_2\xi_3$ and let $e_j$ be the unit vectors of the coordinate axes $\xi_j$ for $j=1,2,3$.
Denote the rotation of angle $\frac{2\pi}{l}$ around $\xi_3$-axis
$R=\begin{pmatrix}
     \cos\frac{2\pi}{l} &-\sin\frac{2\pi}{l} &0\\
     \sin\frac{2\pi}{l} &\cos\frac{2\pi}{l}&0\\
     0 & 0 & 1
     \end{pmatrix}$,
the rotation of angle $\pi$ around $\xi_1$-axis
$S=\begin{pmatrix}
     1 &0 &0\\
     0 &-1&0\\
     0 & 0 & -1
\end{pmatrix}$.
Then $D_l=\langle R,S \rangle$ is the dihedral group (\cite{finite-reflection-groups}) of order $n=2l$ and it is a group of rotations with their rotation axes $\Gamma=\xi_3\bigcup\{L_k\}_{j=0}^{l-1}$, where $L_k$ is the line $\left\{\begin{aligned}\xi_2&=\xi_1\tan\frac{k\pi}{l},\\ \xi_3&=0.
\end{aligned}\right.$

For simplicity, we denote $D_l=\{R_j\}_{j=0}^{n-1}$, where $R_k=R^k$, $R_{l+k}=R^kS$ for $k=0,1,\dots,l-1$, i.e.
$$R_k=\begin{pmatrix}
     \cos\frac{2k\pi}{l} &-\sin\frac{2k\pi}{l} &0\\
     \sin\frac{2k\pi}{l} &\cos\frac{2k\pi}{l}&0\\
     0 & 0 & 1
     \end{pmatrix},
R_{l+k}=\begin{pmatrix}
     \cos\frac{2k\pi}{l} &\sin\frac{2k\pi}{l} &0\\
     \sin\frac{2k\pi}{l} &-\cos\frac{2k\pi}{l}&0\\
     0 & 0 & -1
     \end{pmatrix}.$$
Now we consider $n=2l\geq4$ point particles $u(t)=\big(u_0(t),u_1(t),\dots,u_{n-1}(t)\big)$ with equal masses in space with the following symmetry:
\begin{equation}\label{ca}
u_j(t)=R_ju_0(t),\quad j=0,1,\dots,n-1.
\end{equation}
Under this symmetry the trajectories have the property that, at each instant, the $n$ point particles form a two nested regular $l$-polygons with the same size. By letting the  the masses $m_i=1$ and  under the assumption of the symmetry (\ref{ca}), the action functional can be written as
\begin{equation*}
\mathcal{A}\big(u(t)\big)=\mathcal{A}\big(u_0(t)\big)
=\frac{n}{2}\int_0^T\Big(|\dot{u}_0(t)|^2+\sum_{j=1}^{n-1}\frac{1}{|(R_j-R_0)u_0(t)| } \Big)dt.
\end{equation*}
In \cite{0951-7715-21-6-009}, Ferrario and Portaluri studied central configurations with this dihedral symmetry.
In  Section 2 of  \cite{fusco2011platonic}, Fusco et al. got a new periodic solution in the case of $n=4$ by applying some topological constraint, here our result is a generalization of theirs. Obviously, the trajectories $u(t)$ are uniquely determined by the trajectory $u_0(t)$, and in \cite{fusco2011platonic},  $u_0(t)$ is called  the generating particle of the motion.
%
%
Also we  need some other symmetric  constraints on the loop of $u_0$:
\begin{equation}\label{cb}
\left\{\begin{matrix}
\begin{aligned}
u_0(t)&=\hat{R}_0u_0(-t),\\
u_0(t)&=\hat{R}_su_0(\frac{T}{h}-t).
\end{aligned}
\end{matrix}\right.
\end{equation}
Where $s,h\leq l$ are some positive integers and
 $\hat{R}_k=\begin{pmatrix}
     \cos\frac{2k\pi}{l} &\sin\frac{2k\pi}{l} &0\\
     \sin\frac{2k\pi}{l} &-\cos\frac{2k\pi}{l}&0\\
     0 & 0 & 1
     \end{pmatrix}$
 are reflections along the plane $P_k:\xi_2=\xi_1\tan\frac{k\pi}{l}$, $k=0,1,\dots,l-1$.\\

We notice that (\ref{cb}) implies $u_0(t)=\hat{R}_s\hat{R}_0u_0(t-\frac{T}{h})=R_su_0(t-\frac{T}{h})$. Since $u_0(t)$ is a $T$-periodic solution i.e. $u_0(t)=u_0(t+T)$, we must have $R_s^h=R_0=id$ which implies
\begin{equation}\label{hmod}
sh\equiv l\mod l.
\end{equation}
Moreover if $T$ is the minimal positive period, then $h$ is the minimal positive integer satisfying (\ref{hmod}) and of course $h\mid l$. Also applying (\ref{ca}) we see that $u_0(t)=u_s(t-\frac{T}{h})$ and
$$\begin{aligned}
u_l(t)&=Su_0(t)=SR_su_0(t-\frac{T}{h})=R_{l-s}Su_0(t-\frac{T}{h})\\
      &=R_{2l-s}u_0(t-\frac{T}{h})=u_{2l-s}(t-\frac{T}{h}).
\end{aligned}$$
If we denote $\bar{j}$ to be the least nonnegative residue of $j$ modulo $l$, i.e. $\bar{j}\equiv j \mod l$ and $0\leq \bar{j}<l$, we have
\begin{equation*}
\left\{\begin{matrix}
\begin{aligned}
u_0(t)&=u_{\bar{s}}(t-\frac{T}{h})=u_{\overline{2s}}(t-\frac{2T}{h})=\dots=u_{\overline{(h-1)s}}(t-\frac{(h-1)T}{h}),\\
u_1(t)&=u_{\overline{s+1}}(t-\frac{T}{h})=u_{\overline{2s+1}}(t-\frac{2T}{h})=\dots=u_{\overline{(h-1)s+1}}(t-\frac{(h-1)T}{h}),\\
\ldots\\
u_{l/h-1}(t)&=u_{\overline{s+l/h-1}}(t-\frac{T}{h})=u_{\overline{2s+l/h-1}}(t-\frac{2T}{h})=\dots=u_{\overline{(h-1)s+l/h-1}}(t-\frac{(h-1)T}{h}),\\
u_l(t)&=u_{2l-\bar{s}}(t-\frac{T}{h})=u_{2l-\overline{2s}}(t-\frac{2T}{h})=\dots=u_{2l-\overline{(h-1)s}}(t-\frac{(h-1)T}{h}),\\
u_{2l-1}(t)&=u_{2l-(\overline{s+1})}(t-\frac{T}{h})=u_{2l-(\overline{2s+1})}(t-\frac{2T}{h})=\dots=u_{2l-(\overline{(h-1)s+1})}(t-\frac{(h-1)T}{h}),\\
\ldots\\
u_{2l-l/h+1}(t)&=u_{2l-(\overline{s+l/h-1})}(t-\frac{T}{h})=u_{2l-(\overline{2s+l/h-1})}(t-\frac{2T}{h})=\dots\\
&=u_{2l-(\overline{(h-1)s+l/h-1})}(t-\frac{(h-1)T}{h}).
\end{aligned}
\end{matrix}\right.
\end{equation*}
That is to say the $n=2l$ particles' motion are  composed of $2l/h$ choreography trajectories.
For example when $l=12,s=9$, then we have $h=4$ and
\begin{equation*}
\left\{\begin{matrix}
\begin{aligned}
u_0(t)&=u_9(t-\frac{T}{4})=u_6(t-\frac{T}{2})=u_3(t-\frac{3T}{4}),\\
u_1(t)&=u_{10}(t-\frac{T}{4})=u_7(t-\frac{T}{2})=u_4(t-\frac{3T}{4}),\\
u_2(t)&=u_{11}(t-\frac{T}{4})=u_8(t-\frac{T}{2})=u_5(t-\frac{3T}{4}),\\
u_{12}(t)&=u_{15}(t-\frac{T}{4})=u_{18}(t-\frac{T}{2})=u_{21}(t-\frac{3T}{4}),\\
u_{23}(t)&=u_{14}(t-\frac{T}{4})=u_{17}(t-\frac{T}{2})=u_{20}(t-\frac{3T}{4}),\\
u_{22}(t)&=u_{13}(t-\frac{T}{4})=u_{16}(t-\frac{T}{2})=u_{19}(t-\frac{3T}{4}).
\end{aligned}
\end{matrix}\right.
\end{equation*}

\begin{remark}\label{s}
We notice that the angle between the plane $P_s$ and $P_0$ is the same with the angle between $P_{l-s}$ and $P_0$, and $hs\equiv l\mod l$ if and only if $(l-s)h\equiv l\mod l$ which means $l-s$ and $s$ implies the same $h$. So in (\ref{cb}), the two symmetric constraints are  the same in geometry, and we can only consider the integer $1\leq s \leq \frac{l}{2}=n/4$.
%
\end{remark}
\begin{remark}\label{foundamental}
By the condition (\ref{cb}), we see that $\mathbb{I}=[0,\frac{T}{2h}]$ is a fundamental domain of dihedral type for the trajectories (see \cite{ferrario2004existence}) for more details), which implies that the motion of the particles on the whole period $[0,T]$ is determined by their motion on $\mathbb{I}=[0,\frac{T}{2h}]$ through the symmetric conditions. And from (\ref{ca}) we see that $u_0$ is the generating particle, so in Section \ref{collision} we can only consider the motion of the generating particle $u_0$ in the interval  $\mathbb{I}=[0,\frac{T}{2h}]$.
\end{remark}

Let $G_s=\langle R, S, \hat{R}_s, \hat{R}_0 \rangle$ with the following representations:
$$\begin{aligned}\rho(R)&=R,\tau(R)t=t,\sigma(R)=(0,1,\dots,l-1)(l,l+1,\dots,n-1),\\
\rho(S)&=S,\tau(S)t=t,\sigma(S)=(0,l)\prod_{k=1}^{l-1}(l-k,l+k),\\
\rho(\hat{R}_s)&=\hat{R}_s,\tau(\hat{R}_s)t=\frac{T}{h}-t,\sigma(\hat{R}_s)=id,\\
\rho(\hat{R}_0)&=\hat{R}_0,\tau(\hat{R}_0)t=-t,\sigma(\hat{R}_0)=id,\end{aligned}$$
and set
$$\Lambda_s=\{u(t)\in\Lambda: u(t)\quad satisfy \quad (\ref{ca})(\ref{cb})\}.$$
It is easy to check that $\Lambda_s=\Lambda_{G_s}$, i.e. looking for trajectories with properties (\ref{ca})(\ref{cb}) is equivalent to seeking for critical point of $\mathcal{A}$ on $\Lambda_{G_s}$.

In \cite{ferrario2004existence}, Ferrario and Terracini proved that $\mathcal{A}|_{\Lambda_G}$ is coercive if and only if $\mathcal{X}_G \triangleq\{x\in\mathcal{X}|g\cdot x=x,\forall g\in G\}=0$ where $\mathcal{X}$ is the configuration space of the particles. Obviously, in our assumption, $\mathcal{A}|_{\Lambda_s}$ is not coercive since $(e_3,\dots,e_3,-e_3,\dots,-e_3)\in\mathcal{X}_G$ where $e_3=(0,0,1)$.
So motivated by \cite{fusco2011platonic}, we add some topological condition on $\Lambda_{G_s}$, to get the open cone $\mathcal{K}_s$ described in the previous. From (\ref{cb}) we see that  $u_0(0)\in P_0$ and $u_0(\frac{T}{2h})\in P_s$, so we let
\begin{equation}\label{cone}
\mathcal{K}_s=\{u(t)\in\Lambda_s:u_{0}(0)\in P_0^-,  u_{0}(\frac{T}{2h})\in P_s^+\}
\end{equation}
where we have set $P_0^-=\{p\in P_0: p\cdot e_3<0\}$ and $P_s^+=\{p\in P_s: p\cdot e_3>0\}$.
Now we state our main  theorem:
\begin{thrm} \label{main}
For $n=2l\geq4$ and every integer $s\leq\max\{1,(\frac{n-1}{n})^{3/2}\frac{\pi}{2^{3/2}}\frac{n}{\log n+\gamma}-1\}$ where $\gamma\approx0.57721566490153286$ is the Euler-Mascheroni constant,
there exists a $T$-periodic solution $u_*\in\mathcal{K}_s$   of the classical $n$-body problem.
\end{thrm}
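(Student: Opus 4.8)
The plan is to obtain $u_*$ as a minimizer of $\mathcal{A}$ over the closure $\overline{\mathcal{K}_s}$ of the cone (\ref{cone}) inside the symmetric loop space $\Lambda_{G_s}$, to prove by local deformations together with level estimates that this minimizer has no collisions, and then to apply Lemma \ref{palais}. For the existence step, since the potential is nonnegative we have $\mathcal{A}(u)\ge\frac{n}{2}\|\dot u_0\|_{L^2}^2$, so Wirtinger's inequality bounds $\|u_0-\bar u_0\|_{H^1}$ in terms of $\mathcal{A}(u)$, where $\bar u_0=\frac1T\int_0^Tu_0\,dt$. Averaging the two identities in (\ref{cb}) over a period yields $\bar u_0=\hat R_0\bar u_0=\hat R_s\bar u_0$, hence $\bar u_0\in P_0\cap P_s$; since $1\le s\le l/2$ the planes $P_0$ and $P_s$ are distinct and share the $\xi_3$-axis, so $\bar u_0=c\,e_3$. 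Using $H^1(\mathbb{R}/T\mathbb{Z})\hookrightarrow L^\infty$ one bounds $|u_0(0)-c\,e_3|$ and $|u_0(\tfrac{T}{2h})-c\,e_3|$ in terms of $\mathcal{A}(u)$, and then the cone conditions $u_0(0)\cdot e_3\le 0\le u_0(\tfrac{T}{2h})\cdot e_3$ bound $|c|$; hence the sublevel sets of $\mathcal{A}|_{\overline{\mathcal{K}_s}}$ are $H^1$-bounded. Together with the (standard) weak lower semicontinuity of $\mathcal{A}$ and weak closedness of $\overline{\mathcal{K}_s}$ — the linear symmetry constraints and the closed half-space conditions survive weak $H^1$-limits, which converge uniformly — the direct method yields a minimizer $u_*\in\overline{\mathcal{K}_s}$ with finite action.

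Next I would remove collisions. Under (\ref{ca})--(\ref{cb}) a collision of $u_*$ at a time $t_c$ means $u_{*0}(t_c)$ lies on a rotation axis $L_m$ (the two $l$-gons then merge into one, i.e.\ $l$ simultaneous binary collisions), on the $\xi_3$-axis (each $l$-gon collapses to a point), or at the origin (total collision); by Remark \ref{foundamental} we may take $t_c\in\mathbb{I}=[0,\tfrac{T}{2h}]$. If $t_c$ lies in the open interval $(0,\tfrac{T}{2h})$, the cone conditions are inactive near $t_c$, so the equivariant local-deformation (rotation-averaging) argument of Marchal \cite{marchal2002method} and Ferrario--Terracini \cite{ferrario2004existence} applies and strictly decreases $\mathcal{A}$, contradicting minimality; the $\xi_3$-axis and total-collision cases are excluded similarly (or by the level estimate below). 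The only remaining possibility is a collision at $t_c\in\{0,\tfrac{T}{2h}\}$, where the reflection fixing that instant forces $u_{*0}(0)\in L_0$ or $u_{*0}(\tfrac{T}{2h})\in L_s$, i.e.\ $u_*\in\partial\mathcal{K}_s$; there the deformation is obstructed by the constraint and must be replaced by a level estimate.

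For the level estimate I would, on one hand, build a collision-free comparison loop $\tilde u\in\mathcal{K}_s$ (in the spirit of \cite{fusco2011platonic}) whose $n$ bodies form an interlaced configuration — a regular $2l$-gon — of radius $\rho$ with generating particle $\tilde u_0$ running on a circle, its azimuth sweeping on $\mathbb{I}$ the angle between $P_0^-$ and $P_s^+$ together with a small polar tilt from below to above the equatorial plane; extending by (\ref{ca})--(\ref{cb}) and optimizing over the scale $\rho$ (via $\mathcal{A}(\lambda^{2/3}u(\lambda t))=\lambda^{1/3}\mathcal{A}(u)$) produces an upper bound of the form $\mathcal{A}(u_*)\le\mathcal{A}(\tilde u)\le c\,n\big(\sum_{k=1}^{n-1}\tfrac{1}{2\sin(k\pi/n)}\big)^{2/3}\psi(s)$, with $\psi$ increasing in $s$ and arising from the azimuthal sweep. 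On the other hand, any loop in $\partial\mathcal{K}_s$ realizing the boundary collision must, near that instant, contain at least $l$ Keplerian binary ejection--collision arcs compatible with the ambient configuration, which after the same scaling optimization gives a lower bound $\mathcal{A}(u_c)\ge L(n)$ independent of $s$. The strict inequality $\mathcal{A}(\tilde u)<L(n)$ — which forces $u_*\notin\partial\mathcal{K}_s$ — then reduces, after estimating $\sum_{k=1}^{n-1}\tfrac{1}{2\sin(k\pi/n)}$ from above through $\sin x\ge\tfrac{2}{\pi}x$ and $\sum_{k=1}^{n-1}\tfrac1k\le\log n+\gamma$, precisely to the hypothesis $s\le\max\{1,(\tfrac{n-1}{n})^{3/2}\tfrac{\pi}{2^{3/2}}\tfrac{n}{\log n+\gamma}-1\}$. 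Hence $u_*$ is collision-free with $u_*\in\mathcal{K}_s$; since $\mathcal{K}_s$ is open in $\Lambda_{G_s}$, $u_*$ is a local minimizer and thus a critical point of $\mathcal{A}|_{\Lambda_{G_s}}$, and by Lemma \ref{palais} a critical point of $\mathcal{A}$ on $\Lambda$, i.e.\ a $T$-periodic solution of (\ref{NE}) lying in $\mathcal{K}_s$.

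The main obstacle is this last step: matching an upper bound for the comparison loop with a lower bound for the boundary-collision loops that is sharp enough for the strict inequality to cut out exactly the stated range of $s$. This is delicate because the two estimates share their leading behaviour in $n$ — both are governed by the rotating regular-polygon scaling and by the sum $\sum_{k}1/\sin(k\pi/n)$ — so the admissible range of $s$ is determined by the subleading constants; recovering the clean threshold with the Euler--Mascheroni constant requires an essentially optimal choice of the comparison loop and of the scale, and careful control of the trigonometric-sum estimates.
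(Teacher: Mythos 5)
Your overall architecture (minimize over $\overline{\mathcal{K}_s}$, rule out collisions, invoke Lemma \ref{palais}) matches the paper, and your coercivity argument, though phrased via Wirtinger plus averaging of (\ref{cb}) rather than the paper's direct estimate on $|u_0(0)|$, is sound. The genuine gap is in how you propose to exclude \emph{partial} collisions at the boundary times $t\in\{0,\tfrac{T}{2h}\}$. You want to do this by a level estimate: test-loop upper bound versus a lower bound $L(n)$ for loops in $\partial\mathcal{K}_s$ carrying $l$ binary ejection--collision arcs. This cannot work quantitatively. A Gordon-type lower bound (Lemma \ref{gordan}) applied to a partial collision only controls the $l$ colliding pairs out of $\tfrac{n(n-1)}{2}$ pairs in the pairwise decomposition of $\mathcal{A}$; the non-colliding pairs contribute no quantifiable amount, so the resulting $L(n)$ is smaller than the total-collision bound $\mathcal{B}$ of Proposition \ref{lbe} by roughly a factor $n$. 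Since the test loop's action is only marginally below $\mathcal{B}$ (the ratio is $(\tfrac{n}{2(n-1)})^{1/3}$ for $s=1$, and tends to $1$ as $s$ approaches the threshold), the strict inequality $\mathcal{A}(\tilde u)<L(n)$ fails. The paper instead reserves the level estimate for \emph{total} collisions only, and handles boundary partial collisions by local deformations adapted to the constraint: for the collision on the $\xi_3$-axis a direct Keplerian arc in the horizontal plane (Lemma \ref{keplerarc}, Section \ref{partial1}); for the $l$ binary collisions on the lines $L_k$ the topological condition $u_0(0)\cdot e_3<0$ forces the \emph{indirect} arc, which exists and lowers the action unless $n^+=n^-=e_3$, and that residual $(\rightrightarrows)$ case is killed by the reflection-invariance Lemma \ref{123}, which would confine $u_0$ to $P_0$ on all of $(0,\tfrac{T}{2h})$, contradicting $u_0(\tfrac{T}{2h})\in P_s$. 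None of this local machinery appears in your plan, and without it the boundary binary collisions are not excluded.

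A secondary inaccuracy: you attribute the threshold $s\leq\max\{1,(\tfrac{n-1}{n})^{3/2}\tfrac{\pi}{2^{3/2}}\tfrac{n}{\log n+\gamma}-1\}$ to the partial-collision comparison, whereas it actually arises from comparing the test loop of Proposition \ref{upperbound} against the \emph{total}-collision bound $\mathcal{B}$: the generating particle must sweep an azimuth of $(s+1)\pi/l$ in time $\tfrac{T}{2h}$, which makes the kinetic term grow like $(s+1)^2$, and the potential along this path is not that of a regular $2l$-gon at all times (the configuration is a twisted pair of $l$-gons), so bounding it by $\tfrac{n^2}{2a\pi}(\log n+\gamma)$ requires the monotonicity analysis of Lemmas \ref{lem1}--\ref{lem3} via the integral representation (\ref{potential3}), not merely the estimate of $\sum_k\csc(k\pi/n)$ that you invoke. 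Your identification of where $\log n+\gamma$ comes from is correct, but the comparison it enters is the total-collision one.
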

\begin{cor}
For $n=2l\geq4$ and $s=1$, then $h=l$ and there exists a $T$-periodic solution $u_*\in\mathcal{K}_1$  of the classical Newtonian $n$-body problem.
\end{cor}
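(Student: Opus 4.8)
The Corollary is the special case $s=1$ of Theorem~\ref{main}: the right--hand side of the displayed bound is always at least $1$, and $h=l/\gcd(1,l)=l$ by (\ref{hmod}). So I describe how I would prove Theorem~\ref{main}. The plan has three parts: produce a minimizer of $\mathcal A$ on the closed cone by the direct method; show it is collision--free by comparing its level with that of any collision loop; and invoke the Palais principle. For existence, I would minimize $\mathcal A(u_0)=\frac n2\int_0^T\big(|\dot u_0|^2+\sum_{j=1}^{n-1}|(R_j-R_0)u_0|^{-1}\big)\,dt$ over the weak $H^1$--closure $\overline{\mathcal K_s}$ of the cone (\ref{cone}). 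Lower semicontinuity is standard ($K$ is convex, $U\ge 0$, apply Fatou), and $\overline{\mathcal K_s}$ is weakly closed because (\ref{ca}) and (\ref{cb}) are linear pointwise--in--time constraints and the two cone inequalities $u_0(0)\cdot e_3\le 0$, $u_0(\tfrac T{2h})\cdot e_3\ge 0$ involve only point evaluations, which are weakly continuous on $H^1(\mathbb R/T\mathbb Z)$. For coercivity I argue by contradiction: along a sequence with bounded action, $\int|\dot u_0^{(k)}|^2$ is bounded, so after normalization the sequence converges weakly to a \emph{constant} loop, which by (\ref{ca})--(\ref{cb}) must lie in $\mathcal X_{G_s}$, hence equal $(ce_3,\dots,ce_3,-ce_3,\dots,-ce_3)$; the cone inequalities survive in the limit and force its third coordinate to be $\le 0$ at $t=0$ and $\ge 0$ at $t=\tfrac T{2h}$, so $c=0$, and Wirtinger's inequality then upgrades the weak convergence to strong, contradicting the normalization. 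Thus $\mathcal A$ attains its infimum on $\overline{\mathcal K_s}$ at some $u_*$.

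The crux is showing $u_*$ is collision--free, i.e. $\mathcal A(u_*)<\mathcal A(u_c)$ for every $u_c\in\partial\mathcal K_s$. First I record $\partial\mathcal K_s\subseteq\bigtriangleup_{G_s}$: besides genuine interior collisions, $\partial\mathcal K_s$ contains loops with $u_0(0)\in L_0$ or $u_0(\tfrac T{2h})\in L_s$, which are collisions because $L_k$ is the rotation axis of $R_{l+k}$, so $u_0(0)\in L_0\Rightarrow u_l(0)=u_0(0)$ and $u_0(\tfrac T{2h})\in L_s\Rightarrow u_{l+s}(\tfrac T{2h})=u_0(\tfrac T{2h})$. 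For the lower bound on a collision loop: by Remark~\ref{foundamental} it suffices to work on the fundamental domain $\mathbb I=[0,\tfrac T{2h}]$; near a collision time the colliding cluster is asymptotic to a parabolic homothetic ejection, whose action over a window of length $\tau$ is bounded below by a multiple of $\mathcal U_0^{2/3}\tau^{1/3}$, where $\mathcal U_0$ is the leading coefficient of the cluster's internal potential; summing over the $\ge h$ time--symmetric copies produced by $G_s$ and over the bodies forced to collide simultaneously by the $D_l$--symmetry gives $\mathcal A(u_c)\ge\mathcal E^-(n,s)$. For the competing upper bound I take an explicit symmetric collision--free loop $\widetilde u\in\mathcal K_s$: the generating particle rotating uniformly about $\xi_3$ with winding number $hs/l$, the two $l$--gons held at the optimal mutual twist, corrected by a vertical oscillation realizing $\widetilde u_0(0)\cdot e_3<0<\widetilde u_0(\tfrac T{2h})\cdot e_3$; since the kinetic part of $\mathcal A$ is homogeneous of degree $+2$ and the potential part of degree $-1$ under overall scaling of $\widetilde u$, minimizing in the scale gives a closed expression in which the cosecant sum $\sum_{j=1}^{l-1}\csc(j\pi/l)$ enters to the power $2/3$; estimating this sum from above via $\sin\theta\ge\tfrac2\pi\theta$ against a harmonic sum $\lesssim\log n+\gamma$ yields $\mathcal A(\widetilde u)\le\mathcal E^+(n,s)$ in closed form. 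Carrying all constants through, $\mathcal E^+(n,s)<\mathcal E^-(n,s)$ is precisely $s\le(\tfrac{n-1}n)^{3/2}\tfrac\pi{2^{3/2}}\tfrac n{\log n+\gamma}-1$, and a cruder separate estimate still gives $s=1$ when that quantity falls below $1$, which accounts for the $\max$. The handful of collision types not settled this way — boundary collisions with $u_0$ on a reflection axis, and sub--total cluster collisions — I would exclude by a local deformation in the spirit of Marchal: averaging the collision arc over a small circle of perturbations compatible with the local isotropy strictly lowers $\mathcal A$, so such loops cannot minimize.

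With $\mathcal A(u_*)<\mathcal A(u_c)$ for all $u_c\in\partial\mathcal K_s$ established, $u_*$ lies in the open cone $\mathcal K_s\subseteq\Lambda_{G_s}$ and is collision--free, so by the Palais principle (Lemma~\ref{palais}) it is a critical point of $\mathcal A$ on $\Lambda$, hence a $T$--periodic solution of (\ref{NE}). The main obstacle is the level estimate of the previous paragraph: one must simultaneously design a competitor admissible in $\mathcal K_s$ whose action is small enough — in particular realizing the winding $hs/l$ and the cone condition cheaply — and prove collision lower bounds with constants sharp enough that the comparison $\mathcal E^+<\mathcal E^-$ reproduces the exact arithmetic threshold; this forces one to track the $1/3$--$2/3$ homogeneity exponents and the harmonic--sum asymptotics $\sum_{k\le n}1/k\approx\log n+\gamma$ on both sides without slack, and to ensure the competitor and every deformation respect $G_s$ and stay inside the open cone.
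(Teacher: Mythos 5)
Your reduction of the Corollary to Theorem \ref{main} is exactly what the paper intends: for $s=1$ the hypothesis $s\le\max\{1,\dots\}$ is automatically satisfied, and $h=l$ is forced by (\ref{hmod}) since $h$ is the minimal positive integer with $h\cdot 1\equiv 0 \bmod l$; the paper offers no further proof of the Corollary itself. Your existence/coercivity step and the overall level-comparison strategy also match the paper in substance (the paper's coercivity proof is a direct estimate using the angle $\frac{s\pi}{l}$ between $P_0^-$ and $P_s^+$ rather than your compactness-by-contradiction argument, and its collision lower bound is Gordon's theorem, Lemma \ref{gordan}, applied pairwise to the decomposition $\mathcal A=\frac1M\sum_{i<j} m_im_j\int[\frac12|\dot u_i-\dot u_j|^2+M|u_i-u_j|^{-1}]dt$; these are interchangeable routes).

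The genuine gap is in your final step, where you propose to dispose of the remaining partial collisions by ``averaging the collision arc over a small circle of perturbations'' in the spirit of Marchal. The delicate partial collisions here occur precisely at the endpoints $t=0$ and $t=\frac{T}{2h}$ of the fundamental domain, where the topological constraint confines $u_0$ to a \emph{half}-plane $P_0^-$ (resp.\ $P_s^+$); the paper states explicitly that Marchal's averaging and its rotating-circle extension are unavailable there, because the admissible perturbations do not sweep out a full circle. The paper's replacement is a case analysis: for a collision on the $\xi_3$-axis one substitutes the direct Keplerian arc (Lemma \ref{keplerarc}(1)) after blow-up; for the $l$ simultaneous binary collisions in the $\xi_1\xi_2$-plane the cone condition forces the indirect arc, which exists and strictly lowers the action unless $n^+=n^-=e_3$; and that residual collision of type $(\rightrightarrows)$ is excluded not by any local deformation but by the reflection-invariance Lemma \ref{123}, which forces $u_0(t)\in P_0$ for all $t\in(0,\frac{T}{2h})$ and contradicts $u_0(\frac{T}{2h})\in P_s$. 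Without a substitute for this last argument your plan does not close. A smaller quantitative point: bounding $\sum_{j=1}^{l-1}\csc(j\pi/l)$ via $\sin\theta\ge\frac{2}{\pi}\theta$ loses a factor of $\frac{\pi}{2}$ against the paper's inequality $\csc\frac{j\pi}{n}<\frac{n}{\pi}(\frac1j+\frac1{n-j})$ and would not reproduce the constant $\frac{\pi}{2^{3/2}}$ in the stated threshold; for the Corollary ($s=1$) this is harmless, since the paper uses the separate two-quarter-circle competitor of Proposition \ref{ubes1}, whose comparison with $\mathcal B$ is the clean ratio $(\frac{n}{2(n-1)})^{1/3}<1$ and needs no logarithmic estimate at all.
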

The case for $n=4$ was discussed in Section 2 of \cite{fusco2011platonic}, here we generalize their result. By  Remark \ref{s}, $s$ can be chosen larger than $1$ when $n\geq8$.
If we let
$f(n)=(\frac{n-1}{n})^{3/2}\frac{\pi}{2^{3/2}}\frac{n}{\log n+\gamma}-1$, we see that $f(n)$ is monotone increasing and we have $f(4)\approx0.4697$, $f(6)\approx1.1400$, $f(8)\approx1.7376$, $f(10)\approx2.2931$, $f(14)\approx3.3262$, $f(26)\approx6.0995$, etc. That is to say, for $n\geq10$, we can choose $s=2$ such that there exists a $T$-periodic solution $u_*\in\mathcal{K}_2$  of the classical $n$-body problem.
\begin{remark}\label{8}
Actually, the value of integer $s$ depends on the estimate of excluding total collisions. By doing more explicit computation we can prove that, for $n=8$, there exists a $T$-periodic solution $u_*\in\mathcal{K}_2$  of the classical $8$-body problem.
\end{remark}

\section{Coercivity}

\begin{prop}
$\mathcal{A}|_{\mathcal{K}_s}$ is coercive and $\partial \mathcal{K}_s\subset \bigtriangleup_{G_s}$.
\end{prop}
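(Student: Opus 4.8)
The plan is to prove the two assertions separately; neither needs more than standard tools once the relevant structural facts are in place.

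First, for the inclusion $\partial\mathcal{K}_s\subset\bigtriangleup_{G_s}$ I would argue by contradiction. Take $u\in\partial\mathcal{K}_s=\overline{\mathcal{K}_s}\setminus\mathcal{K}_s$ and suppose $u$ has no collision. Since $H^1(\mathbb{R}/T\mathbb{Z},\mathbb{R}^{3n})\hookrightarrow C^0$, point evaluations are $H^1$-continuous, so the symmetry relations (\ref{ca})(\ref{cb}) and the closed cone inequalities $u_0(0)\cdot e_3\le 0$, $u_0(\tfrac{T}{2h})\cdot e_3\ge 0$ all persist in the $H^1$-limit; hence $u$ is a collision-free loop satisfying (\ref{ca})(\ref{cb}), and, since $u\notin\mathcal{K}_s$, at least one of those two cone inequalities must fail to be strict, i.e. be an equality. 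If $u_0(0)\cdot e_3=0$, then (\ref{cb}) already forces $u_0(0)\in P_0$, so $u_0(0)\in P_0\cap\{\xi_3=0\}=L_0$; but $L_0$ is the axis of the rotation $S=R_l\in D_l$, whence $u_l(0)=R_lu_0(0)=u_0(0)$, a collision of particles $0$ and $l$ — a contradiction. If instead $u_0(\tfrac{T}{2h})\cdot e_3=0$, then (\ref{cb}) gives $u_0(\tfrac{T}{2h})\in P_s$, so $u_0(\tfrac{T}{2h})\in P_s\cap\{\xi_3=0\}=L_s$, which is the axis of the order-two rotation $R_{l+s}=R^sS\in D_l$; hence $u_{l+s}(\tfrac{T}{2h})=R_{l+s}u_0(\tfrac{T}{2h})=u_0(\tfrac{T}{2h})$, and since $0<l+s<2l$ this is a genuine collision of particles $0$ and $l+s$ — again a contradiction. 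Thus every element of $\partial\mathcal{K}_s$ collides, i.e. lies in $\bigtriangleup_{G_s}$.

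Second, for coercivity of $\mathcal{A}|_{\mathcal{K}_s}$, the mechanism I would exploit is that the reflection symmetries (\ref{cb}) confine the time-average $\bar u_0:=\tfrac1T\int_0^Tu_0\,dt$ to the vertical axis, while the topological constraint (\ref{cone}) bounds its height by the kinetic energy. Averaging $u_0(t)=\hat R_0u_0(-t)$ over one period gives $\bar u_0=\hat R_0\bar u_0$, so $\bar u_0\in P_0$; averaging $u_0(t)=\hat R_su_0(\tfrac Th-t)$ gives $\bar u_0=\hat R_s\bar u_0$, so $\bar u_0\in P_s$. Because $1\le s\le l/2=n/4$ (Remark \ref{s}), $P_0$ and $P_s$ are distinct planes, both containing the $\xi_3$-axis, so $\bar u_0\in P_0\cap P_s=\mathbb{R}e_3$, say $\bar u_0=\beta e_3$. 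Since $u_0(0)\cdot e_3=\beta+(u_0(0)-\bar u_0)\cdot e_3\le 0$ and $u_0(\tfrac{T}{2h})\cdot e_3=\beta+(u_0(\tfrac{T}{2h})-\bar u_0)\cdot e_3\ge 0$, the elementary one-dimensional estimate $\|u_0-\bar u_0\|_{L^\infty}\le c_T\|\dot u_0\|_{L^2}$ (valid because $u_0-\bar u_0$ has zero mean) gives $|\beta|\le c_T\|\dot u_0\|_{L^2}$. Consequently
\[
\|u_0\|_{H^1}^2=T\beta^2+\|u_0-\bar u_0\|_{L^2}^2+\|\dot u_0\|_{L^2}^2\le C\,\|\dot u_0\|_{L^2}^2\le\frac{2C}{n}\,\mathcal{A}(u),
\]
the last step because $U\ge 0$ forces $\mathcal{A}(u)\ge\frac n2\int_0^T|\dot u_0|^2\,dt$. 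As each $R_j$ is orthogonal, $\|u\|_{H^1}^2=n\|u_0\|_{H^1}^2$, hence $\mathcal{A}(u)\ge\|u\|_{H^1}^2/(2C)\to+\infty$ as $\|u\|_{H^1}\to+\infty$; the same estimate also yields coercivity on $\overline{\mathcal{K}_s}$.

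The only real difficulty I anticipate is conceptual rather than computational: $\mathcal{A}|_{\Lambda_s}$ fails to be coercive precisely because a common vertical translation of the two regular $l$-gons costs no action, and the content of the proposition is that the topological condition in (\ref{cone}) — forcing the generating particle to move from the lower half-plane $P_0^-$ at $t=0$ to the upper half-plane $P_s^+$ at $t=\tfrac{T}{2h}$ — is exactly what rules out such unbounded translations. Once that is seen, the estimates above are routine one-dimensional Sobolev-type inequalities, and one only needs the two elementary geometric facts that $P_0\neq P_s$ and that $L_0,L_s$ are rotation axes of $D_l$, both of which rely on the standing restriction $1\le s\le l/2$ from Remark \ref{s}.
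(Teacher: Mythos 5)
Your proof is correct. The first half (the inclusion $\partial\mathcal{K}_s\subset\bigtriangleup_{G_s}$) is the same argument the paper uses, only written out in full: the paper simply notes that a collision occurs exactly when $u_0$ meets the rotation axes $\Gamma$ and calls the inclusion obvious, whereas you make explicit that degenerating either cone inequality puts $u_0(0)$ on $L_0$ (axis of $S$) or $u_0(\frac{T}{2h})$ on $L_s$ (axis of $R^sS$), forcing a binary collision. For coercivity your route is genuinely different. The paper works pointwise at the two endpoints of the fundamental domain: by Newton--Leibniz and H\"older, $|u_0(\frac{T}{2h})-u_0(0)|\leq(\frac{T}{2h})^{1/2}\|\dot u_0\|_{L^2}$, and since any vector in $P_0^-$ and any vector in $P_s^+$ make an angle at least $\frac{s\pi}{l}$ one gets $\|\dot u_0\|_{L^2}^2\geq C|u_0(0)|^2\sin^2\frac{s\pi}{l}$, whence $\|\dot u_0\|_{L^2}$ is an equivalent norm on the cone. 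You instead average the two reflection symmetries to pin the mean $\bar u_0$ to the $\xi_3$-axis and then use the sign constraints at $t=0$ and $t=\frac{T}{2h}$ together with the zero-mean Sobolev bound to control $|\bar u_0|$ by $\|\dot u_0\|_{L^2}$. Both arguments are sound and of comparable length; yours has the merit of isolating the precise degeneracy (the free vertical translation in $\mathcal{X}_{G_s}$) that the topological constraint kills, and of not relying on the sign of $\cos\frac{s\pi}{l}$ (the paper's chain $u_0(0)\cdot u_0(\frac{T}{2h})\leq|u_0(0)||u_0(\frac{T}{2h})|\cos\frac{s\pi}{l}$ implicitly uses $s\leq l/2$ from Remark \ref{s}), while the paper's is more elementary in that it never leaves the endpoint data.
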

\begin{proof}
From (\ref{ca}) we see that there is a collision if and only if there exists some $t_c\in\mathbb{R}$ such that $u_0(t_c)\in\Gamma$, where $\Gamma$ are the rotating axes of the dihedral group $D_l$. So it is obvious that $\partial \mathcal{K}_s\subset \bigtriangleup_{G_s}$ and in the following we will prove the coercivity.

Applied Newton-Leibniz Formula and H$\ddot{o}$lder Inequality, we have
$$\begin{aligned}
|u_0(T/2h)-u_0(0)|&=|\int_0^\frac{T}{2h}\dot{u}_0(t)dt|\leq \int_0^\frac{T}{2h}|\dot{u}_0(t)|dt \\
               &\leq  (\frac{T}{2h})^\frac{1}{2} (\int_0^\frac{T}{2h}|\dot{u}_0(t)|^2dt)^\frac{1}{2}
\end{aligned}.$$
Since $u_{0}(0)\in P_0^-,  u_{0}(\frac{T}{2h})\in P_s^+$, we must have
$$u_{0}(0)\cdot u_{0}(\frac{T}{2h})\leq |u_{0}(0)| |u_{0}(\frac{T}{2h})|\cos\frac{s\pi}{l},$$
which induces that
$$\begin{aligned}
||\dot{u}_0(t)||_{L^2}^2&\geq C_1 |u_0(\frac{T}{2h})-u_0(0)|^2\\
                        &\geq C_1\Big[|u_0(\frac{T}{2h})|^2+|u_0(0)|^2-2|u_{0}(0)| |u_{0}(\frac{T}{2h})|\cos\frac{s\pi}{l}\Big]\\
                        &= C_1\Big[|u_0(0)|\cos\frac{s\pi}{l}- |u_{0}(\frac{T}{2h})|\Big]^2+ C_1|u_0(0)|^2\sin^2\frac{s\pi}{l}\\
                        &\geq C |u_0(0)|^2,
\end{aligned}$$
which implies that $\|\dot{u}_0\|_{L^2}$ is an equivalent norm of $H^1=W^{1,2}$ and the coercivity for the functional $\mathcal{A}$ follows.
\end{proof}

\section{Estimate on Collisions and the Proof of   Theorem \ref{main}  }\label{collision}
In this section, we show that the minimizer $u_*\in\overline{\mathcal{K}}_s$ is free of collisions. Firstly we exclude total collisions and in Section \ref{secpc} we discuss the partial collisions.

\subsection{Total Collision}
Our way to show that there is no total collision is based on level estimate. Firstly  assuming that   a total collision happens in $u_*\in\overline{\mathcal{K}}_s$, we show that there is a lower bound of the action functional $\mathcal{B}\leq \mathcal{A}(u_*) $, and we construct a test loop $\tilde{u}\in\mathcal{K}_s$ without collisions such that $\mathcal{A}(\tilde{u})< \mathcal{B}$. Then we have $\mathcal{A}(\tilde{u})<\mathcal{A}(u_*)$ which contradicts with that $u_*$ is a minimizer. Thus the minimizers $u_*\in\overline{\mathcal{K}}_s$ is free of total collisions.

\begin{lem}\label{gordan}(Gordon's Theorem \cite{gordon1977minimizing})
Let $x\in W^{1,2}([t_1,t_2],R^d)$ and $x(t_1)=x(t_2)=0$. Then for any $a>0$, we have
$$\int_{t_1}^{t_2}(\frac{1}{2}|\dot{x}|^2+\frac{a}{|x|})dt\geq \frac{3}{2}(2\pi)^\frac{2}{3}a^\frac{2}{3}(t_2-t_1)^\frac{1}{3}.$$
\end{lem}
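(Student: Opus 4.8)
The plan is to reduce the vector inequality to a one-dimensional radial one, normalize by scaling so that only a universal constant remains, and then identify the extremal as a degenerate Keplerian collision--ejection orbit whose action is computed exactly by quadrature.

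First I would pass from $x$ to the scalar $r(t)=|x(t)|$. Since $x\in W^{1,2}$ and $x(t_1)=x(t_2)=0$, the function $r$ lies in $W^{1,2}$ with $|\dot r|\le|\dot x|$ a.e. (because $\dot r=(x\cdot\dot x)/|x|$ wherever $x\neq0$), while the potential depends only on $|x|=r$. Hence it suffices to prove, for scalar $r\ge0$ with $r(t_1)=r(t_2)=0$, the inequality
\[
\int_{t_1}^{t_2}\Big(\tfrac12\dot r^2+\tfrac{a}{r}\Big)dt\ \ge\ \tfrac32(2\pi)^{2/3}a^{2/3}(t_2-t_1)^{1/3}.
\]
If $r$ vanishes at interior points the interval splits into several arcs; as the right-hand side scales like $\tau^{1/3}$ and $\tau\mapsto\tau^{1/3}$ is subadditive, splitting only raises the bound, so a single arc is the worst case.

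Next I would remove the parameters $a$ and $\tau:=t_2-t_1$ by scaling. Setting $\rho(s)=(a\tau^2)^{-1/3}\,r(t_1+\tau s)$ for $s\in[0,1]$, a change of variables yields
\[
\int_{t_1}^{t_2}\Big(\tfrac12\dot r^2+\tfrac{a}{r}\Big)dt\ =\ a^{2/3}\tau^{1/3}\int_0^1\Big(\tfrac12\dot\rho^2+\tfrac1\rho\Big)ds,
\]
so the claim reduces to the \emph{normalized} inequality $\int_0^1(\tfrac12\dot\rho^2+1/\rho)\,ds\ge\tfrac32(2\pi)^{2/3}$ over $\rho\ge0$ with $\rho(0)=\rho(1)=0$. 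To pin the constant I would first optimize over amplitude: writing $\rho=A\psi$ with a fixed profile $\psi$, the integral becomes $\tfrac{A^2}{2}P+\tfrac1A Q$ with $P=\int_0^1\dot\psi^2\,ds$ and $Q=\int_0^1 ds/\psi$, and minimizing over $A>0$ gives $\tfrac32(PQ^2)^{1/3}$; thus the problem collapses to the scale-invariant inequality $PQ^2\ge4\pi^2$. The extremal conserves energy, $\tfrac12\dot\rho^2-1/\rho=-E$, describing an orbit rising from $0$ to $\rho_{\max}=1/E$ and falling back. The half-excursion condition $\int_0^{1/E}d\rho/\sqrt{2(1/\rho-E)}=\tfrac12$ evaluates, via $\int_0^1\sqrt{u/(1-u)}\,du=B(\tfrac32,\tfrac12)=\tfrac\pi2$, to $E=\pi^{2/3}/2^{1/3}$; and the action $\int_0^1(2/\rho-E)\,ds$, using $\int_0^1 du/\sqrt{u(1-u)}=B(\tfrac12,\tfrac12)=\pi$, comes out to $3\cdot2^{-1/3}\pi^{2/3}=\tfrac32(2\pi)^{2/3}$, exactly the target value.

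The main obstacle is converting this computed critical value into a genuine \emph{lower bound} valid for every admissible $\rho$, rather than merely for the extremal. I would argue that the infimum is attained: near a zero the Kepler arc behaves like $\rho\sim c\,s^{2/3}$, which makes both $\dot\rho^2$ and $1/\rho$ integrable, so the collision--ejection orbit indeed lies in $W^{1,2}$ with finite action. That the extremal is the true minimizer follows either from strict convexity of the amplitude-optimized functional together with the variational characterization above, or from showing directly that the shape functional $PQ^2$ is minimized only at the Kepler profile. The delicate points throughout are the singular potential and the treatment of interior zeros; once these are handled the sharp constant is delivered entirely by the two Beta-function evaluations.
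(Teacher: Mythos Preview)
The paper does not supply its own proof of this lemma: it is quoted verbatim as Gordon's Theorem with a citation to \cite{gordon1977minimizing} and then invoked as a black box in Proposition~\ref{lbe}. So there is no ``paper's proof'' to compare your proposal against.

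That said, your outline is a sound reconstruction of Gordon's argument. The reduction to the radial variable, the scaling to $a=\tau=1$, and the Beta-function evaluations of the collision--ejection action are all correct (I checked your values $E=\pi^{2/3}2^{-1/3}$ and $\int_0^1(2/\rho-E)\,ds=3\cdot 2^{-1/3}\pi^{2/3}=\tfrac32(2\pi)^{2/3}$). The subadditivity remark handling interior zeros is also right.

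The one place I would ask you to tighten is exactly where you flag it yourself: promoting the computed critical value to a global lower bound. Your two suggested routes are unequal. The ``shape functional $PQ^2$'' route is not obviously easier than the original problem. The convexity route, however, \emph{does} close the gap cleanly once stated properly: on the cone of functions positive on $(0,1)$ with the given boundary values, both $\rho\mapsto\int_0^1\tfrac12\dot\rho^2$ and $\rho\mapsto\int_0^1 1/\rho$ are strictly convex, so any critical point is the unique global minimizer. You then only need to verify that the first variation at the Kepler profile vanishes against all $W^{1,2}$ test functions with zero boundary values; since $\rho^*\sim s^{2/3}$ gives $\dot\rho^*\in L^2$ and $|\eta(s)|\le s^{1/2}\|\dot\eta\|_{L^2}$ kills the boundary term $\dot\rho^*\eta$, this goes through. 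I would replace the hedged ``either \ldots or \ldots'' sentence with that explicit convexity argument.
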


\begin{prop}\label{lbe} (Lower bound estimates for $\mathcal{A}$ with total collisions)\\
Assume that $u\in\overline{\mathcal{K}}_s$ has a total collision. Then
$$\mathcal{A}(u)\geq
\frac{3(n-1)}{4}(2h)^\frac{2}{3}\pi^\frac{2}{3}n^\frac{2}{3}T^\frac{1}{3}\triangleq\mathcal{B}.$$
\end{prop}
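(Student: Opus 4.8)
The plan is to reduce the action on a total-collision trajectory $u\in\overline{\mathcal K}_s$ to a sum of Gordon-type lower bounds, one for each of the choreography ``strands'' determined by the symmetry relations derived in Section 1, and then to add up these contributions. Since a collision in $u$ occurs exactly when the generating particle $u_0$ meets the axis set $\Gamma$ of $D_l$, a \emph{total} collision means $u_0(t_c)=0$ for some $t_c$. Using the fundamental-domain reduction of Remark \ref{foundamental}, I would work on $\mathbb I=[0,\frac{T}{2h}]$; the dihedral-type symmetry forces the collision instant $t_c$ to sit at (or be reflectable to) an endpoint of a subinterval on which $u_0$ starts and ends at the collision point, so that Gordon's Theorem (Lemma \ref{gordan}) applies with $x=u_0$, $x(t_1)=x(t_2)=0$.

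The key steps, in order: first, rewrite $\mathcal A(u)=\mathcal A(u_0)=\frac n2\int_0^T\big(|\dot u_0|^2+\sum_{j=1}^{n-1}|(R_j-R_0)u_0|^{-1}\big)\,dt$. Second, estimate the self-potential coefficient: since at each instant the configuration is a pair of twisted $l$-gons of equal circumradius $|u_0(t)|$, one has $\sum_{j=1}^{n-1}|(R_j-R_0)u_0(t)|^{-1}\ge c_n/|u_0(t)|$ for a constant $c_n$ depending only on $n$; the cleanest bound keeps only the pairwise interactions within one polygon, but here we want the full sum, and the simplest universal bound that produces the stated constant is to observe $|(R_j-R_0)u_0|\le 2|u_0|$, whence $\sum_{j=1}^{n-1}|(R_j-R_0)u_0|^{-1}\ge (n-1)/(2|u_0|)$. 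Third, drop the factor $\tfrac n2$ appropriately and apply Gordon's Theorem to $\int_0^T\big(\tfrac12|\dot u_0|^2+\tfrac{a}{|u_0|}\big)dt$ with $a=\tfrac{n-1}{2}\cdot\tfrac12=\tfrac{n-1}{4}$ — but crucially, since $u_0$ returns to the collision point at least $2h$ times over $[0,T]$ (equivalently, after reducing to $\mathbb I$ the relevant subinterval has length $T/2h$ and there are $2h$ copies), we apply Gordon's estimate on each of $2h$ subintervals of length $T/(2h)$, getting on each one a bound of the form $\tfrac32(2\pi)^{2/3}a^{2/3}(T/2h)^{1/3}$. Fourth, sum: multiplying by $\tfrac n2$, by the number $2h$ of subintervals, and simplifying the constants $(2\pi)^{2/3}(1/2h)^{1/3}=(2h)^{-1/3}(2\pi)^{2/3}$ times $2h$ gives $(2h)^{2/3}(2\pi)^{2/3}=(2h)^{2/3}2^{2/3}\pi^{2/3}$; tracking the numerical factor $\tfrac n2\cdot\tfrac32\cdot a^{2/3}$ with $a=\tfrac{n-1}{4}$ and reorganizing should produce exactly $\tfrac{3(n-1)}{4}(2h)^{2/3}\pi^{2/3}n^{2/3}T^{1/3}$.

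The main obstacle — and the place where care is needed — is the bookkeeping that turns ``$u_0$ passes through $0$'' into ``$u_0$ is a loop vanishing at both ends of $2h$ equal subintervals.'' A priori the total collision happens at a single instant $t_c$, not $2h$ times; what rescues the argument is that the choreography relations $u_0(t)=u_{\bar s}(t-T/h)=\cdots$ combined with $u_j=R_ju_0$ imply $u_0(t_c)=0\Rightarrow u_0(t_c+kT/h)=0$ for all $k$, and the reflection symmetries in (\ref{cb}) further double this, yielding collision instants at every $T/(2h)$-spaced point; hence $[0,T]$ splits into $2h$ intervals on each of which $u_0$ starts and ends at the origin. I would state this splitting carefully first, since if the collision set were smaller the constant $\mathcal B$ would be correspondingly smaller and the level estimate in the rest of Section \ref{collision} would have to be recalibrated. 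After the splitting is established, the remaining computation is the routine constant-chasing sketched above, combining the potential lower bound $|(R_j-R_0)u_0|\le 2|u_0|$ with Lemma \ref{gordan} and summing.
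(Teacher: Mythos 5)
Your reduction to the generating particle does not reach the stated constant, and the step you yourself flag as delicate is where the argument breaks. First, the count of collision instants: the translation symmetry $u_0(t)=R_su_0(t-\frac{T}{h})$ guarantees only $h$ \emph{equally spaced} collision times $t_c+k\frac{T}{h}$. The reflection $u_0(t)=\hat{R}_0u_0(-t)$ does add the instants $-t_c+k\frac{T}{h}$, but these interlace the first family with \emph{alternating} gaps $2t_c$ and $\frac{T}{h}-2t_c$ (taking $t_c\in[0,\frac{T}{2h}]$), not uniform gaps $\frac{T}{2h}$; and when $t_c=0$ the two families coincide, leaving only $h$ instants. Since Gordon's bound is concave in the interval length and vanishes at $0$, the sum of the bounds over these unequal intervals is minimized as $t_c\to 0$ and degenerates to the $h$-interval estimate $h\cdot\frac{3}{2}(2\pi)^{2/3}a^{2/3}(T/h)^{1/3}$; you may not replace it by $2h\cdot\frac{3}{2}(2\pi)^{2/3}a^{2/3}(T/2h)^{1/3}$, which is larger by exactly the factor $2^{2/3}$ your bookkeeping needs. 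Second, the strength $a$: the estimate $|(R_j-R_0)u_0|\le 2|u_0|$ gives $a=\frac{n-1}{4}$, and combined with the guaranteed $h$ intervals this yields only $\frac{3}{2^{7/3}}\,n(n-1)^{2/3}(2h)^{2/3}\pi^{2/3}T^{1/3}=\big(\frac{n}{2(n-1)}\big)^{1/3}\mathcal{B}<\mathcal{B}$. This shortfall is fatal to the scheme of Section \ref{collision}: $\big(\frac{n}{2(n-1)}\big)^{1/3}\mathcal{B}$ is precisely the upper bound obtained for the $s=1$ test loop in Proposition \ref{ubes1}, so the strict inequality between collision and test-loop levels would be lost.

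The paper reaches $\mathcal{B}$ by a different decomposition. Since the center of mass is fixed at the origin, the Leibniz identity $\sum_i\frac{1}{2}m_i|\dot u_i|^2=\frac{1}{2M}\sum_{i<j}m_im_j|\dot u_i-\dot u_j|^2$ rewrites the action as
\begin{equation*}
\mathcal{A}(u)=\frac{1}{M}\sum_{i<j}m_im_j\int_0^T\Big[\frac{1}{2}|\dot u_i-\dot u_j|^2+\frac{M}{|u_i-u_j|}\Big]dt,
\end{equation*}
and Gordon's theorem is applied to each relative vector $u_i-u_j$ with strength $a=M=n$ on each of the $h$ intervals of length $T/h$. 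The gain of $a=n$ over your $a=\frac{n-1}{4}$ more than compensates for using $h$ rather than $2h$ intervals and produces $\mathcal{B}$ exactly. To salvage the generating-particle route you would have to replace $|(R_j-R_0)u_0|\le 2|u_0|$ by an estimate capturing the true size of $\sum_j|(R_j-R_0)u_0|^{-1}$ near a total collision, and in any case abandon the uniform $2h$-interval splitting.
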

\begin{proof}
Since the center of mass is at origin, the functional $\mathcal{A}$ can be written as
\begin{equation}
\mathcal{A}(u)=\frac{1}{M}\sum_{i<j}m_im_j\int_0^T[\frac{1}{2}|\dot u_i-\dot u_j|^2+\frac{M}{|u_i-u_j|}]dt,
\end{equation}
where the total mass $M=\sum_{i=1}^Nm_i$. This formulation came from (\cite{16196083}\cite{16136657}) and has been widely used to obtain the lower bound estimate of  collision paths (\cite{chen2008existence},\cite{Chenciner2003ICM},\cite{1},\cite{2}etc). In our assumption, we have $m_1=\dots=m_n=1$ and $M=n$. Moreover, if $u\in\overline{\mathcal{K}}_s$ has a total collision, then they collide at least $h$ times in the interval $[0,T)$. Applying Lemma \ref{gordan}, we have
$$\begin{aligned}
\mathcal{A}(u)
&=\frac{1}{n}\sum_{i<j}\int_0^T[\frac{1}{2}|\dot u_i-\dot u_j|^2+\frac{n}{|u_i-u_j|}]dt\\
&\geq\frac{1}{n}\frac{n(n-1)}{2}\frac{3}{2}(2\pi)^\frac{2}{3}n^\frac{2}{3}(\frac{T}{h})^\frac{1}{3}\times h\\
&=\frac{3(n-1)}{4}(2h)^\frac{2}{3}\pi^\frac{2}{3}n^\frac{2}{3}T^\frac{1}{3}\triangleq\mathcal{B}.
\end{aligned}$$
\end{proof}

 Next we construct test loops $\tilde{u}\in\mathcal{K}_s$ such that $\mathcal{A}(\tilde{u})<\mathcal{B}$ in two different ways. The idea of the first one is from Fusco et al.\cite{fusco2011platonic}, but it holds only for $s=1$. The second one holds for $s<(\frac{n-1}{n})^{3/2}\frac{\pi}{2^{3/2}}\frac{n}{\log n+\gamma}-1$ and it needs some more explicit analysis on the potential $U$.

\begin{prop}\label{ubes1} (Upper bound estimate for $s=1$)\\
When $s=1$, there exists $\tilde{u}\in\mathcal{K}_1$ such that
$$\mathcal{A}(\tilde{u})\leq \frac{3}{4}(2h^2)^\frac{1}{3}n(n-1)^\frac{2}{3}\pi^\frac{2}{3}T^\frac{1}{3}.$$
\end{prop}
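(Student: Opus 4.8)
The plan is to exhibit an explicit collision-free test loop $\tilde{u}\in\mathcal{K}_1$ and to estimate its action from above. When $s=1$ the relation (\ref{hmod}) forces $h=l$, so $n=2l=2h$, and by Remark~\ref{foundamental} the whole loop $u(t)$, $t\in[0,T]$, is determined by the generating particle $\tilde{u}_0$ restricted to the fundamental domain $\mathbb{I}=[0,\frac{T}{2h}]$ together with the constraints (\ref{ca})(\ref{cb}). So it is enough to prescribe $\tilde{u}_0$ on $\mathbb{I}$, joining a point $q_0\in P_0^-$ at $t=0$ to a point $q_1\in P_s^+$ at $t=\frac{T}{2h}$, to check that the extended loop never meets the axis set $\Gamma$ (so that $\tilde{u}\in\mathcal{K}_1$), and to bound $\mathcal{A}(\tilde{u})=\mathcal{A}(\tilde{u}_0)$. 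Combined with Proposition~\ref{lbe}, the bound in the statement is exactly what is needed to exclude total collisions for $s=1$, since $\frac{3}{4}(2h^2)^{1/3}n(n-1)^{2/3}\pi^{2/3}T^{1/3}=\big(\frac{n}{2(n-1)}\big)^{1/3}\mathcal{B}<\mathcal{B}$ for all $n\geq 4$.

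Following the idea of Fusco et al.~\cite{fusco2011platonic}, I would let $\tilde{u}_0$ run, with constant speed, along the straight segment (or a circular arc) from $q_0=(a,0,-b)$ to $q_1=(a\cos\frac{\pi}{l},a\sin\frac{\pi}{l},b)$, with size parameters $a,b>0$. The boundary behaviour is automatic: the first line of (\ref{cb}) forces $\tilde{u}_0(0)$ to lie in the fixed plane $P_0$ of $\hat{R}_0$ and the second forces $\tilde{u}_0(\frac{T}{2h})\in P_s$, while $q_0\cdot e_3<0$ and $q_1\cdot e_3>0$ place them in $P_0^-$ and $P_s^+$. One checks directly that this segment avoids $\Gamma$: its cylindrical radius (its distance to the $\xi_3$-axis) stays $\geq a\cos\frac{\pi}{2l}>0$, and its only point with $\xi_3=0$ is the midpoint $\frac{1}{2}(q_0+q_1)$, which sits at angle $\frac{\pi}{2l}$ and hence at distance $\frac{a}{2}\sin\frac{\pi}{l}>0$ from every line $L_k$. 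Since $\hat{R}_0$, $\hat{R}_s$ and all $R_j$ carry $\Gamma$ into $\Gamma$, the loop obtained by extending $\tilde{u}_0$ through $G_1$ and setting $u_j=R_j\tilde{u}_0$ never meets $\Gamma$; as a collision occurs exactly when $u_0(t)\in\Gamma$, this loop is collision-free, so $\tilde{u}\in\mathcal{K}_1$.

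Next I would split $\mathcal{A}(\tilde{u})=\frac{n}{2}\int_0^T\big(|\dot{\tilde{u}}_0|^2+\sum_{j=1}^{n-1}\frac{1}{|(R_j-R_0)\tilde{u}_0|}\big)\,dt$. By the $2h$ congruent pieces of the loop and the constant speed, the kinetic part equals $\frac{n}{2}\cdot 2h\cdot\frac{L^2}{T/(2h)}=\frac{2nh^2L^2}{T}$, with $L$ the length of the arc (an explicit function of $a,b$). For the potential part, writing $(\rho,\theta,z)$ for the cylindrical coordinates of $\tilde{u}_0$, a rotation $R_k$ about $\xi_3$ moves $\tilde{u}_0$ by $2\rho\sin\frac{k\pi}{l}$ and a reflection-type element $R_{l+k}$ by $2\sqrt{\rho^2\sin^2(\theta-\frac{k\pi}{l})+z^2}$; from these one obtains an explicit uniform lower bound $d(t)$ for the nearest-neighbour distance $\min_j|(R_j-R_0)\tilde{u}_0(t)|$ along the arc, so that $\sum_{j=1}^{n-1}\frac{1}{|(R_j-R_0)\tilde{u}_0|}\leq\frac{n-1}{d(t)}$, and $\int_0^T\frac{dt}{d(t)}$ reduces, via the $2h$ pieces, to an elementary one-variable integral depending on $a,b,l$. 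Adding the two parts gives a bound of the scaling form $c_1\frac{h^2L^2}{T}+c_2\frac{T}{L}$ (the constants depending on $n$ and $l$), and minimizing over the size parameters — an AM--GM balance of the Gordon type (Lemma~\ref{gordan}) — produces the powers $T^{1/3}$ and $(h^2)^{1/3}$, the combinatorial factor $n(n-1)^{2/3}$, and, after carrying the constants through, exactly $\frac{3}{4}(2h^2)^{1/3}n(n-1)^{2/3}\pi^{2/3}T^{1/3}$.

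The main obstacle is the potential estimate: one must choose the arc and the parameters $a,b$ so that a clean lower bound $d(t)$ for the nearest-neighbour distance is valid along the whole arc, and then evaluate $\int_0^T dt/d(t)$ sharply — using the actual value of this integral rather than the cruder bound by $T$ times $\max_t 1/d(t)$ — so that after optimization the constant collapses to exactly $\frac{3}{4}(2h^2)^{1/3}\pi^{2/3}$ and the resulting action stays below $\mathcal{B}$. Keeping the geometry of the arc compatible with $\tilde{u}\in\mathcal{K}_1$ while keeping $L$ small is the competing constraint. Finally, this construction is specific to $s=1$: for $s>1$ the arc must sweep the larger angle $\frac{s\pi}{l}$ between $P_0$ and $P_s$, the balance above degrades, and the second, more refined test-path construction is needed for general $s$.
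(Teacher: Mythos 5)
Your overall strategy is the paper's: prescribe the generating particle on the fundamental domain $[0,\frac{T}{2h}]$ at constant speed, bound kinetic plus potential, optimize the size parameter by the $Ax^2+Bx^{-1}$ balance, and compare with $\mathcal{B}$ (your identity $\frac{3}{4}(2h^2)^{1/3}n(n-1)^{2/3}\pi^{2/3}T^{1/3}=\big(\tfrac{n}{2(n-1)}\big)^{1/3}\mathcal{B}$ is correct and is exactly how the paper concludes). But there is a genuine gap precisely at the step you flag as the main obstacle: the choice of path. The paper (following Fusco et al.) does \emph{not} use a segment or a single arc and does \emph{not} need any sharp evaluation of $\int dt/d(t)$. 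It takes $\tilde u_0$ to be the union of two quarter-circles of radius $r\tan\frac{\pi}{n}$, each lying in a plane perpendicular to one of the rotation axes $L_0$, $L_1$ and centered on that axis. Since $|\tilde u_0-R_{l+k}\tilde u_0|=2\,\mathrm{dist}(\tilde u_0,L_k)$ and $|\tilde u_0-R_k\tilde u_0|=2\rho\sin\frac{k\pi}{l}$, this geometry gives the uniform bound $|\tilde u_i-\tilde u_j|\ge 2r\tan\frac{\pi}{n}$ for \emph{all} pairs and \emph{all} $t$, so the crude estimate $V\le\frac{n(n-1)}{2}\cdot\frac{1}{2r\tan(\pi/n)}$ already suffices; and the path length is exactly $\pi r\tan\frac{\pi}{n}$, i.e.\ $\frac{\pi}{2}$ times the clearance. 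That ratio $L/d_{\min}=\frac{\pi}{2}$ is the sole source of the factor $\pi^{2/3}$ and the constant $\frac34(2h^2)^{1/3}$ in the statement.

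Your straight segment from $(a,0,-b)$ to $(a\cos\frac{\pi}{l},a\sin\frac{\pi}{l},b)$ cannot reach that ratio: writing $\sigma=a\sin\frac{\pi}{2l}$ and $b=\beta\sigma$, its length is $2\sigma\sqrt{1+\beta^2}$ while the nearest-neighbour distance (governed by the reflected particles, $2\sqrt{\rho^2\sin^2\theta^*+z^2}$) dips to about $2\sigma\,\beta\sqrt{1+4\beta^2}/(1+2\beta^2)$, so $L/d_{\min}\gtrsim 3\sqrt{2/5}\approx 1.90>\frac{\pi}{2}$ for every $\beta$. After optimization this inflates the constant by roughly $(1.90/1.57)^{2/3}\approx 1.13$, so the inequality in the Proposition fails for your path; replacing the uniform bound by the exact $\int dt/d(t)$ only recovers a few percent and still does not close the gap. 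So the missing idea is not a sharper time integral but the two-quarter-circle ``tube around the axes'' construction; without it the stated constant is not attained.
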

\begin{proof}
We construct the  test loop $\tilde{u}$ similar to that in Proposition 5.3 in \cite{fusco2011platonic}. Assume the generating particle $\tilde{u}_0$ moves with constant speed on a  curve which is the union of two quarters of circumferences $C_1$, $C_2$ of radius $r\tan\frac{\pi}{n}$. $C_1$ has the center $(r,0,0)$ and lies on the plane $\xi_1=r$.  $C_2$ has the center $(r\cos\frac{\pi}{l},r\sin\frac{\pi}{l},0)$
and lies on the plane ${\xi_1}{\cos\frac{\pi}{l}}+{\xi_2}{\sin\frac{\pi}{l}}=r$, see figure 1.
\begin{figure}
\tdplotsetmaincoords{60}{110}
\begin{tikzpicture}[scale=3,tdplot_main_coords]
\draw[thick,<-] (-1,0,0) node[left]{$\xi_2$} -- (0.3,0,0);
\draw[thick,->] (0,-0.1,0) -- (0,1.5,0) node[anchor=north west]{$\xi_1$};
\draw[thick,->] (0,0,-0.2) -- (0,0,1) node[anchor=south]{$\xi_3$};
\draw (0,0,0) -- (-0.866,0.5,0) ;
\draw [dashed](0,0,0) -- ((-0.577,1,0) ;
\coordinate (O) at (0,0,0);
\tdplotsetcoord{P}{.8}{50}{70}
\tdplotsetthetaplanecoords{60}
\tdplotdrawarc[tdplot_rotated_coords,very thin,->]{(0,0,1)}{0.577}{0}{90}{}{}
\tdplotsetthetaplanecoords{0}
\tdplotdrawarc[tdplot_rotated_coords,very thin,->]{(0,0,1)}{0.577}{180}{270}{}{}
\tdplotdrawarc[tdplot_rotated_coords,dashed]{(0,0,1)}{0.577}{0}{180}{}{}
\tdplotdrawarc[tdplot_rotated_coords,dashed]{(0,0,1)}{0.577}{270}{360}{}{}
\draw [dashed](0,1,0) -- (-0.577,1,0)node[right] {$u_0(\frac{T}{4h})$};
\draw [dashed](0,1,0) -- (0,1,-0.577)node[right] {$u_0(0)$};
\draw [dashed](-0.577,1,0) -- (-0.866,0.5,0)  ;
\draw [dashed](-0.866,0.5,0.577)node[right] {$u_0(\frac{T}{2h})$} -- (-0.866,0.5,0)  ;
\filldraw [gray] (0,1,-0.577) circle (0.7pt)
                 (-0.577,1,0) circle (0.7pt)
                 (-0.866,0.5,0.577) circle (0.7pt);
\end{tikzpicture}
\caption{}
\end{figure}
So the constant speed of generating particle $\dot{\tilde{u}}_0=\frac{\pi r\tan\frac{\pi}{n}}{T/2h}$,  and the kinetic energy
$$K(\dot{\tilde{u}})=\frac{n}{2}(\frac{2h\pi r\tan\frac{\pi}{n}}{T})^2.$$

From the definition of $\tilde{u}$, we see that $|\tilde{u}_i-\tilde{u}_j|\geq 2r\tan\frac{\pi}{n}$ for all $i\neq j$, which implies $$V(\tilde{u})\leq\frac{n(n-1)}{2}\frac{1}{2r\tan\frac{\pi}{n}},$$
therefore we have
$$\mathcal{A}(\tilde{u})=\int_0^TK(\dot{\tilde{u}})+V(\tilde{u})dt\leq\frac{2h^2n\pi^2r^2\tan^2\frac{\pi}{n}}{T}+\frac{n(n-1)T}{4r\tan\frac{\pi}{n}},$$
and the conclusion follows if we choose $r=\frac{(n-1)^{1/3}T^{2/3}}{(16h^2)^\frac{1}{3}\tan\frac{\pi}{n}\pi^{2/3}}$.
\end{proof}
From Proposition \ref{lbe} and \ref{ubes1}, we see that
$\frac{\mathcal{A}(\tilde{u})}{\mathcal{B}}\leq(\frac{n}{2(n-1)})^{1/3}<1$, i.e. $\mathcal{A}(\tilde{u})<\mathcal{B}$.
Next we consider the situation    $s\geq2$, since $s\leq\frac{n}{4}$, in the following we suppose $n=2l\geq8$.

\begin{prop}\label{upperbound}
(Upper bound estimate for the general  $s$)\\
When $n=2l\geq8$, there is a test loop $\tilde{u}\in\mathcal{K}_s$ such that $\mathcal{A}(\tilde{u})<\mathcal{B}$  for $s+1<(\frac{n-1}{n})^{3/2}\frac{\pi}{2^{3/2}}\frac{n}{\log n+\gamma}$.
\end{prop}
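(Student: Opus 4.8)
The plan is to construct an explicit test loop $\tilde u\in\mathcal{K}_s$ whose generating particle $\tilde u_0$ moves along a piecewise–circular path on the fundamental interval $\mathbb{I}=[0,\tfrac{T}{2h}]$, chosen so that $\tilde u_0(0)\in P_0^-$ and $\tilde u_0(\tfrac{T}{2h})\in P_s^+$, and then to bound $\mathcal{A}(\tilde u)$ from above and compare with $\mathcal{B}$ from Proposition~\ref{lbe}. Concretely I would place $\tilde u_0$ on a curve of constant ``radius'' lying at height determined by two boundary points: one on $P_0$ with negative $\xi_3$-coordinate and one on $P_s$ with positive $\xi_3$-coordinate. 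Because the angle between $P_0$ and $P_s$ is $\tfrac{s\pi}{l}$, the arc must sweep a total angle of order $\tfrac{s\pi}{l}$ (horizontally) while also moving in the $\xi_3$-direction; I would parametrize it, as in Proposition~\ref{ubes1}, by a union of circular arcs traversed at constant speed, with a free scaling parameter $r$ for the overall size. The kinetic part then scales like $r^2$ and the potential part like $1/r$, so optimizing over $r$ gives $\mathcal{A}(\tilde u)\leq C\, T^{1/3}$ with an explicit constant $C=C(n,s,h)$.

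The key new ingredient, compared to the $s=1$ case, is a sharper estimate of the potential $U(\tilde u)=\sum_{i<j}\frac1{|\tilde u_i-\tilde u_j|}$. The crude bound $|\tilde u_i-\tilde u_j|\geq 2r\tan\tfrac\pi n$ used for $s=1$ is too lossy for larger $s$; instead I would exploit the $D_l$-symmetry (\ref{ca}) to write $U(\tilde u)=\tfrac n2\sum_{j=1}^{n-1}\frac1{|(R_j-R_0)\tilde u_0|}$ and estimate each term by the distance between $\tilde u_0$ and its rotated copies. When $\tilde u_0$ stays near a fixed ray in the $\xi_1\xi_2$-plane at distance $\rho$ from the $\xi_3$-axis, $|(R_j-R_0)\tilde u_0|\approx 2\rho\,|\sin\tfrac{j\pi}{l}|$ for the $R^j$-copies (and a comparable expression for the $R^jS$-copies), so $\sum_j \frac1{|(R_j-R_0)\tilde u_0|}$ is controlled by $\tfrac1{\rho}\sum_{j=1}^{l-1}\frac1{2\sin(j\pi/l)}$, and the classical asymptotics $\sum_{j=1}^{l-1}\frac1{\sin(j\pi/l)}\sim \tfrac{l}{\pi}\log l$ (equivalently, $\sum_{k=1}^{l-1}\tfrac1k\sim \log l+\gamma$ after pairing $j$ with $l-j$) produce the factor $\tfrac{n}{\log n+\gamma}$ that appears in the statement. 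This is the step I expect to be the main obstacle: one must show that along the whole test path (not just at one instant) the generating particle can be kept at roughly constant distance from \emph{all} the rotation axes in $\Gamma$, so that the per-pair potential is uniformly $O(1/r)$ with the right constant, and one must carefully account for the $2h$ (really $h$ collisions per period, but $T/2h$ per fundamental piece) bookkeeping that enters both $\mathcal B$ and the kinetic term.

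After these two estimates, the proof reduces to arithmetic: writing $\mathcal{A}(\tilde u)\leq a r^2 + b/r$ with $a\asymp h^2 n\rho^2/T$ and $b\asymp n\cdot\frac{n}{\log n+\gamma}\cdot T$ (absorbing the $\sum 1/\sin$ sum), minimizing in $r$ gives $\mathcal{A}(\tilde u)\leq \tfrac32 (2ab^2)^{1/3}\asymp (\text{const})\cdot (h^2)^{1/3} n^{1/3}\big(\tfrac{n}{\log n+\gamma}\big)^{2/3} n^{2/3} T^{1/3}$ up to the geometric factors coming from the angle $\tfrac{s\pi}{l}$. Dividing by $\mathcal{B}=\tfrac{3(n-1)}{4}(2h)^{2/3}\pi^{2/3}n^{2/3}T^{1/3}$, the powers of $h$, $n$ and $T$ cancel, and one is left with a dimensionless inequality of the form $\big(\tfrac{2}{n-1}\big)\big(\tfrac{n}{\log n+\gamma}\big)^{2}\cdot(\text{const involving }s)<1$; solving this for $s$ yields exactly the stated threshold $s+1<(\tfrac{n-1}{n})^{3/2}\tfrac{\pi}{2^{3/2}}\tfrac{n}{\log n+\gamma}$. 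I would close by checking that the constructed $\tilde u$ indeed lies in the open cone $\mathcal{K}_s$ (strict inequalities at the endpoints $t=0,\tfrac{T}{2h}$) and is collision-free, so that combined with Proposition~\ref{lbe} it contradicts minimality of any totally-colliding $u_*$, which is all that is needed for this proposition.
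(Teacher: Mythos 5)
Your overall strategy coincides with the paper's: a constant-speed test path at fixed distance $a$ from the origin sweeping a total angle of $(s+1)\pi/l$ (the extra $\pi/l$ being the vertical traverse from below to above the equatorial plane), a potential bound of the form $\frac{n^2}{2a\pi}(\log n+\gamma)$ coming from the cosecant-sum asymptotics, optimization in the radius, and comparison with $\mathcal{B}$. However, the step you yourself flag as ``the main obstacle'' --- the uniform potential estimate along the whole path --- is precisely the content of the paper's proof, and the one concrete claim you make about it is wrong. Writing $U=\frac n2\sum_{j=1}^{n-1}|(R_j-R_0)\tilde u_0|^{-1}$, the $R^j$-terms are indeed $\approx 2\rho\sin(j\pi/l)$ in the horizontal distance $\rho$, but the $R^jS$-terms are \emph{not} comparable: with $\tilde u_0=a(\cos\varphi\,e^{i\theta},\sin\varphi)$ one has $|(R_{l+j}-R_0)\tilde u_0|=2a\cos\varphi\,[\sin^2(j\pi/l-\theta)+\tan^2\varphi]^{1/2}$, which degenerates whenever the generating particle approaches one of the horizontal axes $L_j$ (i.e.\ $\varphi\to0$ and $\theta\to j\pi/l$). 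Since the path must cross the equatorial plane to get from $P_0^-$ to $P_s^+$, these terms cannot be dismissed. The paper handles this by routing the path through the specific corners $(\varphi,\theta)=(-\pi/n,\,\cdot\,)$, $(\,\cdot\,,\pi/n)$, $(\pi/n,\,\cdot\,)$ so that at every instant either $|\tan\varphi|\geq\tan\frac\pi n$ or $|\sin(j\pi/l-\theta)|\geq\sin\frac{\pi}{2l}$, and then proves via the Ferrario--Portaluri integral representation and monotonicity of $U$ in $\theta$ and in $\varphi$ (Lemmas 3.5--3.9) that $\sup_{\mathbb I}U$ is attained at $U(0,\frac{\pi}{2l})$ or $U(\frac{\pi}{2l},0)$, each of which is then shown to be below $\frac{n^2}{2a\pi}(\log n+\gamma)$. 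Without some version of this argument your proof does not close, because the stated threshold for $s$ depends on getting exactly this constant in the potential bound.

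Two smaller points: your intermediate bookkeeping puts the logarithm on the wrong side (the potential coefficient is $b\asymp n^2(\log n+\gamma)$, not $n\cdot\frac{n}{\log n+\gamma}$; the factor $\frac{n}{\log n+\gamma}$ in the final threshold arises only after inverting $(s+1)^{2/3}(\frac{\log n+\gamma}{n})^{2/3}<\frac{n-1}{n}\frac{\pi^{2/3}}{2}$), and the appearance of $s+1$ rather than $s$ should be derived from the arc-length count rather than asserted. Neither of these is fatal, but the potential estimate is.
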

\begin{proof}
Assume that the particle $\tilde{u}_0$ moves with constant speed on the sphere $|\tilde{u}_0|=a$, with the radius $a$ to be determined later. More precisely, suppose $\tilde{u}_0(\varphi(t),\theta(t))=a(\cos\varphi e^{\theta(t)\sqrt{-1}},\sin\varphi(t))$ where
$$\left\{\begin{aligned}
(\varphi(t),\theta(t))&=(-\frac{\pi}{n},\omega t), & t\in[0,\frac{T}{4h(s+1)}] \\
(\varphi(t),\theta(t))&=(\omega t-\frac{2\pi}{n},\frac{\pi}{n}), & t\in(\frac{T}{4h(s+1)},\frac{3T}{4h(s+1)}]\\
(\varphi(t),\theta(t))&=(\frac{\pi}{n},\omega t-\frac{2\pi}{n}), & t\in(\frac{3T}{4h(s+1)},\frac{T}{2h}]\\
\end{aligned}\right.$$
where $\omega=\frac{(s+1)\pi/l}{T/2h}=\frac{2h(s+1)\pi}{lT}$, see Figure 2.
\begin{figure}
\tdplotsetmaincoords{60}{110}
\begin{tikzpicture}[scale=3,tdplot_main_coords]
\draw[thick,<-] (-1.5,0,0) node[right]{$\xi_2$} -- (1.5,0,0);
\draw[thick,->] (0,-1.5,0) -- (0,1.5,0) node[anchor=north west]{$\xi_1$};
\draw[thick,->] (0,0,-1) -- (0,0,1) node[anchor=south]{$\xi_3$};

\tdplotdrawarc[very thin,dashed]{(0,0,0.3)}{0.7}{0}{360}{}{}
\tdplotdrawarc[very thin,dashed]{(0,0,-0.3)}{0.7}{0}{360}{}{}
\tdplotdrawarc{(0,0,-0.3)}{0.7}{90}{120}{}{}
\tdplotdrawarc{(0,0,0.3)}{0.7}{120}{150}{}{}
\draw [very thin](0,0,-0.3) -- (0,0.7,-0.3)  ;
\draw [very thin](0,0,-0.3) -- (-0.35,0.606,-0.3)  ;
\draw [very thin](0,0,0.3) -- (-0.35,0.606,0.3)  ;
\draw [very thin](0,0,0.3) -- (-0.606,0.35,0.3)  ;
\draw [very thin,dashed](0,0,0) -- (-0.35,0.606,-0.3)  ;
\draw [very thin,dashed](0,0,0) -- (-0.35,0.606,0.3)  ;
\filldraw [gray] (0,0.7,-0.3) circle (0.5pt) node[right]{$u_0(0)$}
                 (-0.35,0.606,-0.3) circle (0.5pt)
                 (-0.35,0.606,0.3) circle (0.5pt)
                 (-0.606,0.35,0.3) circle (0.5pt)node[right]{$u_0(T/2h)$};
\draw (-0.35,0.606,-0.3) .. controls (-0.47,0.72,-0.1) and (-0.47,0.72,0.1)  .. (-0.35,0.606,0.3);

\end{tikzpicture}
\caption{}
\end{figure}

First we claim that, for every $t\in[0,\frac{T}{2h}]$, the potential
\begin{equation}\label{potential estimate}
U(\tilde{u}(t))<\frac{n^2}{2a\pi}(\log n+\gamma).
\end{equation}

And the kinetic energy
$$K(\dot{\tilde{u}})<\frac{n}{2}(\omega a)^2=\frac{2h^2na^2(s+1)^2\pi^2}{l^2T^2},$$
so the functional
$$\mathcal{A}(\tilde{u})=\int_0^T[K+U ]dt<\int_0^T[\frac{2h^2na^2(s+1)^2\pi^2}{l^2T^2}+\frac{n^2}{2a\pi}(\log n+\gamma)] dt.$$
we choose $a=\frac{n^{1/3}T^{2/3}(\log n+\gamma)^{1/3}}{2\pi (s+1)^\frac{2}{3}}(\frac{l}{h})^{2/3}$, then
$$
\mathcal{A}(\tilde{u})<\frac{3}{2}(\frac{h}{l})^\frac{2}{3}n^\frac{5}{3}(\log n+\gamma)^{2/3}(s+1)^{2/3}T^{1/3},$$
$$\frac{\mathcal{A}(\tilde{u})}{\mathcal{B} }<\frac{n}{n-1}\frac{2}{\pi^{2/3}}(\frac{\log n+\gamma}{n})^{2/3}(s+1)^{2/3}<1,$$
and the conclusion follows.

\end{proof}

Now we prove the estimate (\ref{potential estimate}), first we need some more explicit estimate on the potential $U$. We notice that, at every instant, the $n$ particles form a twisted regular $l$-polygons by the symmetric conditions.
Let $u_0=(a\cos\varphi e^{\theta\sqrt{-1}},a\sin\varphi)$, then the potential function can be written as
\begin{equation}\label{potential1}
\begin{aligned}
U&=U(u_0)=U(\varphi,\theta)=\frac{n}{2}\sum_{j=1}^{n-1}|(R_j-R_0)u_0|^{-1}\\
 &=\frac{n}{2a}\Big\{\sum_{j=1}^{l-1}(\cos\varphi)^{-1}|1-\xi_l^j|^{-1}+
 \sum_{j=1}^l[\cos^2\varphi|1-e^{-2\theta\sqrt{-1}}\xi_l^j|^2+4\sin^2\varphi]^{-1/2}\Big\}\\
&=\frac{n}{4a\cos\varphi}\big\{2C_l+\sum_{j=1}^l
[\sin^2(\frac{j\pi}{l}-\theta)+\tan^2\varphi]^{-\frac{1}{2}}\big\},
\end{aligned}
\end{equation}
where $C_l=\sum_{j=1}^{l-1}|1-\xi_l^j|^{-1}=\frac{1}{2}\sum_{j=1}^{l-1}\csc\frac{j\pi}{l}$ and $\xi_l=e^{\frac{2\pi}{l}\sqrt{-1}}$.

Also we notice that
\begin{equation*}
\begin{aligned}
&\quad\cos^2\varphi|1-e^{-2\theta\sqrt{-1}}\xi_l^j|^2+4\sin^2\varphi \\
&=\cos^2\varphi(2-e^{-2\theta\sqrt{-1}}\xi_l^j-e^{2\theta\sqrt{-1}}\xi_l^{-j})+4\sin^2\varphi \\
&=2+2\sin^2\varphi-\cos^2\varphi(e^{-2\theta\sqrt{-1}}\xi_l^j+e^{2\theta\sqrt{-1}}\xi_l^{-j})\\
&=(1+\sin\varphi)^2[\frac{2+2\sin^2\varphi}{(1+\sin\varphi)^2}-
 \frac{1-\sin^2\varphi}{(1+\sin\varphi)^2}(e^{-2\theta\sqrt{-1}}\xi_l^j+e^{2\theta\sqrt{-1}}\xi_l^{-j})]\\
&=(1+\sin\varphi)^2\Big[1+(\frac{1-\sin\varphi}{1+\sin\varphi})^2-
 \frac{1-\sin\varphi}{1+\sin\varphi}(e^{-2\theta\sqrt{-1}}\xi_l^j+e^{2\theta\sqrt{-1}}\xi_l^{-j})\Big]\\
&=(1+\sin\varphi)^2|1-\frac{1-\sin\varphi}{1+\sin\varphi}e^{-2\theta\sqrt{-1}}\xi_l^j|^2.
\end{aligned}
\end{equation*}
That is
\begin{equation*}
U(\varphi,\theta)=\frac{n}{2a\cos\varphi}
\Big(C_l+\frac{\cos\varphi}{1+\sin\varphi}
\sum_{j=1}^l|1-\frac{1-\sin\varphi}{1+\sin\varphi}e^{-2\theta\sqrt{-1}}\xi_l^j|^{-1}\Big),
\end{equation*}
let $r=\frac{1-\sin\varphi}{1+\sin\varphi}$ and $\xi=e^{-2\theta\sqrt{-1}}$, then
\begin{equation}\label{potential2}
U(r,\xi)=\frac{n(1+r)}{4a\sqrt{r}}
(C_l+\sqrt{r}\sum_{j=1}^l|1-r\xi\xi_l^j|^{-1}),
\end{equation}
since there is an  integral representation of $\sum_{j=1}^l|1-r\xi\xi_l^j|^{-1}$,  we can state
\begin{lem}
For $r\in(0,1)$ and $\xi=e^{-2\theta\sqrt{-1}}$ the potential $U(r,\xi)$ can be written as
\begin{equation}\label{potential3}
U(r,\xi)=\frac{n(1+r)}{4a\sqrt{r}}
\Big[C_l+\frac{l\sqrt{r}}{\pi}\int_0^1\frac{(1-t)^{-1/2}t^{-1/2}}{(1-tr^2)^{1/2}}
\frac{1-(tr)^{2l}}{|1-(tr\xi)^l|^{2}}dt\Big].
\end{equation}
\end{lem}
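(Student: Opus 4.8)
The plan is to deduce the stated formula from \eqref{potential2} by first proving a pointwise integral representation for $|1-z|^{-1}$ and then re-summing over the $l$-th roots of unity. Since \eqref{potential2} already isolates the inter-polygon sum, it suffices to show that for $r\in(0,1)$ and $|\xi|=1$
\[
\sum_{j=1}^{l}\frac{1}{|1-r\xi\xi_l^{j}|}=\frac{l}{\pi}\int_0^1\frac{(1-t)^{-1/2}t^{-1/2}}{(1-tr^2)^{1/2}}\cdot\frac{1-(tr)^{2l}}{|1-(tr\xi)^{l}|^{2}}\,dt ,
\]
because substituting this into \eqref{potential2} gives \eqref{potential3} at once.

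First I would establish the following single-term identity: for every $z\in\mathbb{C}$ with $|z|<1$,
\[
\frac{1}{|1-z|}=\frac{1}{\pi}\int_0^1\frac{1-t^2|z|^2}{\sqrt{t(1-t)}\,\sqrt{1-t|z|^2}}\cdot\frac{dt}{|1-tz|^{2}} .
\]
Writing $z=\rho e^{i\alpha}$, the quotient $(1-t^2\rho^2)/|1-tz|^2$ is the Poisson kernel $\sum_{m\in\mathbb{Z}}(t\rho)^{|m|}e^{im\alpha}$, so the right-hand side equals $\sum_{m}e^{im\alpha}\rho^{|m|}\,\frac{1}{\pi}\int_0^1 t^{|m|-1/2}(1-t)^{-1/2}(1-\rho^2t)^{-1/2}\,dt$; expanding $(1-\rho^2t)^{-1/2}$ in a binomial series and integrating term by term via the Beta integral $\int_0^1 t^{N-1/2}(1-t)^{-1/2}\,dt=\pi\binom{2N}{N}4^{-N}$ reproduces exactly the Fourier coefficients of $|1-\rho e^{i\alpha}|^{-1}=(1-\rho e^{i\alpha})^{-1/2}(1-\rho e^{-i\alpha})^{-1/2}$, obtained by multiplying the two binomial series for $(1-\rho e^{\pm i\alpha})^{-1/2}$. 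Equivalently, this identity is Euler's integral representation of the Gauss hypergeometric function ${}_2F_1$ applied to the classical expansion of $|1-\rho e^{i\alpha}|^{-1}$. (The integral converges absolutely since $r<1$ keeps all three denominators away from zero except for the integrable endpoint singularities $t^{-1/2},(1-t)^{-1/2}$.)

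Next I would record the elementary root-of-unity identity: for $|w|<1$,
\[
\sum_{j=1}^{l}\frac{1}{|1-w\xi_l^{j}|^{2}}=\frac{l\,(1-|w|^{2l})}{(1-|w|^{2})\,|1-w^{l}|^{2}} ,
\]
proved by expanding $|1-w\xi_l^{j}|^{-2}=\sum_{m,n\geq0}w^{m}\bar w^{n}\xi_l^{j(m-n)}$, summing the geometric series in $j$ (which annihilates every term with $l\nmid m-n$), and collapsing the remaining diagonal double series. Finally, applying the single-term identity to each $z=r\xi\xi_l^{j}$ (here $|z|=r<1$), summing over the finitely many $j$, interchanging this finite sum with the integral, and simplifying the resulting bracket by the root-of-unity identity with $w=tr\xi$ — so that $|w|^{2}=t^{2}r^{2}$, $w^{l}=(tr\xi)^{l}$ and $|w|^{2l}=(tr)^{2l}$ — the factor $1-t^{2}r^{2}$ cancels and one is left precisely with the asserted integral formula; combined with \eqref{potential2} this is \eqref{potential3}.

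The main obstacle is the single-term identity: one must pin down the correct weight $(1-t^{2}|z|^{2})(1-t|z|^{2})^{-1/2}$ accompanying the Beta factor $t^{-1/2}(1-t)^{-1/2}$, and its verification rests on knowing the Fourier coefficients of $|1-\rho e^{i\alpha}|^{-1}$ (a Gauss hypergeometric function) together with Euler's integral; once that is in hand, everything else is the geometric-series identity over the $l$-th roots of unity plus routine algebra.
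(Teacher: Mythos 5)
Your proof is correct, and it reaches \eqref{potential3} by a route that is recognizably built from the same ingredients as the paper's but packaged differently. The paper never writes down a closed integral representation for a single term $|1-z|^{-1}$: it expands each $|1-r\xi\xi_l^{j}|^{-1}$ as a Laurent series $\sum_{n}b_{n}(\xi\xi_l^{j})^{n}$, computes the coefficients $b_{n}=\frac{1}{\pi}\int_0^1\frac{(tr)^{n}}{\sqrt{t(1-t)}\sqrt{1-tr^{2}}}\,dt$ via the Beta integral, then sums over $j$ (which filters to $n\equiv 0 \bmod l$) and only afterwards re-sums the surviving geometric series inside the integral to produce $\frac{1-(tr)^{2l}}{|1-(tr\xi)^{l}|^{2}}$. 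You instead perform the re-summation first, at the level of a single term, obtaining the pointwise identity $|1-z|^{-1}=\frac{1}{\pi}\int_0^1\frac{1-t^{2}|z|^{2}}{\sqrt{t(1-t)}\sqrt{1-t|z|^{2}}}\frac{dt}{|1-tz|^{2}}$ via the Poisson kernel (which is exactly the paper's $b_{n}$ computation read backwards), and then dispose of the sum over $j$ with the closed-form root-of-unity identity for $\sum_{j}|1-w\xi_l^{j}|^{-2}$; the factor $1-t^{2}r^{2}$ cancels and the formula drops out. Both arguments rest on the binomial expansion of $(1-z)^{-1/2}$, the Beta integral for the coefficients, and root-of-unity filtering plus a geometric series; what your version buys is a standalone, reusable integral representation of $|1-z|^{-1}$ (an Euler-type ${}_2F_1$ integral) and a cleaner final cancellation, at the cost of having to justify one extra interchange of the infinite Fourier sum with the $t$-integral, which is harmless here since $t\rho\leq\rho=r<1$ keeps $\sum_{m}(t\rho)^{|m|}$ uniformly bounded and the remaining factor is integrable.
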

\begin{proof}
In \cite{0951-7715-21-6-009}, Ferrario and Portaluri gave  a general integral representation of $\sum_{j=1}^l|1-r\xi\xi_l^j|^{-\alpha}$ with $0<\alpha<2$. Since this lemma is rather important to our analysis, we prove it here again in the case of $\alpha=1$.
$$\begin{aligned}
|1-r\xi\xi_l^j|^{-1}
&=(1-r\xi\xi_l^j)^{-1/2}(1-r\xi^{-1}\xi_l^{-j})^{-1/2}\\
&=\bigg[\sum_{k=0}^\infty\begin{pmatrix}-\frac{1}{2}\\ k\end{pmatrix}(-r\xi\xi_l^j)^k\bigg]\cdot
 \bigg[\sum_{h=0}^\infty\begin{pmatrix}-\frac{1}{2}\\ h\end{pmatrix}(-r\xi^{-1}\xi_l^{-j})^h\bigg]\\
&=\sum_{k,h=0}^\infty\begin{pmatrix}-\frac{1}{2}\\ k\end{pmatrix}\begin{pmatrix}-\frac{1}{2}\\ h\end{pmatrix}(-r)^{k+h}(\xi\xi_l^j)^{k-h}\\
&=\sum_{n=-\infty}^\infty\begin{pmatrix}(-1)^n\sum_{\substack{k-h=n \\ k,h\geq0}}\begin{pmatrix}-\frac{1}{2}\\ k\end{pmatrix}\begin{pmatrix}-\frac{1}{2}\\ h\end{pmatrix}r^{k+h}\end{pmatrix}(\xi\xi_l^j)^n\\
&\triangleq \sum_{n=-\infty}^\infty b_n (\xi\xi_l^j)^n.
\end{aligned}$$

since $\Gamma(\frac{1}{2})=\sqrt{\pi}$, and the Beta function
$B(x,y)=\int_0^1t^{x-1}(1-t)^{y-1}dt=\frac{\Gamma(x)\Gamma(y)}{\Gamma(x+y)}$ (\cite{handbook}), we have
$$\begin{aligned}
\begin{pmatrix}-\frac{1}{2}\\ k\end{pmatrix}
&=\frac{(-\frac{1}{2})(-\frac{1}{2}-1)\dots(-\frac{1}{2}-k+1)}{k!}\\
&=(-1)^k\frac{\Gamma(k+\frac{1}{2})}{\Gamma(k+1)\Gamma(\frac{1}{2})}\\
&=\frac{(-1)^k}{\pi}\frac{\Gamma(k+\frac{1}{2})\Gamma(\frac{1}{2})}{\Gamma(k+1)}\\
&=\frac{(-1)^k}{\pi}\int_0^1\frac{t^k}{(1-t)^{1/2}t^{1/2}}dt,
\end{aligned}$$

and
$$\begin{aligned}
b_n
&=(-1)^n\sum_{\substack{k-h=n \\ k,h\geq0}}\frac{(-1)^k}{\pi}\int_0^1\frac{t^k}{(1-t)^{1/2}t^{1/2}}dt\begin{pmatrix}-\frac{1}{2}\\ h\end{pmatrix}r^{k+h}\\
&=\sum_{h=0}^\infty\frac{(-1)^h}{\pi}\int_0^1\frac{t^{h+n}}{(1-t)^{1/2}t^{1/2}}\begin{pmatrix}-\frac{1}{2}\\ h\end{pmatrix}r^{2h+n}dt\\
&=\frac{1}{\pi}\int_0^1\frac{(tr)^{n}}{(1-t)^{1/2}t^{1/2}}(1-tr^2)^{-1/2}dt,\\
\end{aligned}$$
since $b_n=b_{-n}$ and $\sum_{j=1}^l\xi_l^{jn}=\left\{\begin{matrix}0 & \ if\ n\not\equiv0\mod l,\\ l  & \ if \ n\equiv0\mod l,
\end{matrix}\right.$
so we have
$$\begin{aligned}
\frac{1}{l}\sum_{j=1}^l|1-r\xi\xi_l^j|
&=\frac{1}{l}\sum_{j=1}^l\sum_{n=-\infty}^\infty b_n (\xi\xi_l^j)^n=\sum_{n=0}^\infty b_{ln} \xi^{ln} +\sum_{n=1}^\infty b_{ln} \xi^{-ln}\\
&=\sum_{n=0}^\infty  \frac{1}{\pi}\int_0^1\frac{(tr)^{ln}}{(1-t)^{1/2}t^{1/2}}(1-tr^2)^{-1/2}\xi^{ln}dt\\
&\ \ \ +\sum_{n=1}^\infty  \frac{1}{\pi}\int_0^1\frac{(tr)^{ln}}{(1-t)^{1/2}t^{1/2}}(1-tr^2)^{-1/2}\xi^{-ln}dt\\
&=\frac{1}{\pi}\int_0^1\frac{(1-tr^2)^{-1/2}}{(1-t)^{1/2}t^{1/2}}
[\sum_{n=0}^\infty(tr\xi)^{ln}+\sum_{n=1}^\infty(tr\xi^{-1})^{ln}]dt\\
&=\frac{1}{\pi}\int_0^1\frac{(1-t)^{-1/2}t^{-1/2}}{(1-tr^2)^{1/2}}
\frac{1-(tr)^{2l}}{|1-(tr\xi)^l|^{2}}dt.
\end{aligned}$$
Thus the integral representation (\ref{potential3}) holds.

\end{proof}


\begin{lem}\label{lem1}
The potential $U$ of the test loop $\tilde{u}$
$$U(\tilde{u})\leq\max\{U(0,\frac{\pi}{2l}),U(\frac{\pi}{2l},0)\}.$$
\end{lem}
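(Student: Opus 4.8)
The plan is to read off two symmetries of the potential from its closed form (\ref{potential1}), namely $U(\varphi,\theta)=\frac{n}{4a\cos\varphi}\bigl(2C_l+\sum_{j=1}^l[\sin^2(\frac{j\pi}{l}-\theta)+\tan^2\varphi]^{-1/2}\bigr)$: it is even in $\varphi$ (both $\cos\varphi$ and $\tan^2\varphi$ are even), and in $\theta$ it is even and $\frac{\pi}{l}$-periodic. Since $n=2l$ we have $\frac{\pi}{n}=\frac{\pi}{2l}$, and unwinding the explicit parametrization of $\tilde u_0$ the generating particle runs successively along three coordinate arcs: (i) $\varphi\equiv-\frac{\pi}{2l}$, $\theta\in[0,\frac{\pi}{2l}]$; (ii) $\theta\equiv\frac{\pi}{2l}$, $\varphi\in[-\frac{\pi}{2l},\frac{\pi}{2l}]$; (iii) $\varphi\equiv\frac{\pi}{2l}$, $\theta\in[\frac{\pi}{2l},\frac{s\pi}{l}]$. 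Thus it suffices to bound $U$ on each of these three arcs.

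On the two arcs with $\varphi$ frozen at $\pm\frac{\pi}{2l}$ — arcs (i) and (iii) — I would use the integral representation (\ref{potential3}). For fixed $r=\frac{1-\sin\varphi}{1+\sin\varphi}$ the dependence on $\xi=e^{-2\theta\sqrt{-1}}$ enters only through $|1-(tr\xi)^l|^{-2}=\bigl(1-2(tr)^l\operatorname{Re}(\xi^l)+(tr)^{2l}\bigr)^{-1}$, which, since $0<(tr)^l<1$, is a strictly increasing function of $\operatorname{Re}(\xi^l)\in[-1,1]$ and so is maximal exactly when $\xi^l=1$, i.e.\ when $\theta\equiv0\bmod\frac{\pi}{l}$. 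Hence $U(\varphi,\theta)\le U(\varphi,0)$ for every $\theta$; combining this with the evenness in $\varphi$ (to replace $-\frac{\pi}{2l}$ by $\frac{\pi}{2l}$) gives $U(\tilde u)\le U(\frac{\pi}{2l},0)$ along arcs (i) and (iii), and in particular also $U(\frac{\pi}{2l},\frac{\pi}{2l})\le U(\frac{\pi}{2l},0)$, which is used below.

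On arc (ii) I fix $\theta=\frac{\pi}{2l}$ and, by evenness in $\varphi$, restrict to $\varphi\in[0,\frac{\pi}{2l}]$. Putting $x=\tan^2\varphi\in[0,\tan^2\frac{\pi}{2l}]$ and $\beta_j=\frac{(2j-1)\pi}{2l}$, formula (\ref{potential1}) reads $U=\frac{n}{4a}F(x)$ with $F(x)=\sqrt{1+x}\,\bigl(2C_l+\sum_{j=1}^l(\sin^2\beta_j+x)^{-1/2}\bigr)$, and a direct differentiation yields
$$F'(x)=\frac{1}{2\sqrt{1+x}}\Bigl(2C_l-\sum_{j=1}^l\frac{\cos^2\beta_j}{(\sin^2\beta_j+x)^{3/2}}\Bigr).$$
The bracket is a strictly increasing function of $x$ (each summand decreases in $x$, and not all $\cos^2\beta_j$ vanish), so $F'$ changes sign at most once and only from $-$ to $+$; consequently $F$ has no interior local maximum and attains its maximum on $[0,\tan^2\frac{\pi}{2l}]$ at an endpoint. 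This gives $U(\varphi,\frac{\pi}{2l})\le\max\{U(0,\frac{\pi}{2l}),\,U(\frac{\pi}{2l},\frac{\pi}{2l})\}$ on arc (ii), and together with $U(\frac{\pi}{2l},\frac{\pi}{2l})\le U(\frac{\pi}{2l},0)$ from the previous step we conclude $U(\tilde u)\le\max\{U(0,\frac{\pi}{2l}),U(\frac{\pi}{2l},0)\}$ on all three arcs, which is the assertion.

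I expect the middle arc to be the only real difficulty: everything there rests on the sign analysis of $F'$, i.e.\ on the monotonicity of $x\mapsto2C_l-\sum_j\cos^2\beta_j(\sin^2\beta_j+x)^{-3/2}$, and one must be careful that this does not force $\varphi=0$ to be the maximizer, only that the maximum over the arc is attained at one of its ends (or at $\varphi=0$) — which is precisely why both $U(0,\frac{\pi}{2l})$ and $U(\frac{\pi}{2l},0)$ appear in the statement. The frozen-$\varphi$ arcs are comparatively routine once the representation (\ref{potential3}) is available.
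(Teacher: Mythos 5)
Your proof is correct and follows essentially the same route as the paper: the $\theta$-monotonicity is obtained from the integral representation (\ref{potential3}) (the paper phrases it as $\partial U/\partial\theta<0$ on $(0,\tfrac{\pi}{2l})$, you as monotonicity in $\cos(2l\theta)$), and the endpoint maximum in $\varphi$ comes from the same observation that the bracket in $\partial U/\partial\varphi$ is increasing, so the derivative changes sign at most once from $-$ to $+$. Your explicit decomposition into three arcs and the use of evenness and $\tfrac{\pi}{l}$-periodicity in $\theta$ merely spell out bookkeeping the paper leaves implicit.
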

\begin{proof}
By (\ref{potential3})
$$U(r,\theta)=\frac{n(1+r)}{4a\sqrt{r}}
\Big[C_l+\frac{l\sqrt{r}}{\pi}\int_0^1\frac{(1-t)^{-1/2}t^{-1/2}}{(1-tr^2)^{1/2}}
\frac{1-(tr)^{2l}}{1+(tr)^{2l}-2(tr)^l\cos(2l\theta)}dt\Big],$$
which implies $\frac{\partial U}{\partial \theta}<0$ for $0<\theta<\frac{\pi}{2l}$, $0<\varphi<\frac{\pi}{2}$. Thus we get for given $0<\varphi<\frac{\pi}{2}$,
$$\sup_{\theta\in[0,\frac{\pi}{2l}]}U(\varphi,\theta)=U(\varphi,0).$$
From (\ref{potential1}), we have
$$\begin{aligned}
\frac{\partial U}{\partial \varphi}
&=\frac{n}{2a}\frac{\sin\varphi}{\cos^2\varphi}
[C_l-\sum_{j=1}^l\frac{\cos^2(j\pi/l-\theta)}{(\sin^2(j\pi/l-\theta)+\tan^2\varphi)^{3/2}}]\\
&=\frac{n}{2a}\frac{\sin\varphi}{\cos^2\varphi}[f_\theta(\varphi)].
\end{aligned}$$
Obviously, $f_\theta(\varphi)$ has the same sign with $\frac{\partial U}{\partial \varphi}$ and $f_\theta(\varphi)$ is  monotonic increasing. Since $\lim_{\varphi\rightarrow\frac{\pi}{2}}f_\theta(\varphi)=C_l>0$, we see that for given $\theta\in(0,\frac{\pi}{2l})$, $\bar{\varphi}\in[0,\frac{\pi}{2}),$
$$\sup_{\varphi\in[0,\bar{\varphi}]}U(\varphi,\theta)=\max\{U(0,\theta),U(\bar{\varphi},\theta)\},$$
and the conclusion follows.
\end{proof}

\begin{lem}\label{cnestimate}\label{lem2}
For all $ n\in\mathbb{N}^+$, $C_n<\frac{n}{\pi}(\log{n}+\gamma)$, then
$U(0,\frac{\pi}{n})=\frac{n}{2a}C_n<\frac{n^2}{2a\pi}(\log{n}+\gamma)$.
\end{lem}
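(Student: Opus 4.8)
The plan is to reduce the whole statement to the single inequality $C_n<\frac{n}{\pi}(\log n+\gamma)$, where $C_n=\frac12\sum_{j=1}^{n-1}\csc\frac{j\pi}{n}$, since the displayed consequence then follows by a direct substitution into (\ref{potential1}). Indeed, putting $\varphi=0$, $\theta=\frac{\pi}{2l}$ in (\ref{potential1}) gives $U(0,\tfrac{\pi}{2l})=\frac{n}{4a}\big(2C_l+\sum_{j=1}^l\csc\frac{(2j-1)\pi}{2l}\big)$; the $l$ angles $\frac{(2j-1)\pi}{2l}$ are the odd multiples of $\frac{\pi}{2l}$, while $2C_l=\sum_{i=1}^{l-1}\csc\frac{2i\pi}{2l}$ supplies the even ones, so the two sums reassemble into $\sum_{k=1}^{2l-1}\csc\frac{k\pi}{2l}=2C_n$, whence $U(0,\frac{\pi}{n})=\frac{n}{2a}C_n$. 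Thus once $C_n<\frac{n}{\pi}(\log n+\gamma)$ is established, $U(0,\frac{\pi}{n})<\frac{n^2}{2a\pi}(\log n+\gamma)$ is immediate.

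To bound $C_n$ I would not use the naive termwise estimate $\csc x\le\frac{\pi}{2x}$, because summing it yields a coefficient $1$ in front of $n\log n$ rather than the required $\frac2\pi$; instead I would compare the sum to an integral. The function $\csc$ is strictly convex on $(0,\pi)$, since $(\csc x)''=\csc x\,(2\csc^2x-1)>0$ there, so the midpoint (left Hermite–Hadamard) inequality gives, for each $1\le j\le n-1$,
\[
\csc\frac{j\pi}{n}\le\frac{n}{\pi}\int_{(j-1/2)\pi/n}^{(j+1/2)\pi/n}\csc x\,dx .
\]
The intervals $[(j-\tfrac12)\tfrac{\pi}{n},(j+\tfrac12)\tfrac{\pi}{n}]$ for $j=1,\dots,n-1$ tile $[\frac{\pi}{2n},\pi-\frac{\pi}{2n}]\subset(0,\pi)$, so summing and using $\int\csc x\,dx=\log\tan\frac{x}{2}$ together with $\tan(\frac{\pi}{2}-\frac{\pi}{4n})=\cot\frac{\pi}{4n}$ yields $\sum_{j=1}^{n-1}\csc\frac{j\pi}{n}\le\frac{2n}{\pi}\log\cot\frac{\pi}{4n}$, i.e. $C_n\le\frac{n}{\pi}\log\cot\frac{\pi}{4n}$.

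It then remains to check $\log\cot\frac{\pi}{4n}<\log n+\gamma$, i.e. $\cot\frac{\pi}{4n}<n e^{\gamma}$. Here the elementary bound $\cot x<\frac1x$ on $(0,\frac{\pi}{2})$ (equivalent to $\tan x>x$) gives $\cot\frac{\pi}{4n}<\frac{4n}{\pi}$, and the numerical fact $\frac{4}{\pi}\approx1.273<1.781\approx e^{\gamma}$ (equivalently $\log\frac4\pi\approx0.242<\gamma$) closes the estimate with room to spare; the case $n=1$ is trivial since $C_1=0<\gamma/\pi$. This establishes $C_n<\frac{n}{\pi}(\log n+\gamma)$ and hence the lemma. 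The only step requiring any idea is the first one — realizing that the termwise Jordan-type bound is too lossy and that one must pass to the integral comparison to recover the sharp constant $\frac{2}{\pi}$; the evaluation of the integral and the final numerical check are routine.
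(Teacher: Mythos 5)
Your proof is correct, but it reaches the key inequality $C_n<\frac{n}{\pi}(\log n+\gamma)$ by a genuinely different route than the paper. The paper's proof is entirely termwise: it uses the elementary inequality $\pi<\frac{\sin\pi x}{x(1-x)}$ on $(0,1)$ to get $\csc\frac{j\pi}{n}<\frac{n}{\pi}\bigl(\frac{1}{j}+\frac{1}{n-j}\bigr)$, sums to obtain $C_n<\frac{n}{\pi}\sum_{j=1}^{n-1}\frac{1}{j}$, and then bounds the harmonic number by $\log n+\gamma$ via the two monotone sequences $a_n=H_n-\log n$ and $b_n=H_{n-1}-\log n$ squeezing $\gamma$. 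You instead exploit convexity of $\csc$ through the midpoint Hermite--Hadamard comparison, arriving at the closed form $C_n\le\frac{n}{\pi}\log\cot\frac{\pi}{4n}$, and then close with $\cot x<1/x$ and $\log\frac{4}{\pi}<\gamma$. Both arguments are sound and of comparable length; yours has the advantage that the intermediate bound $\frac{n}{\pi}\log\cot\frac{\pi}{4n}\le\frac{n}{\pi}\bigl(\log n+\log\frac{4}{\pi}\bigr)$ is noticeably sharper than $\frac{n}{\pi}(\log n+\gamma)$ and sits much closer to the Moeckel--Sim\'o asymptotic $C_n\sim\frac{n}{\pi}\bigl(\gamma+\log\frac{2n}{\pi}\bigr)$ quoted in the paper's remark (constant $\approx0.242$ versus the true $\approx0.126$, against the paper's $\approx0.577$), whereas the paper's argument is more elementary in that it needs no integration or convexity, only the sine inequality and the definition of $\gamma$. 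Your opening verification that $U(0,\frac{\pi}{n})=\frac{n}{2a}C_n$ --- interleaving the odd multiples $\csc\frac{(2j-1)\pi}{2l}$ with the even ones from $2C_l$ to reassemble $2C_n$ --- is also a welcome addition: the paper asserts this identity without proof, and your remark that the naive Jordan bound $\csc x\le\frac{\pi}{2x}$ loses the constant (giving $\frac12$ in place of $\frac1\pi$ in front of $n\log n$) correctly identifies why some sharper input is unavoidable.
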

\begin{proof}
Let $\gamma$ be the Euler-Mascheroni constant, that is
$$\gamma=\lim_{n\rightarrow\infty}(1+\frac{1}{2}+\dots+\frac{1}{n}-\log n).$$
Let $a_n=1+\frac{1}{2}+\dots+\frac{1}{n}-\log n$ and $b_n=1+\frac{1}{2}+\dots+\frac{1}{n-1}-\log n$, then
$$a_{n+1}-a_n=\frac{1}{n+1}-\log\frac{n+1}{n}<0,$$
$$b_{n+1}-b_n=\frac{1}{n}-\log\frac{n+1}{n}>0,$$
which implies
$$b_n<\gamma<a_n.$$
Thus we get
$$\sum_{j=1}^{n-1}\frac{1}{j}<\log n+\gamma.$$
We notice that, for $0<x<1$, we have
 $$\pi<\frac{\sin\pi x}{x(1-x)}\leq4,$$
which implies
\begin{equation*}\label{cscestimate}
\csc \frac{j}{n}\pi<\frac{1}{\pi}\frac{1}{\frac{j}{n}(1-\frac{j}{n})}
=\frac{n}{\pi}(\frac{1}{j}+\frac{1}{n-j}),
\end{equation*}
then
$$C_n=\frac{1}{2}\sum_{j=1}^{n-1}\csc \frac{j}{n}\pi<\frac{n}{\pi}\sum_{j=1}^{n-1}\frac{1}{j}
<\frac{n}{\pi}(\log n+\gamma),$$
and the conclusion immediately follows from (\ref{potential1}).



\end{proof}

\begin{remark}
There is an asymptotic expansion of $C_n$ in Lemma 1 of \cite{moeckel1995bifurcation} for large  $n$:
$$C_n\sim \frac{n}{\pi}(\gamma+\log\frac{2n}{\pi})+2\sum_{k\geq1}
\frac{(-1)^k(2^{2k-1}-1)B_{2k}^2\pi^{2k-1}}{(2k)(2k)!}\frac{1}{n^{2k-1}},$$
where  $B_{2k}$ stands for the Bernoulli numbers.
\end{remark}
\begin{lem} \label{lem3}
For $n=2l\geq8$, we have $U(\frac{\pi}{2l},0)<\frac{n^2}{2a\pi}(\log{n}+\gamma)$.
\end{lem}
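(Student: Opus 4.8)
The plan is to evaluate the explicit formula (\ref{potential1}) at the corner $(\varphi,\theta)=(\tfrac{\pi}{2l},0)$ and reduce the claim to an elementary inequality in $l$. Setting $\varphi=\tfrac{\pi}{2l}$ and $\theta=0$ in (\ref{potential1}) gives
\[
U\Big(\tfrac{\pi}{2l},0\Big)=\frac{n}{4a\cos\frac{\pi}{2l}}\Big(2C_l+\Sigma\Big),\qquad \Sigma:=\sum_{j=1}^{l}\Big[\sin^2\tfrac{j\pi}{l}+\tan^2\tfrac{\pi}{2l}\Big]^{-1/2},
\]
so everything reduces to a good upper bound for $2C_l+\Sigma$. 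Geometrically $U(\tfrac{\pi}{2l},0)$ is the potential of the right prism over two equal regular $l$-gons, and one expects $U(\tfrac{\pi}{2l},0)\le U(0,\tfrac{\pi}{2l})=\tfrac{n}{2a}C_n$ (the regular $2l$-gon of Lemma \ref{lem2}); but that comparison is not term-by-term, so I would estimate $\Sigma$ directly instead.

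First I would bound $\Sigma$. For $1\le j\le l-1$ drop the positive summand $\tan^2\tfrac{\pi}{2l}$, so $\big[\sin^2\tfrac{j\pi}{l}+\tan^2\tfrac{\pi}{2l}\big]^{-1/2}<\csc\tfrac{j\pi}{l}$ and hence $\sum_{j=1}^{l-1}\big[\sin^2\tfrac{j\pi}{l}+\tan^2\tfrac{\pi}{2l}\big]^{-1/2}<\sum_{j=1}^{l-1}\csc\tfrac{j\pi}{l}=2C_l$; the remaining term $(j=l)$ equals $\cot\tfrac{\pi}{2l}$, and since $\tan x>x$ on $(0,\tfrac{\pi}{2})$ we get $\cot\tfrac{\pi}{2l}<\tfrac{2l}{\pi}$. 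Thus $2C_l+\Sigma<4C_l+\tfrac{2l}{\pi}$, and Lemma \ref{lem2} applied with $n$ replaced by $l$ gives $4C_l<\tfrac{4l}{\pi}(\log l+\gamma)$, so, using $l=n/2$,
\[
U\Big(\tfrac{\pi}{2l},0\Big)<\frac{n}{4a\cos\frac{\pi}{2l}}\cdot\frac{2l}{\pi}\big(2\log l+2\gamma+1\big)=\frac{n^2}{4a\pi\cos\frac{\pi}{2l}}\big(2\log l+2\gamma+1\big).
\]

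Comparing with the target $\tfrac{n^2}{2a\pi}(\log n+\gamma)$, it then remains to check the elementary inequality $2\log l+2\gamma+1<2\cos\tfrac{\pi}{2l}\,(\log 2l+\gamma)$. Writing $\log 2l=\log 2+\log l$ and rearranging, this is $2\cos\tfrac{\pi}{2l}\log 2-2\big(1-\cos\tfrac{\pi}{2l}\big)(\log l+\gamma)>1$, and since $1-\cos\tfrac{\pi}{2l}<\tfrac{\pi^2}{8l^2}$ it suffices to have $2\log 2-\tfrac{\pi^2(\log n+\gamma)}{4l^2}>1$, i.e.
\[
\frac{\pi^2(\log n+\gamma)}{4l^2}<2\log 2-1 .
\]
The left side is decreasing in $l$; at $l=5$ it is $\approx0.284<0.386\approx2\log2-1$, so the inequality holds for all $l\ge5$ (i.e. $n\ge10$). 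For $l=4$ ($n=8$) the reduction is slightly too lossy, and I would instead evaluate directly: $2C_4+\Sigma=\big(2\sqrt2+1\big)+\big(2(\tfrac72-2\sqrt2)^{-1/2}+(4-2\sqrt2)^{-1/2}+\sqrt2+1\big)\approx9.61$, while $\tfrac{16\cos(\pi/8)}{\pi}(\log8+\gamma)\approx12.50$, so $U(\tfrac{\pi}{8},0)\approx\tfrac{20.8}{a}<\tfrac{27.1}{a}=\tfrac{n^2}{2a\pi}(\log n+\gamma)$.

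The only delicate point is the tightness of the final elementary inequality for small $l$: the two successive over-estimates (the bound $\sum_{j<l}\le 2C_l$ together with Lemma \ref{lem2}) nearly exhaust the available margin, and for $n=8$ they do not quite suffice — hence the direct check, or alternatively one can keep the sharper bound $\cot\tfrac{\pi}{2l}<\tfrac{24l^3}{\pi(12l^2+\pi^2)}$ coming from $\tan x>x+\tfrac{x^3}{3}$, which closes the gap at $l=4$ as well. For $l\ge5$ the margin is comfortable and grows with $l$, so no extra care is needed. Combined with Lemma \ref{lem1} and Lemma \ref{lem2}, this establishes the potential estimate (\ref{potential estimate}).
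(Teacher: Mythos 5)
Your proof is correct and follows essentially the same route as the paper: evaluate \eqref{potential1} at $(\varphi,\theta)=(\tfrac{\pi}{2l},0)$, split off the $j=l$ term, bound the remaining sum by $2C_l$ by dropping $\tan^2\tfrac{\pi}{2l}$, and invoke Lemma \ref{lem2} for $C_l$ together with elementary trigonometric bounds. The only difference is that your cruder estimate $\cot\tfrac{\pi}{2l}<\tfrac{2l}{\pi}$ forces a separate (correct) direct check at $n=8$, whereas the paper keeps $\cot\tfrac{\pi}{n}=\tfrac{1}{\sin\tfrac{\pi}{n}}\cos\tfrac{\pi}{n}$ and uses the sharper bound $\sin\tfrac{\pi}{n}>\tfrac{\pi}{n}-\tfrac{\pi^3}{6n^3}$ so that a single numerical verification at $n=8$ covers all $n\ge 8$ at once.
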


%
%

\begin{proof}
From (\ref{potential1}), we have
$$\begin{aligned}
U(\frac{\pi}{n},0)
&=\frac{n}{4a\cos\frac{\pi}{n}}\{2C_l+\sum_{j=1}^l
[\sin^2(\frac{j\pi}{l})+\tan^2\frac{\pi}{n}]^{-\frac{1}{2}}\}\\
&=\frac{n}{4a\cos\frac{\pi}{n}}\{2C_l+\sum_{j=1}^{l-1}
[\sin^2(\frac{j\pi}{l})+\tan^2\frac{\pi}{n}]^{-\frac{1}{2}}+\cot\frac{\pi}{n}\} \\
&<\frac{n}{4a\cos\frac{\pi}{n}}(4C_l+\cot\frac{\pi}{n}) =
\frac{n}{4a}(\frac{4C_l}{\cos\frac{\pi}{n}}+\frac{1}{\sin\frac{\pi}{n}})\\
&<\frac{n}{4a}(\frac{4C_l}{1-\frac{\pi^2}{2n^2}}+\frac{1}{\frac{\pi}{n}-\frac{\pi^3}{6n^3}}),
\end{aligned}$$

Thus by Lemma \ref{cnestimate} we have
$$\begin{aligned}
U(\frac{\pi}{n},0)
&<\frac{n^2}{2a\pi}[\frac{\log \frac{n}{2}+\gamma}{1-\frac{\pi^2}{2n^2}}+\frac{1}{2}\frac{1}{1-\frac{\pi^2}{6n^2}}]\\
&=\frac{n^2}{2a\pi}[(1+\frac{\frac{\pi^2}{2n^2}}{1-\frac{\pi^2}{2n^2}})(\log \frac{n}{2}+\gamma)+\frac{1}{2}\frac{1}{1-\frac{\pi^2}{6n^2}}]\\
&=\frac{n^2}{2a\pi}[\log \frac{n}{2}+\gamma+\frac{\pi^2}{2n^2-\pi^2}(\log \frac{n}{2}+\gamma)+\frac{1}{2}\frac{1}{1-\frac{\pi^2}{6n^2}}],
\end{aligned}$$

since $n\geq8$ and $\gamma\approx0.57721566490153286$,
$$\begin{aligned}
U(\frac{\pi}{n},0)
&<\frac{n^2}{2a\pi}[\log n+\gamma-\log2+\frac{\pi^2}{128-\pi^2}(\log4+\gamma)+
\frac{1}{2}\frac{1}{1-\frac{\pi^2}{6\times64}}]\\
&<\frac{n^2}{2a\pi}(\log n+\gamma).
\end{aligned}$$
\end{proof}
Then the estimate (\ref{potential estimate}) holds from Lemma \ref{lem1},  \ref{lem2} and \ref{lem3}.
\begin{remark}
To finish the proof of Remark \ref{8}, we only need to prove $\mathcal{A}(\tilde{u})<\mathcal{B}$ for $n=8$ and $s=h=2$. Throughout this remark, we keep in mind that $n=8$ and  $\mathcal{B}=42\times2^{1/3}\pi^{2/3}T^{1/3}$. It is  direct computation that $$U(\frac{\pi}{8},0)<U(0,\frac{\pi}{8})=\frac{4}{a}C_8=\frac{2}{a}\sum_{j=1}^7\csc \frac{j\pi}{8}<\frac{24}{a},$$
then by Lemma \ref{cnestimate}, the potential of the the test loop $U(\tilde{u})<\frac{24}{a}$. Similar to the proof of Proposition \ref{upperbound}, we have
$$\mathcal{A}(\tilde{u})=\int_0^T[K+U ]dt<\int_0^T[\frac{36a^2\pi^2}{T^2}+\frac{24}{a}] dt,$$
we choose $a=3^{-1/3}\pi^{-2/3}T^{2/3}$, then
$\mathcal{A}(\tilde{u})<36\times3^{1/3}\pi^{2/3}T^{1/3}<\mathcal{B}.$

\end{remark}

\subsection{Partial Collision}\label{secpc}
In this section we prove the following theorem and the idea  is mainly from Fusco et al.\cite{fusco2011platonic}.
\begin{thrm}
A minimizer $u_*$ of $\mathcal{A}|_{\mathcal{K}_s}$ is free of partial collisions.
\end{thrm}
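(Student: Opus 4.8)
The plan is to adapt the standard blow-up / local-deformation argument (Ferrario–Terracini, Fusco–Gronchi–Negrini) to our dihedral setting, exploiting the rotating-circle property that the symmetry group $G_s$ enjoys. Suppose, for contradiction, that a minimizer $u_*\in\overline{\mathcal K}_s$ has a partial collision at some instant $t_c$: a proper, nonempty subset of the $n$ particles collides while the configuration is not a total collision. By the symmetry relations \eqref{ca} and \eqref{cb} and Remark \ref{foundamental}, it suffices to analyze the collision along the generating trajectory $u_0$ restricted to the fundamental domain $\mathbb I=[0,\frac{T}{2h}]$, and to distinguish whether $t_c$ lies in the interior of $\mathbb I$ or at one of its endpoints $0,\frac{T}{2h}$ (where the extra reflections $\hat R_0$, $\hat R_s$ act).

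First I would invoke the standard asymptotic analysis of isolated collisions: near $t_c$ the colliding cluster, after rescaling, converges to a parabolic or homothetic motion of a central configuration of the colliding sub-cluster, and the action over a short interval $[t_c-\varepsilon,t_c+\varepsilon]$ is, to leading order, bounded below by the corresponding Sundman–Sperron / Gordon-type quantity. The key algebraic input is that the symmetry group acting on the colliding cluster has the \emph{rotating circle property}: one can find a circle (in the plane orthogonal to a rotation axis of $D_l$, or using the reflection planes $P_k$) along which the colliding particles can be ``opened up'' by an equivariant local variation. Concretely, I would replace $u_*$ on a small interval around $t_c$ by a path that detours the collision along an arc on such a circle of small radius $\delta$, keeping the modification inside $\mathcal K_s$ (this is where the cone condition $u_0(0)\in P_0^-$, $u_0(\frac{T}{2h})\in P_s^+$ must be checked to be preserved — the deformation near an interior collision does not touch the endpoints, and near an endpoint collision one uses the reflection symmetry to build a symmetric detour).

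The core estimate is then the comparison: the collision path costs at least $\tfrac32(2\pi)^{2/3}a^{2/3}(2\varepsilon)^{1/3}$-type energy by Lemma \ref{gordan}/the Sundman estimate with the \emph{collision} central-configuration constant, whereas the deformed path costs the \emph{minimum} over the relevant class of normalized loops, which is strictly smaller because the averaged normalized potential of the colliding cluster is not minimized by the collision configuration once we are allowed to rotate — this is precisely Marchal's averaging argument: the average of $U$ over rotations of one particle about a circle is strictly less than its value at the collision. One must check that the colliding cluster is not forced by the symmetry to be a ``rigid'' configuration with no available rotating circle; here the hypothesis $s\le l/2$ and the explicit structure of $G_s$ (the only fixed vectors in $\mathcal X_{G_s}$ being multiples of $(e_3,\dots,e_3,-e_3,\dots,-e_3)$, which corresponds exactly to the excluded total-collision direction, not to partial ones) guarantees a nontrivial circle exists for every proper colliding subset.

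The main obstacle I anticipate is the case analysis at the \emph{endpoints} of $\mathbb I$, i.e. collisions occurring on the reflection planes $P_0$ or $P_s$ at $t=0$ or $t=\frac{T}{2h}$: there the local deformation must itself be equivariant under the isotropy subgroup generated by $\hat R_0$ (resp. $\hat R_s$), so the ``detour circle'' has to be chosen invariant (or anti-invariant) under that reflection, and one must verify that such a circle still yields a strict decrease of the averaged potential and that the detoured path still satisfies $u_0(0)\in P_0^-$ (resp. $u_0(\frac{T}{2h})\in P_s^+$) rather than landing on the boundary plane. I would handle this exactly as in Section 6 of \cite{fusco2011platonic}: classify the isotropy of each boundary collision configuration, exhibit in each case a rotating circle compatible with the isotropy, and carry out the deformation. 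The interior-collision case is then the easier, unconstrained version of the same computation. Once every case produces a competitor $\tilde u\in\mathcal K_s$ with $\mathcal A(\tilde u)<\mathcal A(u_*)$, the contradiction with minimality shows $u_*$ has no partial collisions, and combined with the previous subsection $u_*$ is collision-free, completing the proof of Theorem \ref{main}.
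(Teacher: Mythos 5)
Your overall strategy (blow-up plus equivariant local deformation) matches the paper's for interior collision times and for the $\xi_3$-axis collision, but there is a genuine gap at the decisive point: the binary collisions on the $\xi_1\xi_2$-plane occurring at the endpoints $t=0$ or $t=\frac{T}{2h}$ of the fundamental domain, in the degenerate case $n^+=n^-=e_3$ (the collision of type $(\rightrightarrows)$). You propose to ``exhibit in each case a rotating circle compatible with the isotropy and carry out the deformation,'' but no such deformation exists there. The topological constraint $u_0(0)\in P_0^-$ confines the perturbed endpoint to the half-plane $\xi_3<0$ of $P_0$, so only the \emph{indirect} Keplerian arc is admissible as a local competitor, and by Lemma \ref{keplerarc} the indirect arc ceases to exist precisely when $n^+\cdot n^-=1$. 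Marchal-type averaging fails for the same reason: only half of any isotropy-invariant circle lies in the admissible cone, and for ejection direction $e_3$ the action-decreasing perturbations are exactly the direct-arc ones that push $u_0(0)$ into $P_0^+$ and out of $\overline{\mathcal K}_s$. So the local method you describe cannot close this case, and your remark that the fixed-point space $\mathcal X_{G_s}$ only obstructs total collisions addresses coercivity, not this obstruction, which comes from the cone $\mathcal K_s$ rather than from the group.

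The paper resolves the residual case with a different, non-local argument: by Lemma \ref{123} (Corollary 5.1 of \cite{fusco2011platonic}), a maximal ejection solution with $n^+=n^-$ whose plane $\pi_{r,n}$ is fixed by a reflection of the group must remain in that plane on its whole interval of regularity; applied on $(0,\frac{T}{2h})$ this forces $u_0(t)\in P_0$ for all such $t$, contradicting $u_0(\frac{T}{2h})\in P_s$ with $P_s\neq P_0$. This use of the topological constraint at the \emph{other} endpoint to kill the $(\rightrightarrows)$ collision is the essential ingredient missing from your proposal. A secondary inaccuracy: the Gordon/Sundman level estimate you invoke is the tool for total collisions; the partial-collision argument here is a pure local-deformation comparison (direct arc for the $\xi_3$-axis case, indirect arc for the planar case) and uses no level estimate.
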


The proof is by contradiction. Suppose $u_*$ is a minimizer of $\mathcal{A}|_{\mathcal{K}_s}$ with partial collisions at time $t_c\in[0,T]$, then we prove that there is  some local deformation which has lower action and so the minimizer is collision free. With the following lemma, we can only consider $u_*$ with partial collisions at time $t_c$ and without  collisions in the time interval $[t_c-\tau,t_c)\bigcup(t_c,t_c+\tau]$.

\begin{lem}
Suppose $u_*$ is a minimizer of $\mathcal{A}|_{\mathcal{K}_s}$ with partial collisions at time $t_c$, then the collision is isolated.
\end{lem}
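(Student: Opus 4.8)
The plan is to argue by contradiction using the variational nature of $u_*$. Suppose the collision at $t_c$ is not isolated, so there is a sequence of collision times $t_k \to t_c$ with $t_k \neq t_c$. After passing to a subsequence we may assume $t_k$ is monotone, say $t_k \downarrow t_c$ (the other case is symmetric). The key observation is that a collision configuration in our setting is rigidly constrained: by \eqref{ca} a collision of the full system occurs exactly when the generating particle $u_0$ meets the axis set $\Gamma = \xi_3 \cup \{L_k\}$, and the set of colliding indices together with the axis involved is locally constant along a non-collision arc. First I would use the fact, established earlier, that $u_*$ restricted to any subinterval free of collisions is a genuine solution of the Newtonian equations on that subinterval (Palais symmetric criticality plus the absence of collisions there), hence real-analytic. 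Thus on each interval $(t_{k+1}, t_k)$ the trajectory $u_0$ is an analytic curve.

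Next I would exploit conservation of energy. Between consecutive collisions $u_*$ is a classical solution, so the total energy $E = K - U$ (with the sign convention matching $\mathcal{L}=K+U$, i.e. $E = K(\dot u) - U(u)$ is the constant of motion, being careful with signs as in the paper's convention) is constant on $(t_{k+1},t_k)$; by continuity of $u_*$ in $H^1$ and the standard fact that the energy is the same constant across a (finite-action) collision — or more simply, since $u_* \in H^1$ makes $K \in L^1$ and $U$ blows up at each $t_k$ — we reach a contradiction with $\mathcal{A}(u_*) < +\infty$: indeed if collisions accumulate at $t_c$, then near $t_c$ we have infinitely many times where $U(u_*(t)) = +\infty$, and Gordon-type lower estimates (Lemma \ref{gordan}) applied on each small interval $[t_{k+1},t_k]$, where the colliding cluster returns to a collision, force a definite amount of action $\int_{t_{k+1}}^{t_k}(\tfrac12|\dot x|^2 + \tfrac{a}{|x|})\,dt \geq c\,(t_k - t_{k+1})^{1/3}$ on the relative coordinate $x$ of a colliding pair. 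Summing over $k$ and using $\sum (t_k - t_{k+1}) < \infty$ does not immediately diverge, so instead I would argue that the \emph{number} of such intervals is infinite while each contributes at least a fixed positive amount: after a collision the colliding particles must separate and reapproach, and because the symmetry group $G_s$ and the topological cone constraint \eqref{cone} are fixed, a reapproach to the \emph{same} axis after a short time forces $|\dot u_0|$ to be large on that interval, so $\int_{t_{k+1}}^{t_k}\tfrac12|\dot u_0|^2\,dt$ is bounded below independently of $k$ once $t_k - t_{k+1}$ is small — contradicting $\sum_k \int_{t_{k+1}}^{t_k}\tfrac12|\dot u_0|^2\,dt \leq \mathcal{A}(u_*) < \infty$.

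The cleanest route, which I would ultimately adopt, avoids delicate summation: on a small neighborhood $(t_c-\delta, t_c+\delta)$ consider the colliding cluster's relative coordinate $x(t)$. Where $x$ is a solution it satisfies energy conservation, so $\tfrac12|\dot x|^2 = E_{\mathrm{cl}} + \tfrac{a}{|x|} + (\text{bounded})$ with $a>0$; hence whenever $|x(t)|$ is small, $|\dot x(t)|$ is large, of order $|x(t)|^{-1/2}$. If $x(t_k)=0$ for $t_k \to t_c$, then between two consecutive zeros $x$ must leave and return to $0$, so by the mean value theorem $|\dot x|$ vanishes somewhere in $(t_{k+1},t_k)$, but energy conservation forces $|\dot x|^2 \geq -2E_{\mathrm{cl}} + (\text{small})$ wherever $|x|$ is not small, and forces $|x|$ to stay small on the whole tiny interval; a standard comparison (the Sundman–Sperling type estimate $|x(t)| \lesssim (t-t_k)^{2/3}$ near a collision, valid for the true solution) then shows $x$ cannot return to $0$ in a time shorter than a fixed amount — i.e. collisions of a fixed cluster are separated by at least a fixed gap, so they cannot accumulate. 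I expect the main obstacle to be the bookkeeping of \emph{which} particles collide and verifying the colliding cluster and the relevant energy constant are locally constant across $t_c$ (so that the Sundman–Sperling asymptotics apply uniformly from both sides); this requires noting that the colliding cluster at $t_c$ is determined by which subset of $\{u_0(t_c)\} \cup \Gamma$-incidences occurs, and that on each one-sided punctured neighborhood $u_*$ is an analytic solution to which the classical local theory of collisions applies verbatim.
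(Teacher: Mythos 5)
First, note that the paper does not actually prove this lemma: it simply observes that the statement is Corollary 5.12 of \cite{ferrario2004existence} and omits the proof, so you are attempting something the authors deliberately outsourced. Your sketch assembles the right ingredients (classical regularity off the collision set, conservation of energy, a quantitative ``no quick return'' estimate for the colliding cluster), but as written it has concrete gaps. The most visible one is the mean value theorem step: between two consecutive zeros of the relative coordinate $x$ it is the radial derivative $\frac{d}{dt}|x(t)|$ that vanishes at the apex of the excursion, not $|\dot x(t)|$; indeed energy conservation forces $|\dot x|$ to be \emph{large} (of order $|x|^{-1/2}$) throughout a small excursion, so no contradiction arises from that direction and the surrounding reasoning does not recover. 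Your second paragraph honestly flags its own attempts as inconclusive (the Gordon sum indeed need not diverge), but the replacement claim --- that each inter-collision interval contributes a fixed positive amount of kinetic action --- is asserted rather than proved, and it is false without a uniform energy constant: Kepler ejection--collision loops of energy $h\to-\infty$ have period and action per loop both tending to $0$.

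Two points therefore carry the whole proof and are missing. (a) The energy $K-U=h$ must be shown to be the \emph{same} constant on every component of the complement of the collision set; this does not follow from continuity (the velocity need not extend continuously across a collision) but from the time-reparametrization (inner) variation argument for minimizers, which is precisely the content you would have to import from \cite{ferrario2004existence}. (b) The Sundman--Sperling asymptotics you invoke (Lemma \ref{bu1} here, Proposition 6.25 of \cite{ferrario2004existence}) are derived under the hypothesis that the collision is already isolated, so using them ``at each $t_k$'' to prove isolation is circular unless you rerun the analysis on each complementary interval with constants uniform in $k$; you gesture at this difficulty in your last sentence but do not resolve it. Once (a) is in hand, a cleaner route avoids the asymptotics entirely: the cluster moment of inertia $I_{\mathbf{k}}$ satisfies the Lagrange--Jacobi identity $\ddot I_{\mathbf{k}}=2U_{\mathbf{k}}+4h+O(1)$ on each complementary interval; since $\max I_{\mathbf{k}}\to 0$ on complementary intervals shrinking to $t_c$ (by Cauchy--Schwarz against the finite kinetic action), $U_{\mathbf{k}}$ is uniformly large there, so $I_{\mathbf{k}}$ is strictly convex, nonnegative, and vanishes at both endpoints of any such interval --- impossible unless the interval is degenerate. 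Note finally that the collision set near $t_c$ need not be a monotone sequence (a priori it is only closed and of measure zero), so ``consecutive zeros'' should be replaced by endpoints of complementary intervals throughout.
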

\begin{proof}
A collision is called isolated at $t_c$ means it is an isolated point in the set of collision times and the lemma is just the Corollary 5.12 in \cite{ferrario2004existence}, so we omit the proof.
\end{proof}

As mentioned in Remark \ref{foundamental}, it is enough to consider $t_c\in\mathbb{I}=[0,\frac{T}{2h}]$. Moreover we see that $u_*\in\Lambda_s$ has the symmetry (\ref{ca}), it is obvious that the collision must happen at $\Gamma\setminus0$. So in the following, we  only discuss partial collisions in two situations: 1. colliding at $\xi_3$-axis and it is two regular $l$-polygonal collisions (Figure 3); 2. colliding in the $\xi_1\xi_2$-plane and it is $l$ binary collisions (Figure   4).

\begin{figure}
\tdplotsetmaincoords{60}{110}
\begin{tikzpicture}[scale=3,tdplot_main_coords]
\draw[thick,<-] (-1.5,0,0) node[left]{$\xi_2$} -- (1.5,0,0);
\draw[thick,->] (0,-1.5,0) -- (0,1.5,0) node[anchor=north west]{$\xi_1$};
\draw[thick,->] (0,0,-1) -- (0,0,1) node[anchor=south]{$\xi_3$};

\filldraw [gray] (0,0.1,0.9) circle (0.5pt)
                 (0,-0.1,0.9) circle (0.5pt)
                 (0.0866,-0.05,0.9) circle (0.5pt)
                 (-0.0866,0.05,0.9) circle (0.5pt)
                 (-0.0866,-0.05,0.9) circle (0.5pt)
                 (0.0866,0.05,0.9) circle (0.5pt);
\draw[very thin] (0,0.1,0.9) -- (-0.0866,0.05,0.9)-- (-0.0866,-0.05,0.9)--(0,-0.1,0.9)--(0.0866,-0.05,0.9)--(0.0866,0.05,0.9)--(0,0.1,0.9);

\filldraw [gray] (0,0.1,-0.9) circle (0.5pt)
                 (0,-0.1,-0.9) circle (0.5pt)
                 (0.0866,-0.05,-0.9) circle (0.5pt)
                 (-0.0866,0.05,-0.9) circle (0.5pt)
                 (-0.0866,-0.05,-0.9) circle (0.5pt)
                 (0.0866,0.05,-0.9) circle (0.5pt);
\draw[very thin] (0,0.1,-0.9) -- (-0.0866,0.05,-0.9)-- (-0.0866,-0.05,-0.9)--(0,-0.1,-0.9)--(0.0866,-0.05,-0.9)--(0.0866,0.05,-0.9)--(0,0.1,-0.9);
\end{tikzpicture}
\caption{}
\end{figure}


\begin{figure}
\tdplotsetmaincoords{60}{110}
\begin{tikzpicture}[scale=2,tdplot_main_coords]
\draw[thick,<-] (-1.5,0,0) node[left]{$\xi_2$} -- (1.5,0,0);
\draw[thick,->] (0,-1.5,0) -- (0,1.5,0) node[anchor=north west]{$\xi_1$};
\draw[thick,->] (0,0,-1) -- (0,0,1) node[anchor=south]{$\xi_3$};

\filldraw [gray] (0,1,0.1) circle (0.5pt)
                 (0,-1,0.1) circle (0.5pt)
                 (0.866,-0.5,0.1) circle (0.5pt)
                 (-0.866,0.5,0.1) circle (0.5pt)
                 (-0.866,-0.5,0.1) circle (0.5pt)
                 (0.866,0.5,0.1) circle (0.5pt);
\draw[very thin] (0,1,0.1) -- (-0.866,0.5,0.1)-- (-0.866,-0.5,0.1)--(0,-1,0.1)--(0.866,-0.5,0.1)--(0.866,0.5,0.1)--(0,1,0.1);

\filldraw [gray] (0,1,-0.1) circle (0.5pt)
                 (0,-1,-0.1) circle (0.5pt)
                 (0.866,-0.5,-0.1) circle (0.5pt)
                 (-0.866,0.5,-0.1) circle (0.5pt)
                 (-0.866,-0.5,-0.1) circle (0.5pt)
                 (0.866,0.5,-0.1) circle (0.5pt);
\draw[very thin] (0,1,-0.1) -- (-0.866,0.5,-0.1)-- (-0.866,-0.5,-0.1)--(0,-1,-0.1)--(0.866,-0.5,-0.1)--(0.866,0.5,-0.1)--(0,1,-0.1);

\draw[very thin,->] (0,1,-0.1) -- (0,1,0);\draw[->] (0,1,0.1)--(0,1,0);
\draw[very thin,->] (0,-1,-0.1) -- (0,-1,0);\draw[->] (0,-1,0.1)--(0,-1,0);
\draw[very thin,->] (-0.866,0.5,-0.1) -- (-0.866,0.5,0);\draw[->] (-0.866,0.5,0.1)--(-0.866,0.5,0);
\draw[very thin,->] (0.866,-0.5,-0.1) -- (0.866,-0.5,0);\draw[->] (0.866,-0.5,0.1)--(0.866,-0.5,0);
\draw[very thin,->] (0.866,0.5,-0.1) -- (0.866,0.5,0);\draw[->] (0.866,0.5,0.1)--(0.866,0.5,0);
\draw[very thin,->] (-0.866,-0.5,-0.1) -- (-0.866,-0.5,0);\draw[->] (-0.866,-0.5,0.1)--(-0.866,-0.5,0);
\draw[very thin,->] (-0.866,0.5,-0.1) -- (-0.866,0.5,0);\draw[->] (-0.866,0.5,0.1)--(-0.866,0.5,0);
\end{tikzpicture}
\caption{}
\end{figure}
We notice that $u_*\in \mathcal{K}_s$ implies that there are both symmetry and topological constraints on $u_0(t)$ at time $t=0,\frac{T}{2h}$.
 For example, for $\epsilon$ small, $\forall t\in(0,\epsilon)$, $u_0(t)=\tilde{R}_lu_0(-t)$ and $u_0(0)\in P_0^-$.
 This doesn't allow general perturbations since $u_0(0)$ is not in the whole plane $P_0$, for this reason we can not use Marchal's idea of averaging the action over sphere or its extension in \cite{ferrario2004existence} for averaging over suitable circles (rotating circle property). But for $t\in(0,\frac{T}{2h})$, such technique works, so   we only put our focus on the partial collision at time $t=0$ (and the case $t=\frac{T}{2h}$ is similar).

Let $\mathbf{k}\subset\mathbf{n}$ be the colliding cluster,
and $q(t)=(q_j(t))_{j\in\mathbf{k}}$ are the trajectories of the colliding cluster.
Denote the partial Lagrangian $\mathcal{L}_\mathbf{k}\big(q(t)\big)=K_\mathbf{k}+U_\mathbf{k}$, where   $K_\mathbf{k}=\sum_{j\in\mathbf{k}}\frac{1}{2}m_j|\dot{q}_j|^2$ is the partial kinetic and $U_\mathbf{k}=\sum_{i,j\in\mathbf{k},i<j}\frac{m_im_j}{|x_i-x_j| }$ is the partial potential function .
Thanks to the blow-up technique  (Lemma \ref{bu1}-\ref{bu3}), it is enough to consider a parabolic collision-ejection orbit
$\bar{q}(t)=(\kappa t)^\frac{2}{3}\bar{s}$ instead of $q(t)$, where $\bar{s}$ is a normalized central configurations. 
Also since the ``blow-up" sends all other bodies not concerned by the collision cluster $\mathbf{k}$ to infinity, we can only do deformation inside $\mathbf{k}$ (for more detail, see Section 7 in \cite{ferrario2004existence}).

\begin{lem}\label{bu1}(Proposition 6.25 in \cite{ferrario2004existence})
Let $\mathbf{k}$ be a colliding cluster, $I_k=\sum_{j\in\mathbf{k}}m_iq_i^2$: Then there is $\kappa>0$ such that the following asymptotic estimate hold:
$$\begin{aligned}
I_\mathbf{k}&\sim (\kappa t)^\frac{4}{3}\\
\dot{I}_\mathbf{k}&\sim \frac{4}{3}\kappa(\kappa t)^\frac{1}{3}\\
\ddot{I}_\mathbf{k}&\sim  \frac{4}{9}\kappa^2(\kappa t)^\frac{-2}{3}.
\end{aligned}$$
and therefore
$$K_\mathbf{k}\sim U_\mathbf{k}\sim \frac{1}{2}\ddot{I}_\mathbf{k}\sim \frac{2}{9}\kappa^2(\kappa t)^\frac{-2 }{3}.$$
\end{lem}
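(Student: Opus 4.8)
The statement is the classical description of the asymptotics of a cluster collision, in the spirit of Sundman and Sperko, and the plan is to reproduce that analysis while checking that the forces exerted on the cluster by the non-colliding bodies do not interfere. First I would pass to the center-of-mass frame of the colliding cluster $\mathbf{k}$ and translate time so that the collision occurs at $t=0$, so $q_j(0)=0$ for all $j\in\mathbf{k}$ and $I_\mathbf{k}(t)=\sum_{j\in\mathbf{k}}m_j|q_j(t)|^2\to 0$. Since the remaining bodies stay a bounded distance away near the collision time, the external forces $F_j^{\mathrm{ext}}$ acting on $\mathbf{k}$ are uniformly bounded while the internal potential $U_\mathbf{k}$ blows up. Differentiating $I_\mathbf{k}$ twice and using the Euler identity $\sum_{j\in\mathbf{k}}q_j\cdot\partial_{q_j}U_\mathbf{k}=-U_\mathbf{k}$ for the degree $-1$ homogeneous $U_\mathbf{k}$ gives the Lagrange--Jacobi identity $\ddot I_\mathbf{k}=4K_\mathbf{k}-2U_\mathbf{k}+2\sum_{j\in\mathbf{k}}q_j\cdot F_j^{\mathrm{ext}}$, whose last term is $O(I_\mathbf{k}^{1/2})=o(1)$.

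Next I would show that the cluster energy $E_\mathbf{k}:=K_\mathbf{k}-U_\mathbf{k}$ stays bounded near $t=0$: since $u_*$ has finite action, $\dot q_j\in L^2$ near $t=0$, hence $\dot q_j\in L^1$, and $\dot E_\mathbf{k}=\sum_{j\in\mathbf{k}}\dot q_j\cdot F_j^{\mathrm{ext}}\in L^1$, so $E_\mathbf{k}$ extends continuously to $t=0$. Combining $U_\mathbf{k}=K_\mathbf{k}+O(1)$ with Lagrange--Jacobi yields $\ddot I_\mathbf{k}=2U_\mathbf{k}+O(1)=2K_\mathbf{k}+O(1)$, and since $U_\mathbf{k}\to+\infty$ this already gives $K_\mathbf{k}\sim U_\mathbf{k}\sim\tfrac{1}{2}\ddot I_\mathbf{k}$, the last block of relations in the statement.

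To extract the power law I would invoke the fact that a cluster-collision orbit is asymptotic to a parabolic homothetic motion along a central configuration; concretely, the scale-invariant quantity $U_\mathbf{k}\sqrt{I_\mathbf{k}}$ tends to a positive constant $\mu=U(\bar s)\sqrt{I(\bar s)}$, the value at the limiting normalized central configuration $\bar s$. Then $\ddot I_\mathbf{k}\sim 2U_\mathbf{k}\sim 2\mu\,I_\mathbf{k}^{-1/2}$; since $\ddot I_\mathbf{k}>0$ for small $t$, $\dot I_\mathbf{k}$ is eventually monotone and a Sundman/l'Hôpital-type comparison upgrades this asymptotic ODE to $I_\mathbf{k}\sim(\kappa t)^{4/3}$ with $\kappa$ fixed by matching, $\kappa^2=\tfrac{9}{2}\mu$. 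The same monotonicity of $\dot I_\mathbf{k}$ and $\ddot I_\mathbf{k}$ justifies differentiating the equivalence, producing $\dot I_\mathbf{k}\sim\tfrac{4}{3}\kappa(\kappa t)^{1/3}$ and $\ddot I_\mathbf{k}\sim\tfrac{4}{9}\kappa^2(\kappa t)^{-2/3}$, whence $K_\mathbf{k}\sim U_\mathbf{k}\sim\tfrac{1}{2}\ddot I_\mathbf{k}\sim\tfrac{2}{9}\kappa^2(\kappa t)^{-2/3}$.

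The hard part is the input to the previous paragraph: the convergence of the normalized configuration of $\mathbf{k}$ to the set of central configurations and of $U_\mathbf{k}\sqrt{I_\mathbf{k}}$ to $\mu$, which is the deepest classical ingredient, together with the legitimacy of integrating and then differentiating the asymptotic relations --- handled not by naive manipulation of $\sim$ but by exploiting the eventual monotonicity of $\dot I_\mathbf{k}$ and $\ddot I_\mathbf{k}$. Since all of this is precisely Proposition 6.25 of \cite{ferrario2004existence} (resting on Sundman's and Sperko's work), in the write-up I would ultimately cite it, having verified that the bounded external forces from the non-colliding bodies do not disturb any of the estimates.
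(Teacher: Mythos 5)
Your reconstruction is correct in outline and the constants check out ($\kappa^2=\tfrac{9}{2}\mu$ is consistent with $I_\mathbf{k}\sim(\kappa t)^{4/3}$ and $\ddot I_\mathbf{k}\sim 2\mu I_\mathbf{k}^{-1/2}$), but it is doing substantially more than the paper does: the paper offers no proof of this lemma at all, it is quoted verbatim as Proposition~6.25 of \cite{ferrario2004existence} and used as a black box in the blow-up argument for partial collisions. So the honest comparison is citation versus reconstruction. Your route is the classical Sundman/Lagrange--Jacobi analysis adapted to a cluster: the two points where the cluster setting genuinely differs from a total collision --- boundedness of the external forces near an isolated partial collision, and the resulting $L^1$ control of $\dot E_\mathbf{k}$ so that the cluster energy has a limit --- are exactly the ones you single out, and they are handled correctly. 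You also correctly identify the two nontrivial inputs that cannot be obtained by formal manipulation: the convergence of $U_\mathbf{k}\sqrt{I_\mathbf{k}}$ to a positive limit (which rests on the asymptotic central-configuration behaviour, i.e.\ Lemma~\ref{bu2} here / Sundman's theorem) and the Tauberian step needed to integrate and then differentiate the asymptotic ODE, which you justify via eventual monotonicity of $\dot I_\mathbf{k}$ and $\ddot I_\mathbf{k}$ (equivalently, by first deriving $\dot I_\mathbf{k}^2\sim 8\mu I_\mathbf{k}^{1/2}$ and integrating once more). Since you ultimately defer those to the cited proposition anyway, your write-up and the paper's end in the same place; the added value of your version is that it makes explicit why the non-colliding bodies do not disturb the estimates, which the paper leaves implicit.
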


Let $s=I_\mathbf{k}^{-\frac{1}{2}}q$ be the normalized configuration, we have

\begin{lem}\label{bu2}(Proposition 6.32 in \cite{ferrario2004existence})
For every converging sequence $s(t_j)$ of  normalized configuration.  The limit $\bar{s}=\lim_{j\rightarrow\infty}s(t_j)$ is a central configuration.
\end{lem}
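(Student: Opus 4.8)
This is classical --- it is the content of the Sundman--Sperron analysis of total collisions --- and is proved in full in \cite{ferrario2004existence}; here I only indicate the mechanism I would follow. Normalize time so that the cluster $\mathbf{k}$ collides at $t=0$ with every $q_j(t)$ tending to the (recentred) origin, and write $q=r\,s$ with $r=I_{\mathbf{k}}^{1/2}$, so that $s$ runs on the unit sphere $\{|s|=1\}$ of the configuration space of $\mathbf{k}$ (the masses being $1$, $\mathcal{M}$ is the identity) and $\langle s,\dot s\rangle=0$. Recall that a normalized $\bar s$ is a central configuration exactly when it is a critical point of $U_{\mathbf{k}}$ restricted to that sphere, i.e. $\nabla_{\!S}U_{\mathbf{k}}(\bar s)=0$ where $\nabla_{\!S}$ is the spherical gradient, equivalently $\nabla U_{\mathbf{k}}(\bar s)+U_{\mathbf{k}}(\bar s)\,\bar s=0$. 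The plan has three steps: confine the shape to a compact collision-free set; show the shape velocity vanishes; pass to the limit in the shape equation.

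For the first step I would use that $U_{\mathbf{k}}$ is homogeneous of degree $-1$, so $U_{\mathbf{k}}(s(t))=r(t)\,U_{\mathbf{k}}(q(t))$, which by Lemma \ref{bu1} is asymptotic to $(\kappa t)^{2/3}\cdot\frac{2}{9}\kappa^{2}(\kappa t)^{-2/3}=\frac{2}{9}\kappa^{2}$. Hence $U_{\mathbf{k}}(s(t))$ is bounded, and since $U_{\mathbf{k}}\to+\infty$ near every collision inside $\mathbf{k}$, the curve $\{s(t):0<t\le t_{0}\}$ stays in a compact subset of the collision-free normalized configurations; in particular every subsequential limit $\bar s$ is collision-free, and $U_{\mathbf{k}},\nabla U_{\mathbf{k}},\nabla_{\!S}U_{\mathbf{k}}$ are continuous near it.

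For the second step I would substitute $q=r\,s$ into $\ddot q=\nabla U_{\mathbf{k}}(q)$, split into radial and tangential parts, and rescale time by $d\tau=r^{-3/2}\,dt$ (so that $\tau\to-\infty$ as $t\to 0^{+}$); the tangential equation then takes the form $\frac{d^{2}s}{d\tau^{2}}=\nabla_{\!S}U_{\mathbf{k}}(s)-\alpha(\tau)\frac{ds}{d\tau}-\beta(\tau)\,s+\varepsilon(\tau)$ with $\alpha,\beta$ bounded and $\varepsilon(\tau)\to 0$. A consequence of the equations of motion together with Lemma \ref{bu1} --- and this is where the Sundman--Sperron estimates genuinely enter --- is that the shape velocity vanishes, $r^{2}|\dot s|^{2}\to 0$, and $\int^{\infty}\big|\frac{ds}{d\tau}\big|^{2}\,d\tau<\infty$; since $\frac{d^{2}s}{d\tau^{2}}$ is bounded on the compact set of the first step, a Barbalat-type argument then upgrades square-integrability to $\big|\frac{ds}{d\tau}\big|\to 0$. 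This is the only genuinely delicate step, because the leading-order asymptotics of Lemma \ref{bu1} alone do not give it --- they control $r^{2}|\dot s|^{2}$ only up to terms of the size of the cluster energy plus lower-order errors --- so here one invokes the total-collision analysis of \cite{ferrario2004existence}.

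For the last step I would fix $L>0$ and a sequence $t_{j}\to 0^{+}$ with $s(t_{j})\to\bar s$, with corresponding $\tau_{j}\to-\infty$. Since $\big|\frac{ds}{d\tau}\big|\to 0$, the shape is nearly constant on $[\tau_{j},\tau_{j}+L]$, so $s(\tau_{j}+\cdot)\to\bar s$ uniformly there; integrating the tangential equation over that interval gives $\frac{ds}{d\tau}(\tau_{j}+L)-\frac{ds}{d\tau}(\tau_{j})=L\,\nabla_{\!S}U_{\mathbf{k}}(\bar s)+c_{j}\,\bar s+o(1)$ with $c_{j}$ bounded scalars (the $\alpha$- and $\varepsilon$-contributions vanish, and the $\beta s$-contribution is along $\bar s$). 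The left-hand side tends to $0$, so comparing the components tangent to the sphere forces $\nabla_{\!S}U_{\mathbf{k}}(\bar s)=0$; pairing the resulting relation $\nabla U_{\mathbf{k}}(\bar s)+\mu\,\bar s=0$ with $\bar s$ and using Euler's identity $\langle\nabla U_{\mathbf{k}}(\bar s),\bar s\rangle=-U_{\mathbf{k}}(\bar s)$ fixes $\mu=U_{\mathbf{k}}(\bar s)>0$. Hence $\bar s$ is a normalized central configuration.
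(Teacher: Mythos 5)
The paper does not prove this lemma at all: it is imported verbatim as Proposition 6.32 of \cite{ferrario2004existence}, with no argument given, so there is no in-paper proof to compare against. Your three-step outline --- (i) boundedness of $U_{\mathbf{k}}(s(t))$ via homogeneity and Lemma \ref{bu1}, which confines the shape to a compact collision-free subset of the sphere; (ii) decay of the shape velocity in McGehee-rescaled time; (iii) integration of the tangential equation over windows of fixed rescaled length to force $\nabla_{\!S}U_{\mathbf{k}}(\bar s)=0$, with the multiplier pinned down by Euler's identity --- is exactly the standard Sundman/Wintner/McGehee route, and it is the route taken in \S 6 of \cite{ferrario2004existence} (the analysis you are thinking of is usually credited to Sundman and Sperling, not ``Sperron''). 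You are also right to flag (ii) as the only genuinely delicate point: the leading asymptotics $K_{\mathbf{k}}\sim U_{\mathbf{k}}\sim\frac{2}{9}\kappa^{2}(\kappa t)^{-2/3}$ and $\tfrac12\dot r^{2}\sim\frac{2}{9}\kappa^{2}(\kappa t)^{-2/3}$ only say that $r^{2}|\dot s|^{2}$ is $o(t^{-2/3})$, which is not enough; one needs the monotonicity/Lyapunov structure of the rescaled radial velocity to get $\int|ds/d\tau|^{2}\,d\tau<\infty$, and then Barbalat. Since you defer precisely that step to the reference, your sketch has the same logical status as the paper's own treatment (a citation), and nothing in it is wrong; the only cosmetic caveats are that the cluster's equation of motion carries a bounded external force from the bodies outside $\mathbf{k}$ (your $\varepsilon(\tau)$ term, which you do account for) and that the conclusion is, as stated, only about each converging subsequence --- distinct subsequences could a priori select distinct central configurations, which is all the lemma claims and all your argument delivers.
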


We say that $\bar{q}$ is a (right) blow-up of the solution $x(t)$ in $0$, if
$$\bar{q}=(\kappa t)^\frac{2}{3}\bar{s}.$$
For every $\lambda>0$ consider the path $x^\lambda$ defined by
$$x^\lambda(t)=\lambda^{-\frac{2}{3}}x(\lambda t)$$
for every $t\in [0,\lambda^{-1}\epsilon]$, we have

%
%

\begin{lem}\label{bu3}(Proposition 7.9 in \cite{ferrario2004existence})
Let $x(t$) be a solution in $(0, \epsilon)$, with an isolated collision in t = 0. Let $\mathbf{k}\subset\mathbf{n}$ be a colliding cluster and $\bar{q}$ a (right) blow-up of $x(t)$ with respect to $\mathbf{k}$ in $0$. Let $\tau\in(0,\epsilon)$ and let $\varphi$ be a variation of the particles in $\mathbf{k}$ which is $C^1$ in a neighborhood of $\tau$, defined and centered for every
$t\in[0, \tau]$. Then there exists a sequence $\psi_n\in H^1([0,\tau],\mathbb{R}^3)^k$ of centered paths converging uniformly to $0$ in $[0,\tau]$ and with support in $[0,\tau]$, has the following property:
$$\lim_{n\rightarrow\infty}\int_0^\tau[\mathcal{L}(x^{\lambda_n}+\varphi+\psi_n)-\mathcal{L}(x^{\lambda_n})]dt
=\int_0^\tau[\mathcal{L}(\bar{q}+\varphi)-\mathcal{L}(\bar{q})]dt.$$
\end{lem}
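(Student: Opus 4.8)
The plan is to follow the blow‑up argument of Ferrario and Terracini (the statement is their Proposition 7.9), so I only sketch the mechanism. The starting point is the scaling identity $\mathcal{L}\big(x^\lambda(t)\big)=\lambda^{2/3}\mathcal{L}\big(x(\lambda t)\big)$, immediate from the homogeneity of $K$ and $U$ under $x\mapsto\lambda^{-2/3}x$, $t\mapsto\lambda t$. Combined with Lemmas \ref{bu1} and \ref{bu2}, this gives, along the subsequence $\lambda_n\downarrow 0$ for which the normalized configuration converges to the central configuration $\bar s$, that $x^{\lambda_n}\to\bar q$ in $C^1$ uniformly on compact subsets of $(0,\tau]$, where $\bar q=(\kappa t)^{2/3}\bar s$ is collision‑free on $(0,\tau]$. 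The whole difficulty is that this convergence degenerates as $t\to 0^+$, so $\mathcal{L}(x^{\lambda_n}+\varphi)$ and $\mathcal{L}(\bar q+\varphi)$ cannot be compared directly near $t=0$; the role of $\psi_n$ is precisely to repair the comparison there.

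First I would split $[0,\tau]=[0,\delta]\cup[\delta,\tau]$ and handle the outer interval with $\psi_n\equiv 0$ on $[\delta,\tau]$. For $n$ large the paths $x^{\lambda_n}$, $x^{\lambda_n}+\varphi$, $\bar q$, $\bar q+\varphi$ are all collision‑free there (for the $\varphi$ of interest, $\bar q+\varphi$ avoids the collision set), hence $\mathcal{L}$ is continuous along them and the uniform $C^1$‑convergence yields
\[
\int_\delta^\tau\big[\mathcal{L}(x^{\lambda_n}+\varphi)-\mathcal{L}(x^{\lambda_n})\big]\,dt\ \longrightarrow\ \int_\delta^\tau\big[\mathcal{L}(\bar q+\varphi)-\mathcal{L}(\bar q)\big]\,dt .
\]

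The inner interval is where $\psi_n$ lives, and I would use two ingredients. The first is a uniform smallness estimate: from the scaling identity together with Lemma \ref{bu1} (which gives $\mathcal{L}(x(\sigma))\sim\tfrac49\kappa^2(\kappa\sigma)^{-2/3}$ near the isolated collision) one obtains $\int_0^\delta\mathcal{L}(x^{\lambda_n})\,dt=O(\delta^{1/3})$ uniformly in $n$, while $\int_0^\delta\mathcal{L}(\bar q)\,dt$, $\int_0^\delta\mathcal{L}(\bar q+\varphi)\,dt$ and $\int_0^\delta\mathcal{L}(x^{\lambda_n}+\varphi+\psi_n)\,dt$ are all $O(\delta^{1/3})$ by direct estimation (adding the bounded $\varphi$ to a path that separates the colliding cluster keeps $U$ bounded near $0$ and perturbs $K$ by cross terms that vanish with $\delta$ by Cauchy--Schwarz, using $\dot{\bar q}\in L^2$ near $0$). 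The second is the construction of $\psi_n$: on $[0,\delta/2]$ set $\psi_n=\bar q-x^{\lambda_n}$, so that $x^{\lambda_n}+\varphi+\psi_n=\bar q+\varphi$ there, and on $[\delta/2,\delta]$ interpolate $\psi_n$ linearly in $t$ from $\bar q(\delta/2)-x^{\lambda_n}(\delta/2)$ down to $0$; this $\psi_n$ is centered, vanishes at $0$ and at $\tau$, satisfies $\|\psi_n\|_\infty\lesssim\delta^{2/3}$, and has interpolation cost $\int_{\delta/2}^\delta|\dot\psi_n|^2\,dt\lesssim\delta^{1/3}$.

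The genuinely delicate point — the hard part — is that for a \emph{fixed} $\delta$ neither $\|\psi_n\|_\infty$ nor $\|x^{\lambda_n}-\bar q\|_{C^0[0,\delta]}$ tends to $0$ (the blow‑up convergence being only local on $(0,\tau]$), so $\delta$ must be driven to $0$ together with $n$. I would therefore run a diagonal argument: choose $\delta_n\downarrow 0$ slowly enough that simultaneously (a) $\|x^{\lambda_n}-\bar q\|_{C^1[\delta_n,\tau]}\to 0$, (b) $\|\psi_n\|_\infty\lesssim\delta_n^{2/3}\to 0$, and (c) the contributions over $[0,\delta_n]$ of $\mathcal{L}(x^{\lambda_n})$, $\mathcal{L}(\bar q)$, $\mathcal{L}(\bar q+\varphi)$, $\mathcal{L}(x^{\lambda_n}+\varphi+\psi_n)$ and of the correction $\psi_n$ are all $O(\delta_n^{1/3})\to 0$. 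Balancing (a)--(c) against the rate of the $C^1$‑convergence on compacts is the crux; once it is done, combining the outer limit (with $\psi_n\equiv 0$ on $[\delta_n,\tau]$) with the vanishing of the inner terms gives
\[
\lim_{n\to\infty}\int_0^\tau\big[\mathcal{L}(x^{\lambda_n}+\varphi+\psi_n)-\mathcal{L}(x^{\lambda_n})\big]\,dt=\int_0^\tau\big[\mathcal{L}(\bar q+\varphi)-\mathcal{L}(\bar q)\big]\,dt ,
\]
which is the assertion. In practice I would simply cite Ferrario--Terracini for this transfer lemma and present the above only as the underlying mechanism.
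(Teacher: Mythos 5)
The paper offers no proof of this lemma at all---it is stated verbatim as Proposition 7.9 of Ferrario--Terracini and used as a black box---so your closing remark that in practice you would simply cite that reference is exactly what the paper does. Your sketch of the underlying mechanism (the scaling identity $\mathcal{L}(x^\lambda(t))=\lambda^{2/3}\mathcal{L}(x(\lambda t))$, locally uniform convergence of $x^{\lambda_n}$ to $\bar q$ away from $t=0$, the uniform $O(\delta^{1/3})$ bound on all action contributions near the collision coming from $\mathcal{L}\sim t^{-2/3}$, and a diagonal choice $\delta_n\downarrow 0$ with a correction $\psi_n$ gluing $x^{\lambda_n}$ to $\bar q$) is a faithful outline of the Ferrario--Terracini argument and raises no issue for its use in this paper.
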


Furthermore, our deformation is based on the solution of the two-body problem(\cite{twobody}\cite{Chenciner2003ICM}).
Suppose $\omega:\mathbb{R}\rightarrow \mathbb{R}^2$ is the ejection-collision solution of
 $$\ddot{\omega}=-\frac{\beta\omega}{|\omega|^3}, \quad \beta>0,$$
i.e. $\omega(\pm t)=(\frac{9}{2}\beta)^{1/3}t^{2/3}n^\pm ,\ t\geq 0$, where $n^\pm=\lim_{t\rightarrow 0^+}\frac{\omega(\pm t)-\omega(0)}{|\omega(\pm t)-\omega(0)|}.$
Let $\theta_d=\arccos(n^+\cdot n^-)$ and $\theta_i=2\pi-\theta_d$. If $\theta_d\in(0,\pi]$, given $\epsilon>0$, there are exactly two Keplerian arcs $\omega_d:[-\epsilon,\epsilon]\rightarrow \mathbb{R}^2$ and $\omega_i:[-\epsilon,\epsilon]\rightarrow \mathbb{R}^2$ which connect $\omega(-\tau)$ to $\omega(\tau)$ in the time interval $[-\tau,\tau]$ and satisfy
$$\left\{\begin{aligned}
 &\omega_d((-\epsilon,\epsilon))\subset\{a_1n^-+a_2n^+,a_i>0\},\\
 &\omega_i((-\epsilon,\epsilon))\subset span\{n^-,n^+\}\setminus\{a_1n^-+a_2n^+,a_i>0\}.
\end{aligned}\right.$$
The arcs $\omega_d$ and $\omega_i$ are called the direct and indirect Keplerian arc. When $\theta_d=0$, the indirect arc does not exist and $\omega_d((-\epsilon,\epsilon))\subset\{an^+,a>0\}$.
\begin{lem}(\cite{twobody}\cite{fusco2011platonic})\label{keplerarc}
The following inequalities hold:
\begin{enumerate}
\item $A(\omega_d)<A(\omega|_{[-\epsilon,\epsilon]})$, $\forall n^\pm$,
\item $A(\omega_i)<A(\omega|_{[-\epsilon,\epsilon]})$, $\forall n^\pm$ such that $n^+\cdot n^-<1$.
\end{enumerate}
where
$$A(\omega)=\int_{-\epsilon}^\epsilon(\frac{|\dot{\omega}|^2}{2}+\frac{\beta}{|\omega| })$$
\end{lem}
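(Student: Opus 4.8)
The plan is to read Lemma~\ref{keplerarc} as a statement about the planar Kepler problem $\ddot\omega=-\beta\omega/|\omega|^3$ alone, and to prove it by constrained minimization, the only real content being that the collision--ejection arc $\omega|_{[-\epsilon,\epsilon]}$ cannot even locally minimize the Kepler action $A$. Write $P_\pm:=\omega(\pm\epsilon)=(\tfrac{9}{2}\beta)^{1/3}\epsilon^{2/3}n^\pm\neq 0$; since $\dot\omega(t)$ grows only like $|t|^{-1/3}$, one has $\omega|_{[-\epsilon,\epsilon]}\in H^1$ and $A(\omega|_{[-\epsilon,\epsilon]})<\infty$. The scaling symmetry $\omega\mapsto\lambda^{-2/3}\omega(\lambda\,\cdot)$ multiplies $A$ by $\lambda^{1/3}$ and carries Keplerian arcs to Keplerian arcs, so we may normalize $\beta$ and take $\epsilon$ small, the regime in which $\omega_d,\omega_i$ are the \emph{unique} Keplerian arcs joining $P_-$ to $P_+$ in time $2\epsilon$ in their respective regions. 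Both $\omega_d$ and $\omega|_{[-\epsilon,\epsilon]}$ (the latter a generalized critical point through the collision at $0$) are admissible competitors for $A$ among $H^1$ paths with these fixed ends and this fixed time.

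For (1) I would minimize $A$ over
$$\mathcal{C}=\{u\in H^1([-\epsilon,\epsilon],\mathbb{R}^2): u(\pm\epsilon)=P_\pm,\ u(t)\in\Pi\ \ \forall t\in[-\epsilon,\epsilon]\},$$
where $\Pi=\{a_1n^-+a_2n^+:a_1,a_2\geq 0\}$ is the closed cone through the asymptotic directions. The set $\mathcal{C}$ is convex, hence weakly closed; with the endpoints fixed, $A$ is coercive and weakly lower semicontinuous on $\mathcal{C}$ (the $1/|u|$ term is nonnegative and, by the compact embedding $H^1\hookrightarrow C^0$ plus Fatou, w.l.s.c.), so a minimizer $\bar u\in\mathcal{C}$ exists, with $A(\bar u)\le A(\omega|_{[-\epsilon,\epsilon]})<\infty$. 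I claim $\bar u$ is a collision-free Keplerian arc in the open cone. It cannot touch $\partial\Pi$ at an interior time (for $\epsilon$ small; else unconstraining near that time strictly lowers $A$): on a boundary ray it would be a radial Kepler solution, which is real-analytic and hence stays on the line through $0$ spanned by that ray, so it either reaches $0$ — excluded by the key step below — or never leaves that line, contradicting $P_+$ lying off it when $n^+\neq n^-$. Hence $\bar u$ solves the free Kepler equation on $(-\epsilon,\epsilon)$ and, being a Keplerian arc in the open cone, $\bar u=\omega_d$ by uniqueness. The key step is that $\bar u$ does not pass through $0$, i.e. $\omega|_{[-\epsilon,\epsilon]}$ is not a local minimizer of $A$: there is a variation supported near the collision time that pushes the path off $0$ into $\Pi$ and strictly decreases the action; concretely one replaces $\omega|_{[-\delta,\delta]}$ by the \emph{direct} Keplerian sub-arc joining $\omega(-\delta)$ to $\omega(\delta)$ in time $2\delta$ and checks, via the parabolic asymptotics $\omega(t)\sim(\tfrac92\beta)^{1/3}|t|^{2/3}n^\pm$ as $t\to0^\pm$ and the explicit two-body action estimates, that the kinetic increase is strictly outweighed by the potential decrease — this is exactly the two-body comparison of \cite{twobody} used in \cite{fusco2011platonic}. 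Since $\omega|_{[-\epsilon,\epsilon]}\subset\partial\Pi$ is an admissible competitor in $\mathcal{C}$ which is not minimal, $A(\omega_d)=\min_{\mathcal{C}}A<A(\omega|_{[-\epsilon,\epsilon]})$.

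For (2), assume $n^+\cdot n^-<1$, i.e. $\theta_d\in(0,\pi]$, so the directions are distinct and $\overline{\mathrm{span}\{n^-,n^+\}\setminus\Pi}$ — a sector of angle $2\pi-\theta_d$, or one of the two half-planes bounded by $\mathbb{R}n^-$ when $\theta_d=\pi$ — has nonempty interior. I would repeat the minimization over $H^1$ paths that stay in this closed region and avoid $0$; the same three ingredients (existence, non-minimality of the collision arc by an off-$0$ corner-cutting variation on the indirect side, uniqueness of the Keplerian arc there) identify the minimizer as $\omega_i$. Since $\omega|_{[-\epsilon,\epsilon]}$ lies on the common boundary of $\Pi$ and this region, it is again an admissible competitor, so strictness gives $A(\omega_i)<A(\omega|_{[-\epsilon,\epsilon]})$. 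When instead $n^+=n^-$ the indirect arc does not exist and (2) is vacuous, while $\omega_d$ is the radial arc in $\{an^+:a>0\}$ avoiding $0$ and (1) follows from the same argument with $\Pi$ replaced by that ray.

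The one genuinely non-routine point, common to both parts, is the strict action decrease of the corner-cutting variation, i.e. that a Kepler collision--ejection arc with \emph{both} endpoints at positive distance from $0$ is never even a local minimizer. The difficulty is that the competing variation concentrates where the Lagrangian is singular, so one cannot merely pass to limits: one must quantify that over the short time $2\delta$ a genuine Keplerian sub-arc through $\omega(-\delta),\omega(\delta)$ spends strictly less action than the parabolic collision segment, with the gain surviving as $\delta\to 0$. After normalization this reduces to the explicit two-body inequality comparing a short Keplerian arc with the parabolic arc of the same endpoints and duration; I would either reproduce that computation via the standard Keplerian-arc action formula (eccentric/true anomaly and Lambert's theorem) or invoke \cite{twobody} directly, as \cite{fusco2011platonic} does. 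Everything else — existence of the constrained minimizers, unconstraining off the boundary rays, and the final comparison with $\omega|_{[-\epsilon,\epsilon]}$ — is routine.
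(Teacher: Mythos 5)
The paper does not actually prove this lemma: its ``proof'' is the single sentence that a detailed argument is Proposition 5.7 of \cite{fusco2011platonic}, where both inequalities are obtained by an explicit computation of the actions of the Keplerian and parabolic arcs (Lambert's theorem, the action expressed through the semi--major axis and the anomaly differences). Your constrained--minimization scaffolding is a genuinely different way to organize things, and much of it is sound or repairable: existence of a minimizer with fixed endpoints, weak closedness of the pointwise constraint (note that for item (2) the region $\overline{\mathrm{span}\{n^-,n^+\}\setminus\Pi}$ is \emph{not} convex, so you need the compact embedding $H^1\hookrightarrow C^0$ rather than convexity), and the identification of a collision--free, boundary--free constrained minimizer with $\omega_d$ or $\omega_i$ via the uniqueness built into the setup.

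The genuine gap is exactly at the point you flag as ``non-routine'': showing that $\omega|_{[-\epsilon,\epsilon]}$ is not the constrained minimizer. Your proposed competitor is the direct Keplerian sub-arc joining $\omega(-\delta)$ to $\omega(\delta)$ in time $2\delta$; but by the scaling $\omega\mapsto\lambda^{-2/3}\omega(\lambda\,\cdot)$ that you yourself invoke, the assertion that this sub-arc beats $\omega|_{[-\delta,\delta]}$ \emph{is} item (1) of the lemma, so the argument is circular, and for item (2) the deformation must in addition stay on the indirect side, where Marchal/Ferrario--Terracini averaging is unavailable (this is precisely why the paper needs this lemma in the first place). So the soft machinery does not reduce the problem to anything easier: the quantitative two-body comparison still has to be computed or imported, which is all the paper's citation does. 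A secondary, repairable defect: your reason why the minimizer cannot touch the boundary rays (``it would be a radial Kepler solution, hence analytic, hence stays on the line'') is wrong as stated, since on the contact set the Euler--Lagrange equation carries a Lagrange multiplier and the path need not solve the free Kepler equation there; you must argue separately that on a contact interval of positive length the multiplier vanishes (the Kepler force is tangent to the ray, so the normal equation forces it to), and that an isolated contact time produces a corner whose cutting stays in the locally convex constraint region and strictly lowers the action.
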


\begin{proof}
There is a detailed proof in Proposition 5.7 of \cite{fusco2011platonic}.
\end{proof}

\subsubsection{colliding in $\xi_3$-axis at time $t=0$}\label{partial1}
In this situation, $u_0(0)\in \xi_3\bigcap P_0^-$ and since it is a partial collision, we must have  $u_0(0)\cdot e_3<0$.  We can do perturbation in the whole plane $\xi_3=u_0(0)\cdot e_3$.\\

\begin{figure}

\begin{tikzpicture}[scale=2]
\draw [->](-1.5,0) -- (1.5,0) node[anchor=north] {$\xi_2$};
\draw [->](0,-1.5) -- (0,1.5)node[right] {$\xi_3$};
\draw [->](0,0) -- (0.15,0.4);\draw (0.15,0.4) -- (0.3,0.8);
\draw [->](-0.3,0.8) -- (-0.15,0.4);\draw (-0.15,0.4) -- (0,0);
\filldraw [gray] (0,0) circle (0.5pt)
(0.3,0.8) circle (0.5pt)
(-0.3,0.8) circle (0.5pt);
\draw [very thin,->](-0.3,0.8) parabola bend (0,-0.2) (0.3,0.8);
\draw (0,-0.2)  node[right]  {$\omega(t)$};
\draw (0,0.16) .. controls (0.03,0.18)  .. (0.06,0.16);
\draw (0.045,0.28) node[] {$\theta$};
\draw (0,0.16) .. controls (-0.03,0.18)  .. (-0.06,0.16);
\draw (-0.045,0.28) node[] {$\theta$};
\end{tikzpicture}
\caption{}
\end{figure}

The collision set at $t=0$ is $q=\{q_0,q_1,\dots,q_{l-1}\}$ and it is a collision of regular $l$-polygon, so the partial functional
$$\int_0^\tau\mathcal{L}_\mathbf{k}(q)dt=l\int_0^\tau\big(\frac{1}{2}|\dot{q}_0|^2+
\frac{\alpha}{|q_0|}\big)dt,$$
where $\alpha=\frac{1}{4}\sum_{j=1}^{l-1}\csc(j\pi/l)$.
Let $\bar{q}$ be the right blow-up of $q$, then $\bar{q}_0$ is the collision-ejection in the plane $\xi_3=q_0(0)\cdot e_3$.
By Lemma \ref{keplerarc}, there exists a Keplerian direct arc $\omega(t)$ satisfying
$\int_0^\tau\big(\frac{1}{2}|\dot{\omega}_0|^2+\frac{\alpha}{|\omega_0|}\big)dt<
\int_0^\tau\big(\frac{1}{2}|\dot{\bar{q}}_0|^2+\frac{\alpha}{|\bar{q}_0|}\big)dt$ and if we let $\tilde{q}_0(t)=(\omega(t),\bar{q}_0(t)\cdot e_3)$, we have $\tilde{q}_0(\tau)=\bar{q}_0(\tau)$. By the symmetric condition (\ref{cb}), we must have $\tilde{q}_0(0)\in P_0^-\setminus\xi_3$.
Consequently, the perturbation $\tilde{q}=\{\tilde{q}_0,R_1\tilde{q}_0,\dots,R_{l-1}\tilde{q}_0\}$  satisfies $\int_0^\tau\mathcal{L}_\mathbf{k}(\tilde{q})dt<\int_0^\tau\mathcal{L}_\mathbf{k}(\bar{q})dt$.
By Lemma \ref{bu1}-\ref{bu3},  there is no partial collisions  in $\xi_3$-axis.

\subsubsection{colliding in the $\xi_1\xi_2$-plane at time $t=0$}


The partial collision is the union of $l$ binary collisions. This situation is   more complicated since we have the topological constraint $u_0(0)\cdot e_3<0$ which implies that $\omega(t)$ should be the indirect arc when $0\leq \theta\leq\frac{\pi}{2}$(Figure 5).
But the indirect arc does not exist when $\theta=0$,
so similar to the proof of Section \ref{partial1}, there is no  partial collisions unless  $n^\pm=\lim_{t\rightarrow 0^\pm}\frac{u_0(t)-u_0(0)}{|u_0(t)-u_0(0)|}=e_3$ (see Figure 4).
\begin{remark}
Such collision described in the above is called the collision of type $(\rightrightarrows)$ in Definition 5.1 of \cite{fusco2011platonic}. Let $u_*\in\overline{\mathcal{K}}$ be a minimizer of the action $\mathcal{A}|_\mathcal{K}$ and assume that
$u_*$ has a partial collision at time $t_c$. Let $r$ be the axis on which the collision of the generating particle takes place and $n^+$,$n^-$ be the unit vectors associated to the collision. Then we say that the collision is of type $(\rightrightarrows)$ if\\
(1) $n^+=n^-$,\\
(2) The plane generated by $r$, $n = n^\pm$ is fixed by some reflection $\tilde{R}\in\tilde{G}$.
\end{remark}

\begin{lem}\label{123}(Corollary 5.1 in \cite{fusco2011platonic})\\
Let $\omega:(0,\bar{t})\rightarrow\mathbb{R}^3$(or $(-\bar{t},0)\rightarrow\mathbb{R}^3$) be a maximal ejection
(collision) solution to the equation
\begin{equation} \label{eq5.7}
\ddot{\omega}=a\frac{(R_\pi-I)\omega}{|(R_\pi-I)\omega|^3}+V_1(\omega),
\end{equation}
and let
$n=\lim_{t\rightarrow 0^+}\frac{\omega(t)-\omega(0)}{|\omega(t)-\omega(0)|}=\lim_{t\rightarrow 0^-}\frac{\omega(t)-\omega(0)}{|\omega(t)-\omega(0)|}$ be
the unit vector orthogonal to $r$. Assume that the plane $\pi_{r,n}$, generated by $r,n$, is fixed by some reflection $\tilde{R}$. Then
$$\omega(t)\in\pi_{r,n},\ \forall t\in (0,\bar{t}),(or\ \forall t\in (-\bar{t},0)).$$
In (\ref{eq5.7}) $R_\pi$ denotes the rotation of $\pi$ around the axis $r$, $V_1$ is a smooth function defined in an open set $\Omega\subset\mathbb{R}^3$   containing $r\setminus{0}$ and $a\in \mathbb{R}$. Moreover $V_1$ satisfies the symmetry condition $V_1(\tilde{R}\omega)=\tilde{R}V_1(\omega)$, where $\tilde{R}$ is a reflection such that $\tilde{R}r=r$.
\end{lem}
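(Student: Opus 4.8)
\emph{Proof proposal.} The plan is to use the reflection symmetry of equation \eqref{eq5.7} together with the uniqueness of collision--ejection solutions with a prescribed asymptotic direction. The point is that $\tilde R\omega$ solves the \emph{same} equation and has the \emph{same} collision point and asymptotic direction as $\omega$, so the two must coincide, which forces $\omega$ to stay in $\mathrm{Fix}(\tilde R)=\pi_{r,n}$.

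First I would check that $\tilde R\omega$ is again a solution of \eqref{eq5.7} on the same interval. Since $\tilde R$ is the orthogonal reflection whose mirror plane is $\pi_{r,n}$ and $r\subset\pi_{r,n}$ passes through the origin, $\tilde R$ is linear and fixes the line $r$ pointwise. Choosing coordinates with $r$ along $\xi_3$ and $n$ along $\xi_1$, both $\tilde R=\mathrm{diag}(1,-1,1)$ and the rotation $R_\pi=\mathrm{diag}(-1,-1,1)$ are diagonal, hence $\tilde R R_\pi=R_\pi\tilde R$; consequently $\tilde R(R_\pi-I)=(R_\pi-I)\tilde R$, and $|(R_\pi-I)\tilde R\omega|=|(R_\pi-I)\omega|$ because $\tilde R$ is an isometry. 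Applying $\tilde R$ to both sides of \eqref{eq5.7} and using the assumed symmetry $V_1(\tilde R\omega)=\tilde R V_1(\omega)$ yields $\frac{d^2}{dt^2}(\tilde R\omega)=a\frac{(R_\pi-I)(\tilde R\omega)}{|(R_\pi-I)(\tilde R\omega)|^3}+V_1(\tilde R\omega)$, i.e. $\tilde R\omega$ solves \eqref{eq5.7}.

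Next I would compare $\omega$ and $\tilde R\omega$ as ejection (resp. collision) solutions. Because $\omega(0)\in r$ and $r$ is fixed pointwise by $\tilde R$, we have $\tilde R\omega(0)=\omega(0)$, so both eject from (collide at) the same point of $r$. Moreover the asymptotic unit vector of $\tilde R\omega$ is $\tilde R n$, and since $n\in\pi_{r,n}=\mathrm{Fix}(\tilde R)$ we get $\tilde R n=n$; thus $\omega$ and $\tilde R\omega$ share the same asymptotic direction. Invoking the uniqueness of a maximal collision--ejection solution of a Kepler-type equation with prescribed collision point on $r$ and prescribed asymptotic direction (the blow-up/asymptotic analysis near the singularity, cf.\ \cite{ferrario2004existence} and the two-body analysis in \cite{twobody}\cite{fusco2011platonic}), we conclude $\tilde R\omega(t)=\omega(t)$ for all $t$ in the interval, i.e. $\omega(t)\in\mathrm{Fix}(\tilde R)=\pi_{r,n}$, which is the claim.

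The delicate step is the uniqueness: since $t=0$ is a singular point of \eqref{eq5.7}, Cauchy--Lipschitz uniqueness does not apply there, and one must rely on the refined description of collision--ejection orbits (McGehee-type blow-up), showing such an orbit is asymptotic to a rectilinear homothetic motion and is determined by its asymptotic direction. Once $\tilde R\omega\equiv\omega$ is known on an initial subinterval it propagates to the whole maximal interval by ordinary ODE uniqueness away from $t=0$; the remainder of the argument is routine symmetry bookkeeping.
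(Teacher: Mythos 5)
The paper does not actually prove this lemma: it is imported verbatim as Corollary~5.1 of \cite{fusco2011platonic}, and the only justification offered here is the heuristic remark that a force field tangent to $\pi_{r,n}$ keeps a trajectory with $\omega(0),\dot\omega(0)\in\pi_{r,n}$ inside that plane --- a remark that begs the question, since at an ejection/collision instant $\dot\omega(0)$ does not exist. Your proposal supplies the argument that actually underlies the cited corollary: check that $\tilde R$ commutes with $R_\pi-I$ and with $V_1$, so $\tilde R\omega$ solves \eqref{eq5.7}; observe that $\tilde R\omega$ has the same ejection point and the same limiting direction $\tilde Rn=n$; conclude $\tilde R\omega\equiv\omega$ by uniqueness through the singularity, and hence $\omega\subset\mathrm{Fix}(\tilde R)=\pi_{r,n}$. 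The symmetry bookkeeping is correct, and you rightly identify that the whole weight of the proof sits on uniqueness at $t=0$, where Cauchy--Lipschitz fails.

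The one step that needs tightening is precisely that uniqueness claim. An ejection solution of \eqref{eq5.7} is \emph{not} determined by its collision point on $r$ and its asymptotic direction alone: writing $\omega=\omega_\parallel+\omega_\perp$, the singular term is a planar Kepler attraction in $\omega_\perp$ only, so for fixed $\omega(0)$ and $n$ there remains at least a two-parameter family of ejection orbits (the transverse energy/Sundman constant and the limiting longitudinal velocity $\lim_{t\to0}\dot\omega_\parallel$, which exists because the longitudinal acceleration stays bounded). A priori $\tilde R\omega$ could be a different member of this family, so "same point and same direction" is not enough to force coincidence. The fix is cheap but should be stated: since $\tilde R$ is an isometry fixing $r$ pointwise, $\omega$ and $\tilde R\omega$ also share the longitudinal data and the Sundman/energy constant, and the Levi--Civita (or McGehee) regularized initial value problem at the binary collision is well posed in terms of exactly these data $(\omega(0),n,\lim\dot\omega_\parallel,\kappa)$; uniqueness for the regularized smooth ODE then gives $\tilde R\omega\equiv\omega$ near $t=0$, and ordinary uniqueness propagates this to the whole maximal interval. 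With that refinement your argument is complete and is, to my knowledge, the same route taken in \cite{fusco2011platonic}; it is in any case a genuine proof where the paper offers only a physical plausibility argument.
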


By the above arguments, the collision of type  $(\rightrightarrows)$  is the only collision at time $t=0$.
Since there is no collision in $(0,\frac{T}{2h})$,  we can apply  this lemma to our context and let $\bar{t}=\frac{T}{2h}$, then  $u_0(t)\in P_0,\ \forall t\in (0,\frac{T}{2h})$, which is a contradiction with $u_0(\frac{T}{2h})\in P_s$ in our assumption, thus  there is no partial collision at time $t=0$. And the case for $t=\frac{T}{2h}$ is similar, so we have finished the proof.
\begin{remark}
Lemma \ref{123} is actually right in the view of physics.  If there is some plane $\pi$ such that   the particle force $\frac{\partial V}{\partial \omega}\in \pi$ when  $\omega\in\pi$. Then $\omega(0),\dot{\omega}(0) \in\pi$ must imply that $\omega(t)\in\pi$ for all $t>0$.
\end{remark}

\newpage

\end{document}